\let\newfloat\newfloat@ltx
\def\HC{\mathcal{H}}
\def\LC{\mathcal{L}}
\def\ad{^{\dagger}}
\def\a{\alpha}
\newcommand{\fsnull}[1]{}
\newcommand{\old}[1]{}
\newcommand{\dya}[1]{\ket{#1}\!\bra{#1}}
\newcommand{\NC}{\mathcal{N}}
\newcommand{\OC}{\mathcal{O}}
\newcommand{\PC}{\mathcal{P}}
\newcommand{\SC}{\mathcal{S}}
\renewcommand{\geq}{\geqslant}
\renewcommand{\leq}{\leqslant}
\DeclareMathOperator*{\argmin}{arg\,min}
\renewcommand{\vec}[1]{\boldsymbol{#1}}  
\newcommand*{\id}{\openone}
\newcommand{\bs}{\textsf{BS}}
\renewcommand{\th}{\theta } 
\newcommand{\SWAP}{{\rm SWAP}}
\def\be{\begin{equation}}
\def\ee{\end{equation}}
\def\bs{\begin{split}}
\def\e{\end{split}}
\def\ba{\begin{eqnarray}}
\def\bea{\begin{eqnarray}}
\def\tea{\end{eqnarray}}
\def\ea{\end{eqnarray}}
\def\eea{\end{eqnarray}}
\def\SU{\text{SU}}
\def\a{\alpha}
\def\b{\beta}
\def\a{\alpha}
\def\gam{\gamma}
\def\b{\beta}
\def\SU{\text{SU}}
\def\a{\alpha}
\def\b{\beta}
\newtheorem{theorem}{Theorem}
\newtheorem{lemma}{Lemma}
\newtheorem{corollary}{Corollary}
\newtheorem{proposition}{Proposition}
\def\be{\begin{equation}}
\def\te{\end{equation}}
\def\ee{\end{equation}}
\def\ba{\begin{eqnarray}}
\def\bea{\begin{eqnarray}}
\def\tea{\end{eqnarray}}
\def\ea{\end{eqnarray}}
\def\eea{\end{eqnarray}}
\newcommand{\beq}{\begin{equation}}
\newcommand{\eeq}{\end{equation}}
\begin{document}

\title{Analyzing the free states of one quantum resource theory as resource states of another}

\author{Andrew E. Deneris}
\affiliation{Information Sciences, Los Alamos National Laboratory, Los Alamos, NM 87545, USA}
\affiliation{Quantum Science Center, Oak Ridge, TN 37931, USA}

\author{Paolo Braccia}
\affiliation{Theoretical Division, Los Alamos National Laboratory, Los Alamos, NM 87545, USA}

\author{Pablo Bermejo}
\affiliation{Information Sciences, Los Alamos National Laboratory, Los Alamos, NM 87545, USA}
\affiliation{Donostia International Physics Center, Paseo Manuel de Lardizabal 4, E-20018 San Sebasti\'an, Spain}
\affiliation{Department of Applied Physics, University of the Basque
Country (UPV/EHU), 20018 San Sebastián, Spain}

\author{N. L. Diaz}
\affiliation{Information Sciences, Los Alamos National Laboratory, Los Alamos, NM 87545, USA}
\affiliation{Center for Non-Linear Studies, Los Alamos National Laboratory, 87545 NM, USA}

\author{Antonio A. Mele}
\affiliation{Dahlem Center for Complex Quantum Systems, Freie Universität Berlin, 14195 Berlin, Germany}
\affiliation{Theoretical Division, Los Alamos National Laboratory, Los Alamos, NM 87545, USA}

\author{M. Cerezo}
\thanks{cerezo@lanl.gov}
\affiliation{Information Sciences, Los Alamos National Laboratory, Los Alamos, NM 87545, USA}
\affiliation{Quantum Science Center, Oak Ridge, TN 37931, USA}

\begin{abstract}
In the context of quantum resource theories (QRTs), free states are defined as those which can be obtained at no cost under a certain restricted set of conditions. However, when taking a free state from one QRT and evaluating it through the optics of another QRT, it might well turn out that the state is now extremely resourceful. Such realization has recently prompted numerous works characterizing states across several QRTs. In this work we contribute to this body of knowledge by analyzing the resourcefulness in free states for--and across witnesses of--the QRTs of multipartite entanglement, fermionic non-Gaussianity, imaginarity, realness, spin coherence, Clifford non-stabilizerness, $S_n$-equivariance and non-uniform entanglement. We provide rigorous theoretical results as well as present numerical studies that showcase the rich and complex behavior that arises in this type of cross-examination. 
\end{abstract}

\maketitle
\section{Introduction}

Quantum resource theories (QRTs)~\cite{chitambar2019quantum,brandao2015reversible} provide a theoretical framework to study setups where only a subset of quantum evolutions are allowed (or ``free''), and only a subset of states can be prepared (i.e., are considered to also be ``free''). Crucially, QRTs can be intimately tied to the task of simulating certain quantum systems under restricted types of operations, providing operational meaning to this mathematical paradigm. For instance, one can see the connection between QRTs and classical simulation methods through the prototypical examples of multipartite  entanglement~\cite{horodecki2013quantumness,chitambar2019quantum,bennett1996concentrating,vedral1997quantifying,bennett1999quantum}, fermionic non-Gaussianity~\cite{lami2018guassian, takagi2018convex, zhuang2018resource, chitambar2019quantum,denzler2024learning,bittel2025pac,wan2022matchgate,mele2024efficient,oszmaniec2022fermion}, and  Clifford non-stabilizerness~\cite{veitch2014resource,howard2017application,chitambar2019quantum, leone2022stabilizer}, as their associated QRTs lead to irriguous grounding to the problem  of simulating low-entanglement states, near-Gaussian states, or low-magic states via tensor networks~\cite{orus2014practical}, Wick's theorem~\cite{valiant2001quantum, knill2001fermionic,terhal2002classical,bravyi2004lagrangian,divincenzo2005fermionic,somma2006efficient,jozsa2008matchgates}, or the Gottesman-Knill theorem~\cite{gottesman1998heisenbergrepresentation, aaronson2004improved, nest2008classical}; respectively. 

While the specific in's and out's of each QRT tend to be studied separately, there has been a tremendous interest in cross-examining the free operations and states of a given QRT through the optics of another. For instance,  such analysis could seek to identify and characterize states, or families thereof, which possess low amounts of resource in more than one QRT, and which could lead to hybrid simulation methods that capture evolutions beyond those simulable by techniques based on a single type of resource~\cite{lami2023nonstabilizerness,haug2023quantifying,masot2024stabilizer,mello2024hybrid,mello2025clifford,goh2023lie,gu2024doped,qian2024augmenting,nakhl2025stabilizer,qian2025clifford,dowling2024magic,paviglianiti2024estimating,andreadakis2025exact,qian2024augmenting,fux2024disentangling}. The previous has led to a veritable Cambrian explosion of mixing and matching different measures of resourcefulness in all kinds of physical systems, and sets of QRT's free states~\cite{tarabunga2024critical,sarkis2025molecules,deside2025detecting,deside2025detecting,moca2025non,tarabunga2025efficient,jasser2025stabilizer,cianciulli2024bipartite,gigena2021many,collura2024quantum,gu2025magic}. Here it is worth noting that up to this point most works tend to focus on two QRTs at a time, studying for instance the non-stabilizerness (or magic) in Gaussian states, or the fermionic Gaussianity in low-entangled ground states of spin-systems representable by matrix product state techniques.  These analyses, while extremely important, have the potential downside of leading to a patch-worked understanding of how different types of resources can coexist in the same state.

In this work, we seek to contribute to the body of knowledge of analyzing one QRT's free states as resource states of another (see Fig.~\ref{fig:schematic}) by simultaneously focusing on eight QRTs. Our goal it to provide a more holistic and comprehensive view of the characterization of a state's resourcefulness. We begin our work in Section~\ref{sec:framework} by first presenting a general framework for QRTs, as well as how to quantify their resourcefulness via group Fourier harmonic analysis-based purity-type witnesses (for additional details, we refer the reader to our companion manuscript~\cite{bermejo2025characterizing}). Next, in Section~\ref{sec:QRTs} we  give a comprehensive introduction to each one of the considered QRTs, which we hope could be used as a starting point for beginners wishing to learn more about each specific framework. In Section~\ref{sec:theory} we present theoretical results where we analytically compute the expected value of the different resource witnesses for special types of families. Finally, we perform numerical studies in  Section~\ref{sec:numerics}  for datasets of free states on $n=3,4,\ldots,8$ qubits where we compute all resource witness for each state in the dataset. This allows us to concurrently cross-examine the different types of resource that a single state can possess, as well as analyze their correlations. Our work finishes with conclusions in Section~\ref{sec:conclusions} where we highlight that our observations are aimed at providing key insights and pointing the community towards future research directions which could lead to more rigorous and theoretical proofs.

\section{Framework}\label{sec:framework}

Let $\HC=\mathbb{C}^d$ be a quantum Hilbert space and $\text{U}(d)$ the unitary group of degree $d$. In what follows, we will define a QRT in terms of two basic ingredients: free operations and free states~\cite{chitambar2019quantum}. The set of free operations $\mathbb{G}\subseteq\text{U}(d)$ is taken to be a unitary representation of a group\footnote{Note that, in general, one can also define the free operations to also contain resource non-increasing channels. However, we will here focus on the case of unitary, and therefore resource-preserving, free operations.}.  Then, the set of pure free states are denoted as $\SC\subseteq \HC$ and can be obtained as the orbit of some reference free state $\ket{\psi^{\rm ref}}$ under $\mathbb{G}$, i.e.,
\begin{equation}\label{eq:free-states-orbit}
    \SC=\{U\ket{\psi^{\rm ref}}\,|\,U\in \mathbb{G}\}\,.
\end{equation}
Equation~\ref{eq:free-states-orbit} shows the fact that $\SC$ can be implicitly defined via some $\ket{\psi^{\rm ref}}$. While there is some freedom in how the reference state is chosen, we will see that when $\mathbb{G}$ is a Lie group there is a simple rule of thumb one can follow. Namely, denoting as $\mathfrak{g}$ the Lie algebra associated to $\mathbb{G}$, then one chooses $\ket{\psi^{\rm ref}}$ as the highest-weight state of $\mathfrak{g}$ (leading to the so-called generalized coherent states~\cite{barnum2003generalizations,barnum2004subsystem,perelomov1977generalized,gilmore1974properties,zhang1990coherent}). While the previous choice will be found to be satisfied for most QRTs considered in this work, we will also present a QRT where $\ket{\psi^{\rm ref}}$ is not the highest-weight state, thus illustrating the freedom that exists in the definition of free states.

Next, given a QRT, one is usually interested in studying and characterizing the resourcefulness of a given (not necessarily free) state. The previous can be accomplished via resource witnesses, i.e., functions $\Lambda:\HC\rightarrow[0,1]$ that are maximized for free states, and whose value decreases with the resourcefulness of the state. Such quantities must satisfy some important properties such as being monotonic (in the sense that smaller values indicate more resource), as well as being $\mathbb{G}$-invariant (i.e, to remain unchanged under free operations). The latter implies that given any $\ket{\psi}\in\HC$ and $U\in\mathbb{G}$, one has
\begin{equation}\label{eq:group-invariant}
\Lambda(U\ket{\psi})=\Lambda(\ket{\psi})\,.
\end{equation}
Crucially, we remark that one can also alternatively define the set of free states as $
\argmin_{\ket{\psi}\in\HC}\Lambda(\ket{\psi})=\SC$, which further shows that different group-invariants (i.e., functions satisfying Eq.~\eqref{eq:group-invariant}) can define distinct QRTs for the same set of free operations. Indeed, the freedom in $\ket{\psi^{\rm ref}}$ can be translated to a freedom in how $\Lambda$ is chosen. 

At this point, we note that several strategies can be employed to define witnesses in a QRT. Indeed, already when studying entanglement there are continuous families of entanglement measures, leading to veritable infinite families of quantifiers~\cite{horodecki2009quantum,renyi1961measures}. Given that our goal is to be able to compare resources across different QRTs, we need to choose witnesses that exist across all theories and that can be somewhat studied in equal footing. Hence. we will consider those that arise from norms, or ``purities'', of group-Fourier decompositions in the irreducible representations induced by $\mathbb{G}$~\cite{barnum2004subsystem,barnum2003generalizations}. Indeed, such approach follows the spirit of signal process analysis, where a function is studied by its fundamental harmonic components; with the added benefit that for quantum states, there exists an purity that is a resource witness.  We refer the reader to ~\cite{bermejo2025characterizing,mele2025clifford,diaz2025unified} for further details on this group-Fourier decomposition program. Hence, all witnesses considered will take the form
\begin{equation}\label{eq:group-invariant-def}
\Lambda(\ket{\psi})=C\sum_{P\in\PC}\bra{\psi}P\ket{\psi}^{2k}\,,
\end{equation}
for some set of Hermitian orthogonal operators $\PC$, $k\in\{1,2\}$,  and where $C$ is a normalization coefficient chosen such that $\Lambda(\ket{\psi})=1$ if $\ket{\psi}\in\SC$.

\begin{figure}[t]
    \centering
    \includegraphics[width=.85\linewidth]{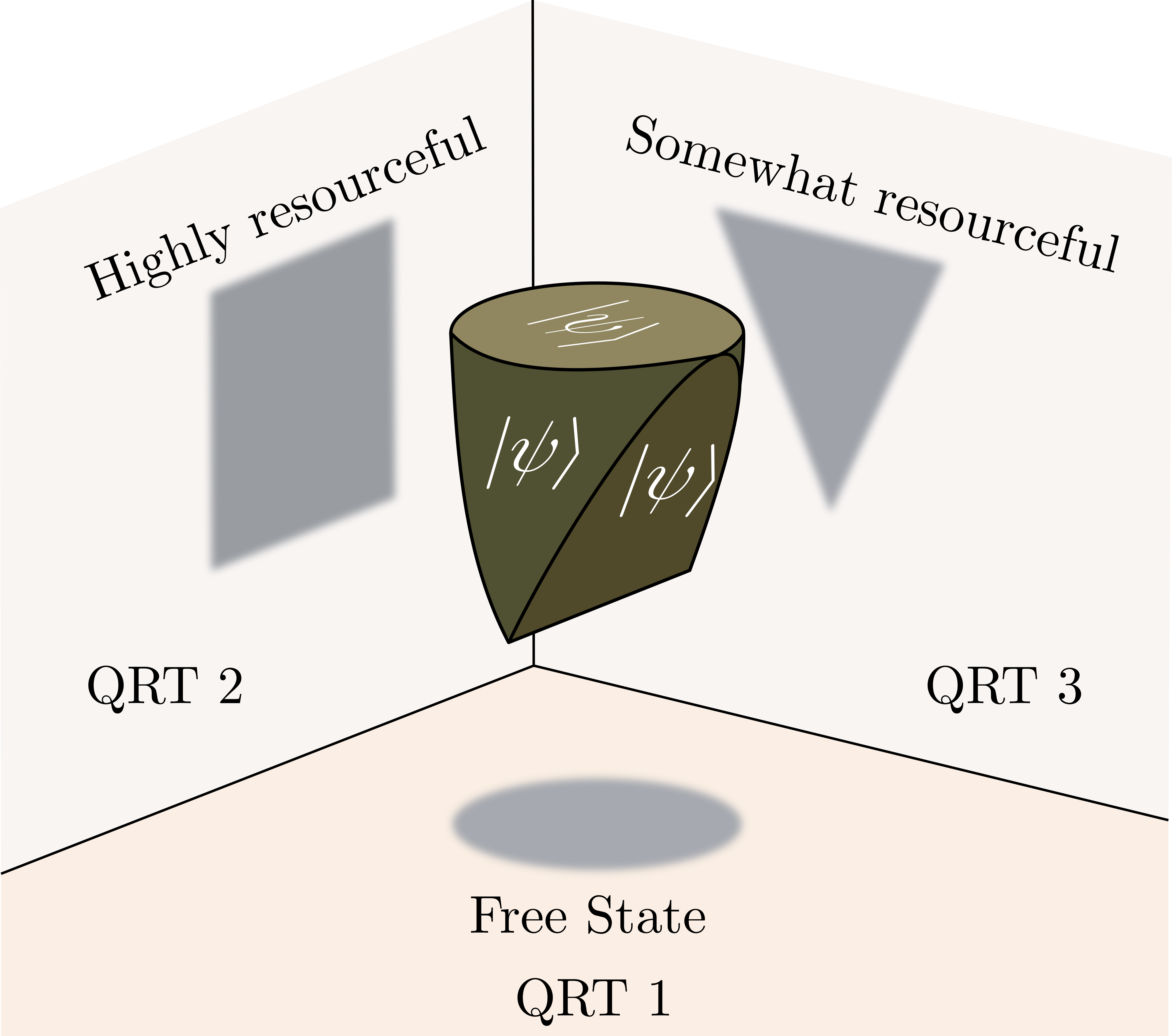}
    \caption{\textbf{Graphic representation of our results.} Our work showcases the fact that the resourcefulness, or the ``quantumness'' of a given quantum state $\ket{\psi}$ is relative to the QRT through which it is analyzed. For instance, a resource-free state of QRT 1, could have widely varying resourcefulness when examined through QRTs 2 and 3.  }
    \label{fig:schematic}
\end{figure}

\section{Considered QRTs}\label{sec:QRTs}

In this section we will present several QRTs to be analyzed throughout the rest of this work. As previously mentioned, our goal is to study how resourceful the free states of one theory are when examined through the optics of another QRT. However, before proceeding to a case-by-case description of QRTs, we find it important to make several remarks that will motivate our studies. First, given two different QRTs, defined from $\{\mathbb{G}_i,\SC_i,\Lambda_i\}$ and $\{\mathbb{G}_j,\SC_j,\Lambda_j\}$, one will have, in general, that the free states of one QRT are not free states of the other. That is, if we take some generic state $\ket{\psi}$  from $ \SC_j$, then
\begin{equation}
    \Lambda_i(\ket{\psi})\leq1\,.
\end{equation}
Specifically, we can expect a strict inequality in the previous equation. Second, and more interestingly, since the free operations of a QRT need not be free in the other one, we find that given some state $\ket{\psi}\in \SC_j$ and unitary $U\in \mathbb{G}_j$, then
\begin{equation}
    \Lambda_i(\ket{\psi})\neq \Lambda_i(U\ket{\psi})\,.
\end{equation}
As such, the set  $\SC_j$ of the $j$-th QRT could simultaneously contain near-free states for the $i$-th QRT, but also highly resourceful ones.

\subsection{Entanglement}

First, we consider the most ubiquitous QRT: multipartite  entanglement~\cite{horodecki2013quantumness,chitambar2019quantum,bennett1996concentrating,vedral1997quantifying,bennett1999quantum}. The theory of entanglement~\cite{horodecki2009quantum} owes its inception to the field of quantum information processing~\cite{wilde2013quantum,plenio1998entanglement}. Indeed, it has been shown that entanglement is a fundamental resource which enables beyond-classical protocols~\cite{horodecki2009quantum,gigena2020one} for quantum communications~\cite{bennett1993teleporting,barrett2002nonsequential,cleve1997substituting,gigena2017bipartite}, and quantum computing~\cite{ekert1998quantum,nielsen2000quantum,datta2005entanglement}. 

In the QRT of multipartite $n$-qubit entanglement the Hilbert space takes the form $\HC=(\mathbb{C}^2)^{\otimes n}$. The set of unitary free operations $\mathbb{G}_{{\rm ent}}$ are given by local unitaries, i.e., by the standard representation of the group $\SU(2)\times\SU(2)\times\cdots\times \SU(2)$. That is, any  $U\in\mathbb{G}_{{\rm ent}}$ can be expressed as $U=U_1\otimes U_2\otimes\cdots\otimes U_n$ for $U_1,U_2,\ldots,U_n\in\SU(2)$. Second, the set of free unentangled states $\SC_{{\rm ent}}$ corresponds to tensor product states obtained by elements of $\mathbb{G}_{{\rm ent}}$ applied to the reference highest-weight state $\ket{\psi^{\rm ref}_{{\rm ent}}}=\ket{0}^{\otimes n}$. Finally,  while several quantifiers of  multipartite entanglement exist~\cite{brennen2003observable,meyer2002global,walter2013entanglement,wong2001potential,carvalho2004decoherence,foulds2020controlled,beckey2021computable,schatzki2022hierarchy}, we will here focus on the quantity
\begin{equation}
    \Lambda_{{\rm ent}}(\ket{\psi})=\frac{1}{n}\sum_{P\in\PC_{{\rm ent}} }\bra{\psi}P\ket{\psi}^2\,,
\end{equation}
where we have defined the set of local Pauli operators $\PC_{{\rm ent}}=\cup_{i=1}^n\{X_i,Y_i,Z_i\}$. One can readily see that $\Lambda_{{\rm ent}}$ is a witness of multipartite entanglement as it is proportional to the purity of the local density matrices~\cite{brennen2003observable,meyer2002global,beckey2021computable}, and is strictly smaller than one as long as any subset of qubits is entangled. That is, defining $\rho_i=\Tr_{\overline{i}}[\dya{\psi}]$ as the marginal on the $i$-th qubit, then  $\Lambda_{{\rm ent}}(\ket{\psi})= \frac{2}{n}(\sum_{i = 1}^n \Tr[\rho_i^2]-\frac{1}{2})$. 

\subsection{Fermionic non-Gaussianity}

Next, we consider the  QRT of fermionic non-Gaussianity~\cite{lami2018guassian, takagi2018convex, zhuang2018resource, chitambar2019quantum,denzler2024learning,bittel2025pac,wan2022matchgate,mele2024efficient,oszmaniec2022fermion}. We recall that computation based on evolving a fermionic Gaussian state via free-fermionic --or matchgate-- unitaries constitutes a  restricted model of quantum computing~\cite{valiant2001quantum,knill2001fermionic,terhal2002classical,divincenzo2005fermionic}. Crucially, while matchgate circuits can be efficiently classically simulable~\cite{valiant2001quantum, knill2001fermionic,terhal2002classical,bravyi2004lagrangian,divincenzo2005fermionic,somma2006efficient,jozsa2008matchgates,brod2011extending,brod2014computational,brod2016efficient,guaita2024representation,oszmaniec2022fermion,helsen2022matchgate,wan2022matchgate,goh2023lie,diaz2023showcasing, mele2024efficient}, the addition of non-Gaussian states, or equivalently of non-matchgate unitaries, can promote this computational paradigm to  universal quantum computation~\cite{lloyd1999quantum, knill2001scheme, bartlett2002universal, menicucci2006universal, jozsa2008matchgates, 
ohliger2010limitations,brod2011extending,brod2014computational, oszmaniec2017universal, zhuang2018resource}. Moreover, the non-Gaussianity resource has been shown to have operational meaning in a variety of quantum information tasks within the related bosonic Gaussian setting, including entanglement distillation~\cite{eisert2002distilling, giedke2002characterization, fiurasek2002guassian, zhang2010distillation, lami2018guassian}, quantum error correction~\cite{gottesman2001encoding, niset2009no}, optimal metrology~\cite{adesso2009optimal}, Bell inequality violation~\cite{banaszek1998nonlocality, banaszek1999testing, filip2002violation, chen2002maximal, garcia2004proposal, nha2004proposed, invernizzi2005effect, garcia2005loophole, ferraro2005nonlocality,  thearle2018violation} and optimal cloning~\cite{cerf2005non}. 

In the QRT of fermionic non-Gaussianity, we take $\HC=(\mathbb{C}^2)^{\otimes n}$. Then, defining the set of $2n$ Majoranas $\{\gamma_i\}_{i=1}^{2n}$ as
\begin{equation} \label{eq:maj}
\begin{split}
    \gamma_1&=X\id\dots \id,\; \gamma_3= ZX\id\dots \id, \;\dots,\; \gamma_{2n-1}=Z\dots Z X\,, \\
        \gam_2&=Y\id\dots\id,\; \gamma_4= ZY\id\dots \id, \; \dots,\;\; \gamma_{2n}\;\;\;=Z\dots Z Y\,
\end{split}
\end{equation}
the free operators $U\in\mathbb{G}_{{\rm ferm}}$ are defined as those which take the form $U=e^{\sum_{i<j}h_{ij}\gamma_i \gamma_j}$ for $h_{ij}\in\mathbb{R}$. These unitaries constitute the spinor representation of $\text{SO}(2n)$. From here, the set $\SC_{{\rm ferm}}$ of fermionic Gaussian states is obtained as the orbit of the highest-weight (vacuum) state $\ket{\psi^{{\rm ref}}_{{\rm ferm}}} = \ket{0}^{\otimes n}$ under $\mathbb{G}_{{\rm ferm}}$. Finally, we quantify the non-Gaussianity via
\begin{equation}
    \Lambda_{{\rm ferm}}(\ket{\psi})=\frac{1}{n}\sum_{P\in\PC_{{\rm ferm}} }\bra{\psi}P\ket{\psi}^2\,,
\end{equation}
where $\PC_{{\rm ferm}}=\{i\gamma_j\gamma_k\}_{1\leq j<k\leq 2n}$. As shown in~\cite{diaz2023showcasing},  $ \Lambda_{{\rm ferm}}$ is a proper measure of non-Gaussianity and genuine fermionic correlations~\cite{gigena2015entanglement, gigena2020one, gigena2021many}, as it corresponds to the $2$-norm of the state's covariance matrix.

\subsection{Imaginarity and realness}

In this section we present two QRTs over $\HC=(\mathbb{C}^{2})^{\otimes n}$ which share the same free operations, but differ on the reference state that defines the set of free states. In both cases, the group of free operations $\mathbb{G}_{\text{O}}$ corresponds to the orthogonal group $\text{O}(2^n)$, i.e.,  real-valued unitaries satisfying $U^T U = UU^T = \id_{2^n}$ where $U^T$ denotes the transpose of $U$ and $\id_{2^n}$ the $2^n\times 2^n$ identity matrix.

\subsubsection{Imaginarity}
Since the free operators $\mathbb{G}_{\text{O}}$ are  real-valued unitaries, this group plays a central role in the QRT of imaginarity. Indeed, the study of quantum mechanics and quantum information relies heavily on the use of complex numbers. As such, imaginarity naturally generates a QRT~\cite{hickey2018quantifying,wu2021resource,wu2021operational,wu2024resource, du2025quantifying} due to the fact that under certain circumstances, non-real quantum states and operations become expensive to implement experimentally \cite{wu2021resource, wu2021operational}, and that various tasks in quantum information utilize imaginarity as a necessary resource \cite{hickey2018quantifying, wu2021resource, wu2021operational, wu2024resource, zhu2021hiding, miyazaki2022imaginarity, sajjan2023imaginary, haug2024pseudorandom, jones2023distinguishability, budiyono2023operational, budiyono2023quantifying, wei2024nonlocal}. At a more fundamental level, several works support the idea that quantum mechanics cannot be fully described without imaginary components \cite{renou2021quantum, li2022testing, chen2022ruling, wu2022experimental, bednorz2022optimal, yao2024proposals}, further motivating the study of the imaginarity QRT.

As such, in a QRT where the imaginary component of a state is considered a resource, one must define the set of free states from a reference vector whose entries are all real-valued. Here, one defines the free states $\SC_{{\rm imag}}$ by simply choosing $\ket{\psi^{{\rm ref}}_{{\rm imag}}}=\ket{0}^{\otimes n}$. Then, we can quantify the resourcefulness of the state with 
\begin{equation}
    \Lambda_{{\rm imag}}(\ket{\psi}) = \frac{1}{2^{n}-1} \sum_{P \in \PC_{{\rm sym}}} \bra{\psi} P\ket{\psi}^2\,,
\end{equation}
where $\PC_{{\rm sym}}=\{P\in\{\id,X,Y,Z\}^{\otimes n}/{\id_{2^n}}\,|\, P=P^T\}$ denotes the set of symmetric Pauli operators composed of an even number of $Y$'s. Here we note that, unlike some of our other resource witnesses, the minimum value here is not $0$, but rather $\frac{2^n - 2}{2(2^n-1)}$. This is due to the fact that no pure quantum state's density matrix can be purely imaginary.

\subsubsection{Realness}

At this point we note that the reference state in the QRT of imaginarity was not the highest-weight state $\ket{+_y}^{\otimes n}$ associated to  $\mathbb{G}_{\text{O}}$, where  $\ket{+_y}=\frac{1}{\sqrt{2}}(\ket{0}+i\ket{1})$ is the eigenstate of the Pauli matrix $Y$. This is due to the fact that $\ket{+_y}^{\otimes n}$ has the maximum amount of imaginarity, and hence would be a poor reference for the QRT of imaginarity. As such, we instead also propose a QRT where the we obtain the free states $\SC_{{\rm real}}$ by applying $\mathbb{G}_{\text{O}}$ to $\ket{\psi^{{\rm ref}}_{{\rm real}}}=\ket{+_y}^{\otimes n}$. Here, the realness of a state is a resource, and we can measure it through the  witness~\cite{bermejo2025characterizing}
\begin{equation}
    \Lambda_{{\rm real}}(\ket{\psi}) = \frac{1}{2^{n - 1}} \sum_{P \in \PC_{{\rm asym}}} \bra{\psi} P\ket{\psi}^2\,,
\end{equation}
where $\PC_{{\rm asym}}=\{P\in\{\id,X,Y,Z\}^{\otimes n}\,|\, P=-P^T\}$ denotes the set of anti-symmetric Pauli operators composed of an odd number of $Y$'s. Clearly, the QRTs of imaginarity and realness are  ``mutually exclusive'', and we can showcase this realization from the fact that their associated witnesses satisfy the property (see the Appendix)
\begin{equation}\label{eq:complement-real-imag}
    \Lambda_{{\rm imag}}(\rho) - 
\frac{\Lambda_{{\rm real}}(\rho)}{2^{1-n} - 2} = 1\,.
\end{equation}

\subsection{Spin coherence}

When solving the quantum harmonic oscillator one recognizes coherent states as the ``most classical'' states of the system, as their position and momentum minimize the Heisenberg uncertainty principle~\cite{cohen2019quantum}. Such an idea has been expanded to other scenarios, leading to the notion of generalized coherent states~\cite{robert2021coherent,perelomov1977generalized,zhang1990coherent}. Here, it has been shown that for systems whose dynamics are described by Lie groups, then the highest weight states of the associated Lie algebra minimize generalized uncertainty relations~\cite{delbourgo1977maximum,barnum2004subsystem} and their orbits have underlying K\"ahler structures~\cite{kostant1982symplectic}, further cementing them as being the most classical states of the system. A prototypical example of generalized coherent states are the so-called spin coherent states~\cite{radcliffe1971some} arising in quantum systems with total angular momentum $s$, whose dynamics are governed by the irreducible representation of $\SU(2)$ acting over 
$\HC=\mathbb{C}^d=\{\ket{s,m}\}_{m=-s}^s$ with $d=2s+1$. Notably, such states can be experimentally prepared (e.g., in nuclear magnetic resonance systems)~\cite{nielsen2000quantum,arecchi1972atomic}, and can be used as a basis for macroscopic quantum information protocols~\cite{byrnes2015macroscopic,pyrkov2014quantum,nielsen2000quantum}. 

In the QRT of spin coherence, the set of free operations $\mathbb{G}_{{\rm coh}}$ are given by the irreducible spin-$s$  representation of $\SU(2)$. Specifically, the free operations are obtained from the exponentiation of the  $d\times d$ generators $S_x$, $S_y$ and $S_z$ whose action is given by
\small
\begin{align}
    \bra{s,m'} S_x \ket{s,m} &= \frac{1}{2}(\delta_{m', m+1} + \delta_{m'+1, m}) \sqrt{s(s+1)-m'm} \nonumber \\
    \bra{s,m'} S_y \ket{s,m} &= \frac{1}{2i}(\delta_{m', m+1} - \delta_{m'+1, m}) \sqrt{s(s+1)-m'm} \nonumber \\
    \bra{s,m'} S_z \ket{s,m} &= \delta_{m', m}m .\nonumber 
\end{align}
\normalsize
Then, free states $\SC_{{\rm coh}}$ are obtained as the orbit of the reference highest-weight spin coherent state $\ket{\psi^{{\rm ref}}_{{\rm coh}}}=\ket{s, s}$. Finally, we measure the resourcefulness using the witness introduced in~\cite{barnum2004subsystem}
\begin{equation}
    \Lambda_{{\rm coh}}(\ket{\psi}) = \frac{1}{s^2} \sum_{P \in \PC_{{\rm coh}}} \bra{\psi} P\ket{\psi}^2\,,
\end{equation}
with $\PC_{{\rm coh}}=\{S_x,S_y,S_z\}$. Such quantity measures whether the state maximizes the angular momentum in the spin representation.

\subsection{Clifford non-stabilizerness}

The QRT of non-stabilizerness revolves around the fact that the evolution of stabilizer states through quantum circuits composed of Clifford gates can be efficiently  simulated classically via the Gottesman-Knill theorem~\cite{gottesman1998heisenbergrepresentation, aaronson2004improved, nest2008classical}.  Here, non-stabilizerness can be understood as the resource which promotes this restricted form of computation to universal quantum computing~\cite{veitch2014resource,howard2017application,chitambar2019quantum, leone2022stabilizer}. Moreover, since many error correction codes have Clifford gates as their native operations~\cite{gottesman1997stabilizer, gottesman1998theory,bravyi2005universal,gottesman2009introduction,howard2017application}, then the QRT of non-stabilizerness is crucial to understanding the requirements for fault-tolerance~\cite{bravyi2005universal, preskill1998fault, bravyi2012magic, reichardt2005quantum} through tasks like magic state distillation and
non-Clifford gate compilation~\cite{campbell2011catalysis,howard2017application,beverland2020lower,seddon2021quantifying}.

We define the free operation $\mathbb{G}_{{\rm stab}}$ of the QRT of non-stabilizerness over $\HC=(\mathbb{C}^2)^{\otimes n}$ as the unitaries from the Clifford group $C_n$. For convenience, we recall that given the Pauli group $\mathbb{P}=\{ { \pm 1, \pm i}\}\times \{\id,X,Y,Z\}^{\otimes n}$, then the Clifford unitaries map elements of the Pauli group to elements of the Pauli group
\begin{equation}
    C_n = \{ U \in \text{U}(2^n)\, |\, U P U^\dagger \in \mathbb{P}, \;\forall\, P \in \mathbb{P}\}\,.
\end{equation}
Now, the free operations $\mathbb{G}_{{\rm stab}}$ form a discrete (rather than continuous) group up to global phases. Then, the free states $\SC_{{\rm stab}}$, are given by stabilizer states, i.e., states such that there exists an abelian subgroup $H \subset \mathbb{P}$ of the Pauli group of size $2^n$ such that $U\ket{\psi} = \ket{\psi}$ for every $U \in H$. We can define such a set from the reference state $\ket{\psi^{{\rm ref}}_{{\rm stab}}}=\ket{0}^{\otimes n}$. While there exist several measures of non-stabilizerness~\cite{howard2017application,bravyi2019simulation,leone2022stabilizer,leone2024stabilizer}, we will here focus on the stabilizer R\'enyi entropy of order two
\begin{equation}
    \Lambda_{{\rm stab}}(\psi) = \frac{1}{2^n-1} \sum_{P \in \mathcal{P}_{{\rm stab}}} \bra{\psi}P \ket{\psi}^{4}\,,
\end{equation}
where $\mathcal{P}_{{\rm stab}}=\{\id,X,Y,Z\}^{\otimes n}/\{\id_{2^n}\}$. Note that unlike previously described witnesses, $\Lambda_{{\rm stab}}(\psi)$ is  expressed as a summation of expectation values to the fourth power. This follows from the fact that the Clifford group forms a $3$-design over $\text{U}(2^n)$ (we refer the reader to~\cite{mele2025clifford} for additional details), i.e., any Clifford invariant expressed as a second order power will be constant for all states.

\subsection{Additional Lie-group-based QRTs}

Up to this point we have motivated and presented QRTs that have been widely studied in the literature. In this section we follow the general recipe for defining Lie group-based QRTs and define two new theories based on $S_n$-equivariant unitaries and local uniform unitaries.

\begin{table*}
    \centering
    \begin{tabular}{|c|c|c|c|c|c|}
      \hline & $\mathbb{E}_{\HC}$ & $\mathbb{E}_{\SC_{\rm imag}}$ & $\mathbb{E}_{\SC_{\rm real}}$ & $\lim_{n\rightarrow \infty}\frac{\mathbb{E}_{\SC_{\rm imag}}}{\mathbb{E}_{\HC}}$ & $\lim_{n\rightarrow \infty}\frac{\mathbb{E}_{\SC_{\rm real}}}{\mathbb{E}_{\HC}}$\\
         \hline
       $\Lambda_{{\rm ent}}$  & $\frac{3}{2^n+1}$ & $\frac{4}{2^n+2}$  & $\frac{3\cdot 2^n-2}{(2^n-1)(2^n+2)}$ & $\frac{4}{3}$ & 1\\\hline
        $\Lambda_{{\rm ferm}}$ & $\frac{2n - 1}{2^n + 1}$ & $\frac{2n}{(2^n+2)}$ & $\frac{2^n(2n-1)-2}{(2^n-1)(2^n+2)}$ &1 & 1 \\\hline
        $\Lambda_{{\rm imag}}$ & $\frac{2^n+1}{2^{n+1}+2}$ & 1 & $\frac{2^n-2}{2(2^n-1)}$ & $\infty$ & 1  \\\hline
        $\Lambda_{{\rm real}}$ & $\frac{2^n-1}{2^n+1}$ & 0 & 1 & 0 & 1 \\\hline
        $\Lambda_{{\rm coh}}$ & $\frac{2^n-1}{(2^n+1)^2}$ & $\frac{4}{3}\frac{2^n-1}{(2^n+1)^2}$  & $\frac{3\cdot 2^{2n}+2^n-2}{3(2^n-1)^2(2^n+2)}$ & $\frac{4}{3}$ & 1  \\\hline
        $\Lambda_{{\rm stab}}$ & $\frac{3 }{3+2^n}$ & $\frac{6 }{6+2^n}$ & $\frac{3 \left(3\cdot 2^n+4^n-2\right)}{\left(2^n-1\right) \left(2^n+1\right) \left(2^n+6\right)}$ & $2$ & $1$ \\\hline
        $\Lambda_{S_n}$ & $\frac{Te_{n+1} - 1}{ 2^{2n}-1}$& $\frac{2\left(\frac{1}{6}(\frac{n}{2}+1)(\frac{n}{2}+2)(2n+3)-1\right)}{(2^n-1)(2^n+2)}$ & $\frac{2^nn(n(n+6)+11)-3(n(n+4)+8)}{6(2^n-1)^2(2^n+2)}$ & 1 & 1 \\\hline
       $\Lambda_{{\rm uent}}$ & $\frac{3}{n(2^n+1)}$ & $\frac{4}{n(2^n+2)}$  & $\frac{3\cdot 2^n-2}{n(2^n-1)(2^n+2)}$ & $\frac{4}{3}$ & 1 \\\hline
    \end{tabular}
    \caption{\textbf{Expected witness values for Haar random states, and for random states in $\SC_{\rm imag}$ and $\SC_{\rm real}$.} Here we show the results of Eqs.~\eqref{eq:exp-1}--\eqref{eq:exp-3} for all QRT witnesses considered. In addition, we also compute the ratios $\frac{\mathbb{E}_{\SC_{\rm imag}}}{\mathbb{E}_{\HC}}$ and $\frac{\mathbb{E}_{\SC_{\rm real}}}{\mathbb{E}_{\HC}}$ in the large $n$ limit, which allows us to determine whether the states in $\SC_{\rm imag}$ or $\SC_{\rm real}$ are more resourceful than a Haar random state. }
    \label{tab:Haar-tab}
\end{table*}

\subsubsection{$S_n$-equivariance}

Consider the $n$-qubit Hilbert space $\HC=(\mathbb{C}^2)^{\otimes n}$. First, let us define the symmetric group, $S_n$, consisting of all possible permutations of a list of size $n$, and its qubit-permuting representation $R$, such that for any $\pi\in S_n$
\begin{equation}
R(\pi)|i_1i_2\cdots i_n\rangle =|i_{\pi^{-1}(1)}i_{\pi^{-1}(2)}\cdots i_{\pi^{-1}(n)}\rangle\,.
\end{equation}
Then, the set of free operations $\mathbb{G}_{{S_n}}$ is given by the $S_n$-equivariant unitaries, i.e., $U\in \mathbb{G}_{{S_n}}$ if $\forall \pi \in S_n, \quad  [R(\pi), U] = 0$. To generate the elements in $\mathbb{G}_{{S_n}}$ we can first find all the linearly independent $S_n$-equivariant Hermitian operators, and then exponentiate them. The latter can be found by twirling all the Paulis $P\in\{\id,X,Y,Z\}^{\otimes n}/\{\id_{2^n}\}$ as $\frac{1}{n!}\sum_{\pi\in S_n}R(\pi)P R\ad(\pi)$, leading to the set $\PC_{{S_n}}$ of $Te_{n+1} - 1$ elements, with $Te_{n}=\frac{1}{6}(n(n + 1)(n + 2))$ the tetrahedral numbers~\cite{kazi2023universality,schatzki2022theoretical}. We further normalize the twirls, such that $\Tr[P^2]=2^n$ for each $P\in\PC_{S_n}$. Next, the free states, $\SC_{{S_n}}$, are obtained as the orbit of the highest-weight state $\ket{\psi^{{\rm ref}}_{S_n}}=\ket{0}^{\otimes n}$ under $\mathbb{G}_{{S_n}}$.  Finally, the resourcefulness witness is
\begin{equation}
    \Lambda_{{S_n}}(\ket{\psi}) = \frac{1}{2^n-1}\sum_{P \in\PC_{{S_n}}} \bra{\psi}P \ket{\psi}^{2}\,.
\end{equation}

Here we note that while the QRT of $S_n$-equivariance has not been formally explored in the literature we can motivate its study from the fact that $S_n$-equivariant circuits constitute a restricted form of computation that can be efficiently classically simulated~\cite{anschuetz2022efficient}. Moreover, these circuits have recently played a central role in quantum machine learning~\cite{schatzki2022theoretical,nguyen2022atheory}, thus illustrating their power when used in conjunction with $S_n$-resourceful states.

\subsubsection{Non-uniform entanglement}

Next, let us introduce a QRT that lies at the intersection of multipartite entanglement and $S_n$-equivariance. Namely, we consider an  $n$-qubit Hilbert space $\HC=(\mathbb{C}^2)^{\otimes n}$, where the free operations $\mathbb{G}_{{\rm uent}}$ arise from the $n$-fold tensor product of $SU(2)$. That is, any  $U\in\mathbb{G}_{{\rm uent}}$ can be expressed as $U=V\otimes V\otimes\cdots\otimes V$, for $V\in\SU(2)$.  The free states $\SC_{{\rm uent}}$ of this QRT are obtained from the reference, highest-weight, state $\ket{\psi^{{\rm ref}}_{{\rm uent}}}=\ket{0}^{\otimes n}$. Clearly, $\mathbb{G}_{{\rm uent}}\subseteq \mathbb{G}_{{\rm ent}}$ and also $\mathbb{G}_{{\rm uent}}\subseteq \mathbb{G}_{{S_n}}$; and similarly $\SC_{{\rm uent}}\subseteq \SC_{{\rm ent}}$ and also $\SC_{{\rm uent}}\subseteq \SC_{{S_n}}$. The previous implies that the free states in $\SC_{{\rm uent}}$ are also free in the QRTs of multipartite entanglement and of $S_n$-equivariance. Here, we can quantify the resourcefulness of a state via the non-uniform entanglement witness
\begin{equation}
    \Lambda_{{\rm uent}}(\ket{\psi})= \frac{1}{n^2}\sum_{P \in \PC_{{\rm uent}}}\bra{\psi}P \ket{\psi}^{2}
\end{equation}
where $\PC_{{\rm uent}} = \{ \sum_{i=1}^nX_i, \sum_{i=1}^nY_i, \sum_{i=1}^nZ_i\}$~\cite{bermejo2025characterizing}.

\section{Theoretical results}\label{sec:theory}

We begin our analysis by presenting theoretical results where we analytically compute the resourcefulness for different families of states, as well interesting bounds between resource witnesses. We refer the reader to the Appendices for a derivation of all our theoretical results. We note that to simplify comparisons between QRTS we will assume that the QRT of spin coherence is evaluated for a spin $s=(2^n-1)/2$ system.

\subsection{Resourcefulness of Haar random $n$-qubit states, and of random states in $\SC_{{\rm imag}}$ and $\SC_{{\rm real}}$}

To begin, we compute the average expected resourcefulness across the different QRTs of a state randomly sampled according to the Haar measure over $\HC$. That is,   
\begin{equation}\label{eq:exp-1}
    \mathbb{E}_{\HC} \left[ \Lambda(\ket{\psi})\right]=\mathbb{E}_{U\sim\text{U}(2^n)} \left[ \Lambda(U\ket{0^{\otimes n}})\right]\,.
\end{equation}
The results are shown in Table~\ref{tab:Haar-tab}. Here we can see that, as expected, Haar random states will be extremely resourceful for all QRTs as the value of the witnesses exponentially converges to their minimum. The only exception is the value of $\Lambda_{{\rm real}}$, which converges to its maximum of one, indicating that Haar random states are expected to be complex and hence have no resource in the QRT of realness. 

Next, we compute the average expected resourcefulness across the different QRTs for a real-valued Haar random state in $\SC_{\rm imag}$ and for a random state in $\SC_{\rm real}$. In particular, we here define
\begin{align}\label{eq:exp-2}
    \mathbb{E}_{\SC_{\rm imag}}[\Lambda(\ket{\psi})]=\mathbb{E}_{U\sim\text{O}(2^n)}[\Lambda(U\ket{0}^{\otimes n}]\,,
\end{align}
and 
\begin{equation}\label{eq:exp-3}
    \mathbb{E}_{\SC_{\rm real}}[\Lambda(\ket{\psi})]=\mathbb{E}_{U\sim\text{O}(2^n)}[\Lambda(U\ket{+_y}^{\otimes n}]\,.
\end{equation}
Again, the results are shown in Table~\ref{tab:Haar-tab}. We can see that similar to Haar random states, the states in $\SC_{\rm imag}$ and $\SC_{\rm real}$ are extremely resourceful across all QRTs (except for the ones for which they are free) as their witness values converge to their minimum exponentially fast. 

Notably, we can also directly compare the expected witness values for a Haar random state and for a state sampled from $\SC_{\rm imag}$ and $\SC_{\rm real}$. For instance, here we can see that in the large-$n$ limit, the states in $\SC_{\rm real}$ have the same resourcefulness as that in Haar random states (i.e., all of the ratios are equal to one). While a similar phenomenon occurs for the states in $\SC_{\rm imag}$ across the QRTs of fermionic Gaussianity and $S_n$-equivariance (i.e., the witnesses converge to the same value), Haar random states are actually expected to be more resourceful than those of $\SC_{\rm imag}$ for the QRTs of entanglement (standard and uniform) and spin coherence by a factor of $\frac{4}{3}$, and by a factor of $2$ in the QRT of Clifford non-stabilizerness in the large-$n$ limit. 

In addition, we can also use the results in Table~\ref{tab:Haar-tab} to showcase finite size effects. For instance, for small $n$  we find that across the QRTs of entanglement (standard and uniform), fermionic Gaussianity and $S_n$-equivariance, the states in $\SC_{\rm real}$ are more resourceful than Haar random states (the exception being the QRT of spin coherence and Clifford stabilizerness where Haar random states have a smaller witness value). However, this finite size effect quickly vanishes, with the states in $\SC_{\rm real}$ and Haar having essentially the same expected resourcefulness for large $n$.

\subsection{Resourcefulness of Haar random tensor-product states}

Next, let us consider the average resourcefulness across the QRTs for Haar random tensor product states in $\SC_{{\rm ent}}$. 

\begin{proposition}\label{prop:haar-sep}
    Let $\ket{\psi}=\bigotimes_{j=1}^n\ket{\psi_j}$ be a tensor product state, where each single qubit state $\ket{\psi_j}$ is a Haar random state over $\HC_j=\mathbb{C}^2$ in $\SC_{\rm ent}$. Then, denoting $\mathbb{E}_{\SC_{{\rm ent}}}=\mathbb{E}_{\HC_1}\cdots \mathbb{E}_{\HC_n}$ we find that, on average,
\begin{align}\label{eq:haar-sep}
    &\mathbb{E}_{\SC_{{\rm ent}}} \left[ \Lambda_{{\rm ferm}}(\ket{\psi})\right] =  \frac{n-1+3^{-n}}{n}\xrightarrow[n\rightarrow \infty]{}1 \nonumber\\
    &\mathbb{E}_{\SC_{{\rm ent}}} \left[ \Lambda_{{\rm imag}}(\ket{\psi})\right] =  \frac{-2\cdot3^n+4^n+6^n}{3^n \cdot 2 \cdot (2^n-1)}\xrightarrow[n\rightarrow \infty]{}\frac{1}{2}\nonumber\\
    &\mathbb{E}_{\SC_{{\rm ent}}} \left[ \Lambda_{{\rm real}}(\ket{\psi})\right] =  1-\left(\frac{2}{3}\right)^n\xrightarrow[n\rightarrow \infty]{}1 \nonumber\\
    &\mathbb{E}_{\SC_{{\rm ent}}} \left[ \Lambda_{{\rm stab}}(\ket{\psi})\right] =\frac{\left(\frac{8}{5}\right)^n-1}{2^n-1}\xrightarrow[n\rightarrow \infty]{}0  \nonumber\\
    &\mathbb{E}_{\SC_{{\rm ent}}} \left[ \Lambda_{S_n}(\ket{\psi})\right] =  \frac{19 \cdot(3^{n}-1)-2n(n+6)}{8\cdot 3^n (2^n-1)}\xrightarrow[n\rightarrow \infty]{}0\nonumber \\
    & \mathbb{E}_{\SC_{{\rm ent}}} \left[ \Lambda_{{\rm uent}}(\ket{\psi})\right] = \frac{1}{n} \xrightarrow[n\rightarrow \infty]{}0\,.
\end{align}
\end{proposition}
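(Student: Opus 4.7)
The plan is to exploit the tensor product structure of $\ket{\psi} = \bigotimes_j \ket{\psi_j}$ together with the fact that every witness in Eq.~\eqref{eq:group-invariant-def} is built from powers of Pauli expectation values. For any Pauli string $P = \bigotimes_j P_j$, the factorization $\bra{\psi}P\ket{\psi}^{2k} = \prod_j \bra{\psi_j}P_j\ket{\psi_j}^{2k}$ combined with the independence of the $\ket{\psi_j}$ gives
\begin{equation*}
\mathbb{E}_{\SC_{\rm ent}}\bigl[\bra{\psi}P\ket{\psi}^{2k}\bigr] = \prod_{j=1}^{n} \mathbb{E}_{\psi_j}\bigl[\bra{\psi_j}P_j\ket{\psi_j}^{2k}\bigr]\,.
\end{equation*}
The required single-qubit Haar moments are $\mathbb{E}[\bra{\psi_j}\sigma\ket{\psi_j}] = 0$, $\mathbb{E}[\bra{\psi_j}\sigma\ket{\psi_j}^{2}] = 1/3$, and $\mathbb{E}[\bra{\psi_j}\sigma\ket{\psi_j}^{4}] = 1/5$ for $\sigma \in \{X,Y,Z\}$ (with $1$ for the identity). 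Each entry of the proposition thus reduces to a combinatorial sum of these one-site moments weighted by the multiplicities of the Pauli types appearing in the corresponding witness set.

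First I would dispatch the cases whose operator sets factorize cleanly over sites. For $\Lambda_{{\rm stab}}$, summing $\prod_j \mathbb{E}[\bra{\psi_j}P_j\ket{\psi_j}^4]$ over all $P \in \{I,X,Y,Z\}^{\otimes n}$ equals $(1 + 3/5)^n = (8/5)^n$; subtracting the identity contribution and normalizing by $1/(2^n-1)$ yields $((8/5)^n-1)/(2^n-1)$. For $\Lambda_{{\rm uent}}$, the cross-terms in $\bra{\psi}\sum_i \sigma_i\ket{\psi}^2$ vanish under expectation because $\mathbb{E}[\bra{\psi_j}\sigma\ket{\psi_j}]=0$, leaving $n/3$ per Cartesian direction and hence $1/n$ after the $1/n^2$ normalization. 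For $\Lambda_{{\rm imag}}$ and $\Lambda_{{\rm real}}$, the sums run over Paulis with an even (resp.\ odd) number of $Y$s; I would evaluate them via the $\mathbb{Z}_2$-Fourier identity $\sum_{P:\,\text{even }Y} f(P) = \tfrac{1}{2}\sum_P (1 + (-1)^{n_Y(P)})f(P)$, which turns each sum into a single-site product $\tfrac{1}{2}\bigl((1+3\cdot\tfrac{1}{3})^n \pm (1+\tfrac{1}{3}-\tfrac{1}{3}+\tfrac{1}{3})^n\bigr) = \tfrac{1}{2}(2^n \pm (4/3)^n)$, from which the stated closed forms follow.

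Next I would handle the fermionic case by classifying the Majorana bilinears $i\gamma_j\gamma_k$ by their Jordan--Wigner support. Using Eq.~\eqref{eq:maj}, for $j = 2a-1, k = 2a$ the bilinear reduces to $\pm Z$ on site $a$ (contributing $n$ weight-one operators), whereas for $j$ on site $a$ and $k$ on site $b$ with $a < b$ the resulting Pauli has support on exactly $b-a+1$ sites, with $4$ bilinears per pair $(a,b)$. The expectation becomes a geometric-type sum $n/3 + (4/3)\sum_{d=1}^{n-1}(n-d)/3^d$, which resums in closed form to $n - 1 + 3^{-n}$ after dividing by $n$.

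The main obstacle will be the $S_n$-equivariant case, since the operators in $\PC_{S_n}$ are permutation-twirls $\tilde{P} = (n!)^{-1}\sum_\pi R(\pi) P R^\dagger(\pi)$ (normalized so $\Tr[\tilde P^2] = 2^n$), whose squared expectation expands into a double sum over permutations. My strategy is to parametrize twirled Paulis by their type $(n_I, n_X, n_Y, n_Z)$ with $n_I + n_X + n_Y + n_Z = n$, count Paulis per type with multinomial coefficients, and use the fact that the Haar moments depend only on whether each $P_j$ is the identity or not, combined with independence across sites, so that the double sum collapses to a tractable combinatorial expression that can be matched against $(19(3^n-1) - 2n(n+6))/(8 \cdot 3^n (2^n-1))$. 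I would close by reading the advertised $n \to \infty$ limits directly off the closed forms and by numerically spot-checking all six formulas at small $n$ as a safeguard against bookkeeping errors.
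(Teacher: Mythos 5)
Your approach is the same as the paper's in substance: both reduce every witness to products of single\hbox{-}qubit Haar moments via $\mathbb{E}_{\SC_{\rm ent}}[\dya{\psi}^{\otimes 2}]=\bigotimes_j\frac{\id_j^{\otimes2}+{\rm SWAP}_j}{6}$ (the paper's Eq.~\eqref{eq:local-Haar}), i.e.\ a factor $1/3$ per non-identity site for quadratic witnesses and $1/5$ for the quartic stabilizer witness. Your $\Lambda_{\rm stab}$, $\Lambda_{\rm uent}$ and $\Lambda_{\rm ferm}$ computations match the paper's line for line (the paper also observes that cross terms die because $\mathbb{E}[\bra{\psi_j}\sigma\ket{\psi_j}]=0$, and also classifies the Majorana bilinears by Jordan--Wigner support, summing $n/3+4\sum_{a<b}3^{-(b-a+1)}$). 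For $\Lambda_{\rm imag}$ and $\Lambda_{\rm real}$ your $\mathbb{Z}_2$-Fourier identity $\sum_{P:\,\text{even }Y}=\tfrac12\sum_P(1+(-1)^{n_Y(P)})$, giving $\tfrac12(2^n\pm(4/3)^n)$, is a genuinely cleaner route than the paper's explicit triple sum over $(n_i,n_x,n_y)$ with binomial coefficients; both yield the stated closed forms, and I verified your versions reproduce them exactly.

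The one place where your proposal is a plan rather than a proof is the $S_n$-equivariant entry, and the step you leave implicit is the crux. Each element of $\PC_{S_n}$ is a \emph{normalized sum} over an orbit, so $\bra{\psi}\tilde P\ket{\psi}^2$ is a double sum over pairs $(P,P')$ of Pauli strings in that orbit. You need to show the off-diagonal terms vanish: by independence, $\mathbb{E}[\langle P\rangle\langle P'\rangle]=\prod_j\frac16\left(\Tr[P_j]\Tr[P'_j]+\Tr[P_jP'_j]\right)$, which is zero unless $P_j=P'_j$ at every site, so only the $\binom{n}{n_x}\binom{n-n_x}{n_y}\binom{n-n_x-n_y}{n_z}$ diagonal (tensor-square) terms survive, each contributing $3^{-q}$ with $q=n_x+n_y+n_z$, and each orbit carries the normalization $\frac{n_x!\,n_y!\,n_z!\,(n-q)!}{n!}$ squared from the twirl coefficients times the rescaling enforcing $\Tr[\tilde P^2]=2^n$. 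Keeping track of these weights is exactly where the paper's derivation does its work (their sum over $k$ ${\rm SWAP}$s and orbit sizes), and it is what produces the non-obvious polynomial correction $-2n(n+6)$ in the numerator; "matched against the known answer" is not a substitute for carrying this out. Everything else in your proposal is correct and complete.
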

We can see from the previous proposition that as the system size increases the resourcefulness of random tensor product states increases and becomes maximal for the QRTs of imaginarity, non-stabilizerness, $S_n$-equivariance and non-uniformity. Then, let us highlight the fact that the expected value of $\mathbb{E}_{\SC_{{\rm ent}}} \left[ \Lambda_{{\rm ferm}}(\ket{\psi})\right]$ converges to its maximum of one as $n$ increases, indicating that random tensor product states essentially become fermionic Gaussian states.

Next, we also find it interesting to study the non-stabilizerness and fermionic non-Gaussianity for  (uniform and non-uniform) tensor product states in $\SC_{{\rm uent}}$ and $\SC_{{\rm ent}}$. In particular, we note that all the states in $\SC_{{\rm uent}}$ can be parametrized--without loss of generality--as $\ket{\psi_{\rm uent}} =(R_z(\alpha)R_y(\beta)\ket{0})^{\otimes n}$, where $R_\mu$ indicates a single-qubit rotation about the $\mu$-th axis. We obtain
\begin{proposition}\label{prop:uent-ferm-ent}
    Given a uniform state $\ket{\psi} =(R_z(\alpha)R_y(\beta)\ket{0})^{\otimes n}$, we find that its fermionic non-Gaussianity witness value is
    \begin{equation}
        \Lambda_{{\rm ferm}}(\ket{\psi}) = \frac{n + \cos^{2n}(2\beta) - 1}{n}.
    \end{equation}
\end{proposition}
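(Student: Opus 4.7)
The plan is to directly compute each term $\bra{\psi}i\gamma_j\gamma_k\ket{\psi}^2$ appearing in the definition of $\Lambda_{{\rm ferm}}$ and sum them explicitly, exploiting the tensor-product structure of $\ket{\psi}=\ket{\phi}^{\otimes n}$ with $\ket{\phi}=R_z(\alpha)R_y(\beta)\ket{0}$. First I set $x=\bra{\phi}X\ket{\phi}$, $y=\bra{\phi}Y\ket{\phi}$, $z=\bra{\phi}Z\ket{\phi}$. A direct evaluation of the single-qubit state yields $z=\cos(2\beta)$ (in the rotation convention under which the generator carries no factor of $1/2$), while $x$ and $y$ depend on both $\alpha$ and $\beta$ but satisfy the pure-state identity $x^2+y^2+z^2=1$, equivalently $x^2+y^2=1-z^2$.

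Next I expand each Majorana bilinear using the explicit Jordan--Wigner form in Eq.~\eqref{eq:maj}. For on-site pairs $(j,k)=(2a-1,2a)$ one has $i\gamma_{2a-1}\gamma_{2a}=-Z_a$, contributing $n z^2$ to the sum of squared expectation values. For distinct sites $a<b$ there are four sub-cases according to the parities of $j$ and $k$; in every case $i\gamma_j\gamma_k$ reduces, up to an overall sign, to a Pauli string of the form $P_a\, Z_{a+1}\cdots Z_{b-1}\, P_b'$ with $P_a,P_b'\in\{X,Y\}$. On a product state this factorizes, so summing the four squared expectations at fixed $(a,b)$ gives $(x^4+2x^2y^2+y^4)\,z^{2(b-a-1)}=(x^2+y^2)^2\,z^{2(b-a-1)}=(1-z^2)^2\, z^{2(b-a-1)}$, and crucially the $\alpha$-dependence disappears.

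What remains is the finite sum $\sum_{1\leq a<b\leq n}(1-z^2)^2 z^{2(b-a-1)}$. Introducing $d=b-a$, there are $n-d$ pairs at each separation, so the sum becomes $(1-z^2)^2\sum_{d=1}^{n-1}(n-d)z^{2(d-1)}$. The standard finite-sum identity $(1-z^2)^2\sum_{d=1}^{n-1}(n-d)z^{2(d-1)}=n-1-nz^2+z^{2n}$ (derived by exchanging summation order and summing a geometric series) makes the prefactor $(1-z^2)^2$ cancel cleanly. Adding the on-site contribution $nz^2$ leaves $\sum_{j<k}\bra{\psi}i\gamma_j\gamma_k\ket{\psi}^2 = n-1+z^{2n}$, and dividing by $n$ and substituting $z=\cos(2\beta)$ recovers the claimed identity.

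I expect the main difficulty to be purely one of bookkeeping: carefully tracking the factors of $i$ arising from products such as $XY=iZ$, $XZ=-iY$ and $YZ=iX$, so that the Hermitian operators $i\gamma_j\gamma_k$ reduce to honest Pauli strings with the correct signs before taking expectation values. Conceptually the proof rests on two small observations: on a product state, every Majorana bilinear expectation factorizes into a product of single-qubit expectations; and the four-case sum collapses via $x^4+2x^2y^2+y^4=(x^2+y^2)^2=(1-z^2)^2$. It is this collapse that eliminates $\alpha$ entirely and leaves the answer depending only on $\beta$ through $z=\cos(2\beta)$.
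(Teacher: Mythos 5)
Your proof is correct and follows essentially the same route as the paper's: identify the elements of $\PC_{\rm ferm}$ as the on-site $Z_a$ plus the strings $P_a Z_{a+1}\cdots Z_{b-1}P_b'$ with $P_a,P_b'\in\{X,Y\}$, factorize their expectations over the product state, and collapse the resulting double sum via the telescoping identity to $n-1+z^{2n}$. The only organizational difference is how $\alpha$ drops out: the paper removes it at the outset by observing that $R_z$ is itself a free fermionic (Gaussian) operation, whereas you keep all four $\{X,Y\}$ cross-terms and cancel the $\alpha$-dependence through $x^2+y^2=1-z^2$; both are valid, and your version has the small bonus of making the cancellation explicit. (Your $z=\cos(2\beta)$ matches the main-text statement; the paper's appendix uses the half-angle convention for $R_y$ and lands on $\cos^{2n}(\beta)$, an internal inconsistency of the paper rather than a flaw in your argument.)
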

From Proposition~\ref{prop:uent-ferm-ent}, we can readily derive the  following two corollaries.
\begin{corollary}\label{cor:uent-fer-1}
        The minimum value of $ \Lambda_{{\rm ferm}}$ of any uniform tensor product state in $\SC_{{\rm uent}}$ is
    \begin{equation}
        \min_{\ket{\psi}\in \SC_{{\rm uent}}}\Lambda_{{\rm ferm}}(\ket{\psi}) = \frac{n-1}{n}.
    \end{equation}
\end{corollary}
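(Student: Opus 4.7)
The plan is to invoke Proposition~\ref{prop:uent-ferm-ent} directly and then optimize the single remaining free parameter. Since every uniform tensor-product state in $\SC_{{\rm uent}}$ can be written (up to an irrelevant global phase) as $\ket{\psi} = (R_z(\alpha)R_y(\beta)\ket{0})^{\otimes n}$ for some $\alpha,\beta \in \mathbb{R}$, the minimization
\begin{equation}
    \min_{\ket{\psi}\in \SC_{{\rm uent}}}\Lambda_{{\rm ferm}}(\ket{\psi}) = \min_{\alpha,\beta \in \mathbb{R}} \frac{n + \cos^{2n}(2\beta) - 1}{n}
\end{equation}
reduces to a scalar optimization over $\beta$ alone, since the right-hand side of the formula in Proposition~\ref{prop:uent-ferm-ent} is independent of $\alpha$.

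The next step is to note that $\cos^{2n}(2\beta) \geq 0$ for all $\beta \in \mathbb{R}$, with equality achieved whenever $2\beta = \pi/2 + k\pi$ for some integer $k$, i.e., $\beta = \pi/4 + k\pi/2$. Substituting this into the expression yields the claimed value $(n-1)/n$. Since the corresponding state $\ket{\psi} = (R_z(\alpha) R_y(\pi/4)\ket{0})^{\otimes n}$ manifestly lies in $\SC_{{\rm uent}}$ by the parametrization, the infimum is attained.

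There is essentially no obstacle here: the entire content of the corollary is the observation that $\cos^{2n}(2\beta)$ is a non-negative quantity that vanishes at $\beta = \pi/4$. The only (minor) care needed is in justifying that the stated parametrization exhausts $\SC_{{\rm uent}}$ up to global phases, which follows from the fact that any single-qubit pure state can be written as $R_z(\alpha)R_y(\beta)\ket{0}$ for some choice of Euler angles, and that the uniformity constraint forces the same single-qubit unitary on every site. Once this is observed, the result follows in a single line from Proposition~\ref{prop:uent-ferm-ent}.
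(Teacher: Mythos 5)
Your proof is correct and follows essentially the same route as the paper: both invoke Proposition~\ref{prop:uent-ferm-ent} and observe that the $\cos^{2n}$ term is non-negative and attains zero, so the minimum $(n-1)/n$ is achieved. (The precise minimizing angle you quote differs from the one in the paper's appendix only because the paper itself is inconsistent about whether the argument of the cosine is $\beta$ or $2\beta$; this is a convention issue, not a gap in your argument.)
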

\begin{corollary}\label{cor:uent-fer-2}
Let $\ket{\psi}=\ket{\phi}^{\otimes n}$ be a tensor product state, where $\ket{\phi}$ is a single qubit  Haar random state over $\mathbb{C}^2$ in $\SC_{\rm uent}$. Then, denoting $\mathbb{E}_{\SC_{{\rm uent}}}=\mathbb{E}_{\HC}$ we find that, on average,
    \begin{align}
   \mathbb{E}_{\SC_{{\rm uent}}}[\Lambda_{{\rm ferm}}(\ket{\psi})]&=1-\frac{2}{2 n+1}\,.
\end{align}
\end{corollary}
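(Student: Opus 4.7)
The plan is to collapse this Haar expectation to a single one-dimensional integral by inserting the closed-form expression from Proposition~\ref{prop:uent-ferm-ent} and exploiting the fact that the witness depends only on one of the two Bloch-sphere parameters. Since Proposition~\ref{prop:uent-ferm-ent} already guarantees $\Lambda_{{\rm ferm}}(\ket{\phi}^{\otimes n})=\frac{n-1}{n}+\frac{1}{n}\cos^{2n}(2\beta)$ for any $\ket{\phi}=R_z(\alpha)R_y(\beta)\ket{0}$, the Haar average factors as $\frac{n-1}{n}+\frac{1}{n}\mathbb{E}_{\rm Haar}[\cos^{2n}(2\beta)]$, and no additional non-Gaussianity analysis is required.

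Concretely, I would parameterize a generic single-qubit state as above and express the pulled-back Haar measure on $\mathbb{C}^2$ as $d\mu(\alpha,\beta)=\frac{1}{2\pi}\sin(2\beta)\,d\alpha\,d\beta$, with $(\alpha,\beta)\in[0,2\pi]\times[0,\pi/2]$ and $2\beta$ playing the role of the Bloch-sphere polar angle. The $\alpha$ integration is then trivial, leaving only
\begin{equation}
\int_0^{\pi/2}\cos^{2n}(2\beta)\sin(2\beta)\,d\beta.
\end{equation}
A single substitution $u=\cos(2\beta)$ turns this into $\frac{1}{2}\int_{-1}^{1} u^{2n}\,du=\frac{1}{2n+1}$, and combining the pieces produces $\frac{n-1}{n}+\frac{1}{n(2n+1)}=1-\frac{2}{2n+1}$ as claimed.

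The main obstacle is essentially bookkeeping: pinning down the correct Jacobian of the single-qubit Haar measure in the paper's rotation convention, so that the weight reads $\sin(2\beta)$ rather than $\sin\beta$. This is forced by the appearance of $\cos^{2n}(2\beta)$ in Proposition~\ref{prop:uent-ferm-ent}, and an $n=1$ sanity check confirms it: the resulting formula predicts $\mathbb{E}_{\SC_{{\rm uent}}}[\Lambda_{{\rm ferm}}]=1/3$, which matches the standard single-qubit identity $\mathbb{E}_{\rm Haar}[\bra{\phi}Z\ket{\phi}^2]=1/3$ together with the fact that $\PC_{{\rm ferm}}=\{i\gamma_1\gamma_2\}=\{-Z\}$ for $n=1$.
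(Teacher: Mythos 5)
Your proposal is correct and follows essentially the same route as the paper: substitute the closed form from Proposition~\ref{prop:uent-ferm-ent} and reduce the Haar average to a one-dimensional integral of $\cos^{2n}$ against the single-qubit polar-angle measure, giving $\tfrac{n-1}{n}+\tfrac{1}{n(2n+1)}=1-\tfrac{2}{2n+1}$. If anything, your bookkeeping is more careful than the paper's one-line computation, which writes a prefactor $\tfrac{2}{\pi}$ that does not normalize $\sin(\beta)\,d\beta$ on $[0,\pi]$ and uses $\cos^{2n}(\beta)$ where the main text states $\cos^{2n}(2\beta)$; your convention resolves these consistently, and your $n=1$ sanity check confirms it.
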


Notably, we can also prove that the minimum value of $ \Lambda_{{\rm ferm}}$ for tensor product (non-uniform) states in $\SC_{{\rm ent}}$ is exactly the same one as that for uniform tensor product states of $\SC_{{\rm uent}}$ (as per Corollary~\ref{cor:uent-fer-1}). That is, 
\begin{proposition}\label{prop:ent-fer-1}
        The minimum value of $ \Lambda_{{\rm ferm}}$ of any tensor product state in $\SC_{{\rm ent}}$ is
    \begin{equation}
        \min_{\ket{\psi}\in \SC_{{\rm ent}}}\Lambda_{{\rm ferm}}(\ket{\psi}) = \frac{n-1}{n}.
    \end{equation}
\end{proposition}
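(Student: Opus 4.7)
The plan is to parametrize every single-qubit factor by its Bloch vector, compute $\langle \psi | i\gamma_j\gamma_k|\psi\rangle$ in closed form using the Jordan--Wigner representation, and then reduce the minimization of the resulting polynomial to a short telescoping identity.

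Concretely, I would write $\ket{\psi}=\bigotimes_{l=1}^n \ket{\psi_l}$ with $\langle\psi_l|X|\psi_l\rangle=x_l$, $\langle\psi_l|Y|\psi_l\rangle=y_l$, $\langle\psi_l|Z|\psi_l\rangle=z_l$, and $x_l^2+y_l^2+z_l^2=1$. Using Eq.~\eqref{eq:maj}, a direct calculation shows that for $a<b$ and any choice of parities, $i\gamma_j\gamma_k$ becomes a Pauli string with a single-qubit Pauli $X$ or $Y$ on sites $a$ and $b$, a string of $Z$'s on sites between them, and identities elsewhere, while the diagonal pair $i\gamma_{2a-1}\gamma_{2a}=-Z_a$. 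Since the state is a product, expectations factorize, so summing the four parity possibilities for each pair $(a,b)$ collapses to
\begin{equation}
n\,\Lambda_{{\rm ferm}}(\ket\psi)=\sum_{a=1}^n z_a^2+\sum_{1\leq a<b\leq n}w_a w_b\prod_{l=a+1}^{b-1} z_l^2,
\end{equation}
where $w_l:=x_l^2+y_l^2=1-z_l^2$. This is the key reduction that turns the Majorana computation into a univariate polynomial inequality in the parameters $s_l:=z_l^2\in[0,1]$.

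Setting $t_l:=1-s_l$, the inequality $n\,\Lambda_{{\rm ferm}}\geq n-1$ is equivalent to
\begin{equation}
g(t_1,\dots,t_n):=\sum_a t_a-\sum_{a<b} t_a t_b\prod_{l=a+1}^{b-1}(1-t_l)\leq 1.
\end{equation}
The simplification I would use is the elementary telescoping identity $\sum_{b=a+1}^n t_b\prod_{l=a+1}^{b-1}(1-t_l)=1-\prod_{l=a+1}^n(1-t_l)$, which follows by induction on $n-a$. Substituting this into the inner sum yields the collapse
\begin{equation}
g=\sum_a t_a\prod_{l=a+1}^n(1-t_l)=1-\prod_{l=1}^n(1-t_l)=1-\prod_{l=1}^n s_l\leq 1,
\end{equation}
so that $n\,\Lambda_{{\rm ferm}}(\ket\psi)=n-g\geq n-1$, with equality if and only if some $z_l=0$, which is clearly attainable inside $\SC_{{\rm ent}}$ and consistent with Corollary~\ref{cor:uent-fer-1}.

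The only delicate step is the bookkeeping in the first reduction: one has to carefully track the Jordan--Wigner $Z$-strings and the factors of $i$ so that the four combinations $(2a{-}1,2b{-}1)$, $(2a{-}1,2b)$, $(2a,2b{-}1)$, $(2a,2b)$ really produce $(x_a^2+y_a^2)(x_b^2+y_b^2)\prod_{a<l<b}z_l^2$. Once this is nailed down, the remaining argument is the two-line telescoping identity above, so I expect no genuine analytical obstacle beyond the Pauli-string algebra.
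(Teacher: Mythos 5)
Your proof is correct and follows essentially the same route as the paper's: both reduce $\Lambda_{{\rm ferm}}$ of a product state to the polynomial $\frac{1}{n}\bigl(\sum_a z_a^2+\sum_{a<b}(1-z_a^2)(1-z_b^2)\prod_{a<l<b}z_l^2\bigr)$ and collapse it to $\frac{1}{n}\bigl(n-1+\prod_l z_l^2\bigr)$ via a telescoping identity (which the paper packages as an induction on $n$). The only cosmetic difference is that you keep the full Bloch vector $(x_l,y_l,z_l)$, whereas the paper first removes the $R_z$ angles by noting they are free fermionic operations; both end up depending only on $x_l^2+y_l^2=1-z_l^2$, and your equality condition (some $z_l=0$) is the correct one.
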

While the two lower bounds in Corollary~\ref{cor:uent-fer-1} and Proposition~\ref{prop:ent-fer-1} match, we can use Corollary~\ref{cor:uent-fer-2} to find that
\begin{equation}
    \mathbb{E}_{\SC_{{\rm uent}}}[\Lambda_{{\rm ferm}}(\ket{\psi})]>\mathbb{E}_{\SC_{{\rm ent}}} \left[ \Lambda_{{\rm ferm}}(\ket{\psi})\right]\quad \forall n\geq 2\,,\nonumber
\end{equation}
indicating that a random uniform tensor product state is closer to being a fermionic Gaussian state than a random non-uniform tensor product state (see also Proposition~\ref{prop:haar-sep}).

Then, let us study the non-stabilizerness of the states in $\SC_{{\rm uent}}$. We find the following proposition.
\begin{proposition}\label{prop:uent-stab}
    Given a uniform state $\ket{\psi} =(R_z(\alpha)R_y(\beta)\ket{0})^{\otimes n}$, we find that its Clifford non-stabilizerness witness value is
    \small
    \begin{equation}
        \Lambda_{{\rm stab}}(\ket{\psi}) = \frac{\left(1 + \cos^4(\beta/2) + \frac{1}{4}\left(3 + \cos(2\alpha)\right)\sin^4(\beta/2)\right)^n-1}{2^n-1}\,.
    \end{equation}
    \normalsize
\end{proposition}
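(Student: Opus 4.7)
The plan is to exploit the tensor-product structure of both the state $\ket{\psi} = \ket{\phi}^{\otimes n}$, with $\ket{\phi} = R_z(\alpha) R_y(\beta)\ket{0}$, and of the Pauli set $\PC_{{\rm stab}} \cup \{\id_{2^n}\} = \{\id, X, Y, Z\}^{\otimes n}$. First I would observe that every $P$ in this set factorizes as $P = \bigotimes_{i=1}^n P_i$ with each $P_i \in \{\id, X, Y, Z\}$, so $\bra{\psi} P \ket{\psi} = \prod_{i=1}^n \bra{\phi} P_i \ket{\phi}$ and hence $\bra{\psi} P \ket{\psi}^4 = \prod_{i=1}^n \bra{\phi} P_i \ket{\phi}^4$. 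Because the sum over $P$ then distributes across tensor factors, the defining sum in $\Lambda_{{\rm stab}}$ becomes
\begin{equation*}
\sum_{P \in \PC_{{\rm stab}}} \bra{\psi} P \ket{\psi}^{4} = \left(1 + \sum_{Q \in \{X,Y,Z\}} \bra{\phi} Q \ket{\phi}^{4}\right)^{n} - 1,
\end{equation*}
after subtracting the $P = \id^{\otimes n}$ contribution. Dividing by $2^n - 1$ reproduces the normalization in the definition of $\Lambda_{{\rm stab}}$, so the entire problem collapses to a purely single-qubit calculation.

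The remaining task is to evaluate the single-qubit bracket and recast it in the claimed form. I would write $\ket{\phi}$ explicitly in the computational basis as $e^{-i\alpha/2}\cos(\beta/2)\ket{0} + e^{i\alpha/2}\sin(\beta/2)\ket{1}$ and then read off the Bloch components $\bra{\phi} Q \ket{\phi}$ for $Q \in \{X,Y,Z\}$ directly from the Pauli matrix elements. Substituting into the sum of fourth powers yields a low-degree trigonometric polynomial in $\alpha$ and $\beta$, which I would then reorganize using half-angle identities for the $\beta$-dependence together with $\cos^{4}\alpha + \sin^{4}\alpha = 1 - \tfrac{1}{2}\sin^{2}(2\alpha)$ and further double-angle reductions for the $\alpha$-dependence, in order to collapse everything into the explicit $\cos^{4}(\beta/2)$, $\sin^{4}(\beta/2)$ and $\cos(2\alpha)$ grouping written in the statement.

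The main obstacle is clerical rather than conceptual. Once one recognizes that the $n$-qubit Pauli basis is itself a tensor product, the factorization in the first step is immediate; the real work lies in the trigonometric bookkeeping of the final rearrangement, namely verifying that the single-qubit fourth-power sum collapses exactly into the combination inside the parentheses of the proposition. Everything else then follows automatically by raising this single-qubit quantity to the $n$-th power, subtracting one, and normalizing, in direct analogy with the derivation of Proposition~\ref{prop:uent-ferm-ent}.
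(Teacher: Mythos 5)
Your approach is essentially the same as the paper's: both proofs exploit the tensor-product structure of the state and of the Pauli basis to reduce the $n$-qubit sum to a single-qubit fourth-moment computation. The paper gets there by classifying Pauli strings according to their counts $(m_0,m_x,m_y,m_z)$ of $\id$, $X$, $Y$, $Z$ factors and summing with multinomial coefficients, which then resums to the $n$-th power; your direct observation that $\sum_{P}\prod_j\bra{\phi}P_j\ket{\phi}^4=\bigl(1+\sum_{Q\in\{X,Y,Z\}}\bra{\phi}Q\ket{\phi}^4\bigr)^n$ is a slightly cleaner route to the identical intermediate expression. One caution on the step you defer to "trigonometric bookkeeping": with the state you wrote, $\ket{\phi}=e^{-i\alpha/2}\cos(\beta/2)\ket{0}+e^{i\alpha/2}\sin(\beta/2)\ket{1}$, the Bloch components are $\langle Z\rangle=\cos\beta$, $\langle X\rangle=\sin\beta\cos\alpha$, $\langle Y\rangle=\sin\beta\sin\alpha$, so the single-qubit bracket evaluates to $1+\cos^4\beta+\tfrac{1}{4}(3+\cos 4\alpha)\sin^4\beta$, i.e.\ the stated formula with all angles doubled; no half-angle identity converts $\cos^4\beta$ into $\cos^4(\beta/2)$. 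This mismatch traces to an angle-convention inconsistency already present in the paper's own derivation (its intermediate expression uses $\cos(\alpha/2)$, $\sin(\beta/2)$ Bloch components that do not correspond to $R_z(\alpha)R_y(\beta)\ket{0}$ under the convention used elsewhere in the paper), so your method is sound, but you should not expect the final rearrangement to "collapse exactly" onto the quoted expression without first reconciling that convention.
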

Using the result in Proposition~\ref{prop:uent-stab} we find that the states in $\SC_{{\rm uent}}$ of maximal magic correspond to the angles $\alpha=\frac{\pi}{16}$ and $\beta=\arctan(\sqrt{2-\sqrt{3}})/2$, recovering the well known magic state $\ket{T}^{\otimes n}$ with $\dya{T}=\frac{1}{2}(\id+\frac{1}{\sqrt{3}}(X+Y+Z))$~\cite{bravyi2005universal}.

\subsection{Resourcefulness of Haar random Gaussian states}

The last family of states for which we calculate average expected resource values is the set $\SC_{\rm ferm}$.

\begin{proposition}\label{prop:haar-gauss}
    Let $\ket{\psi} = U\ket{0}^{\otimes n} \in \SC_{\rm ferm}$ be a free fermionic Gaussian state, such that $U$ is sampled randomly from the Haar measure over the spinor representation of $\mathbb{SO}(2n)$. We find that the following expectation values of resource witnesses of $\ket{\psi}$ hold:
\begin{align}\label{eq:haar-gauss}
    &\mathbb{E}_{\SC_{{\rm ferm}}} \left[ \Lambda_{{\rm ent}}(\ket{\psi})\right] = \frac{1}{2n-1}\xrightarrow[n\rightarrow \infty]{}0 \nonumber\\
    &\mathbb{E}_{\SC_{\rm ferm}}[\Lambda_{\rm imag}(\ket{\psi})]=\frac{1}{2^n-1}\sum_{k=2,4,\ldots}^{2n}\frac{\left(\binom{2 n}{k}+\binom{n}{\frac{k}{2}}\right) \binom{n}{\frac{k}{2}}}{2 \binom{2 n}{k}}\nonumber\\&\quad\quad\quad\quad\quad\quad\quad\quad\xrightarrow[n\rightarrow \infty]{}\frac{1}{2} \nonumber\\
    &\mathbb{E}_{\SC_{\rm ferm}}[\Lambda_{\rm real}(\ket{\psi})]=\frac{1}{2^{n-1}}\sum_{k=2,4,\ldots}^{2n}\frac{\left(\binom{2 n}{k}-\binom{n}{\frac{k}{2}}\right) \binom{n}{\frac{k}{2}}}{2 \binom{2 n}{k}}\nonumber\\&\quad\quad\quad\quad\quad\quad\quad\quad\xrightarrow[n\rightarrow \infty]{}1 \nonumber\\
    &\mathbb{E}_{\SC_{{\rm ferm}}} \left[ \Lambda_{{\rm uent}}(\ket{\psi})\right] = \frac{1}{n(2n-1)}\xrightarrow[n\rightarrow \infty]{}0\,.
\end{align}
\end{proposition}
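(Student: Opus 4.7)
The plan is to leverage the fact that any fermionic Gaussian state $\ket\psi=U\ket{0}^{\otimes n}$ is fully characterized by its real antisymmetric covariance matrix $\Gamma_{ij}=-\frac{i}{2}\langle\psi|[\gamma_i,\gamma_j]|\psi\rangle$, which for pure states satisfies $\Gamma^2=-\id$ and which transforms as $\Gamma=O\Gamma_0 O^T$ when $\ket\psi$ is drawn from the Haar measure over the spinor representation of $\text{SO}(2n)$, where $O\in\text{SO}(2n)$ is Haar random and $\Gamma_0=\bigoplus_{j=1}^n i\sigma_y$ is the vacuum covariance. First I would apply the Jordan--Wigner map of Eq.~\eqref{eq:maj} to rewrite every Pauli appearing in $\PC_{\rm ent}$, $\PC_{\rm uent}$, $\PC_{\rm sym}$, and $\PC_{\rm asym}$ as a Majorana monomial $\gamma_S$ (up to a phase) with $S\subseteq\{1,\ldots,2n\}$. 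Wick's theorem then gives $\langle\psi|\gamma_S|\psi\rangle=\text{Pf}(\Gamma|_S)$ when $|S|$ is even and $0$ otherwise, and Hermiticity of both $P$ and $\ket\psi$ ensures $\langle\psi|P|\psi\rangle^2=\text{Pf}(\Gamma|_S)^2=\det(\Gamma|_S)$.

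Second, I would establish the state-independent sum rule $\sum_{|S|=2k}\text{Pf}(\Gamma|_S)^2=\binom{n}{k}$. This follows from the principal-minor expansion $\det(\lambda\id+\Gamma)=\sum_{|S|\text{ even}}\lambda^{2n-|S|}\text{Pf}(\Gamma|_S)^2$ for antisymmetric matrices, combined with the identity $\det(\lambda\id+\Gamma)=(\lambda^2+1)^n$ valid for any pure Gaussian $\Gamma$. Together with the $\text{SO}(2n)$-invariance of Haar averaging---which forces $\mathbb{E}[\text{Pf}(\Gamma|_S)^2]$ to depend only on $|S|$---this produces the key identity $\mathbb{E}_{\SC_{\rm ferm}}[\text{Pf}(\Gamma|_S)^2]=\binom{n}{k}/\binom{2n}{2k}$ for every $|S|=2k$.

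For $\Lambda_{\rm ent}$ and $\Lambda_{\rm uent}$ the calculation collapses to the $k=1$ contribution: among the Paulis in these witness sets only the $Z_j=i\gamma_{2j-1}\gamma_{2j}$ survive as even-Majorana monomials, so $\Lambda_{\rm ent}(\ket\psi)=\frac{1}{n}\sum_j\Gamma_{2j-1,2j}^2$ and $\Lambda_{\rm uent}(\ket\psi)=\frac{1}{n^2}\bigl(\sum_j\Gamma_{2j-1,2j}\bigr)^2$. The key identity then yields $\mathbb{E}[\Lambda_{\rm ent}]=1/(2n-1)$, while for $\Lambda_{\rm uent}$ the off-diagonal cross terms $\mathbb{E}[\Gamma_{2j-1,2j}\Gamma_{2k-1,2k}]$ with $j\neq k$ vanish by the $\text{O}(2n)$ second-moment formula (the two index pairs are disjoint), giving $\mathbb{E}[\Lambda_{\rm uent}]=1/(n(2n-1))$. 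For $\Lambda_{\rm imag}$ and $\Lambda_{\rm real}$, I would count how many symmetric versus antisymmetric Paulis correspond to each Majorana degree $2k$: under the bijection in which qubit $j$ carries $I$, $X$, $Y$, or $Z$ according to whether $S\cap\{2j-1,2j\}$ equals $\emptyset$, $\{2j-1\}$, $\{2j\}$, or $\{2j-1,2j\}$, the Pauli $P_S$ is symmetric iff the number of $Y$-slots is even, so the generating function $(1+x(1+y)+x^2)^n$ evaluated at $y=\pm 1$ produces the counts $(\binom{2n}{2k}\pm\binom{n}{k})/2$; multiplying by the key identity and summing over $k\geq 1$ reproduces the stated formulas.

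The large-$n$ limits then follow by noting that $\sum_{k\geq 1}\binom{n}{k}=2^n-1$ dominates the leading contribution while the correction $\sum_k\binom{n}{k}^2/\binom{2n}{2k}$ is subleading, producing the asymptotic values $0$, $1/2$, $1$, and $0$ for the four witnesses. The main obstacle I anticipate is the careful tracking of Jordan--Wigner phases in the third step: each $\gamma_S$ differs from the corresponding Hermitian Pauli $P_S$ by a power of $i$ that depends on $|S|\bmod 4$ and on the $Z$-tail reorderings, and one must verify that the symmetric/antisymmetric classification of $P_S$---a property of $P_S$ as a Hermitian operator, independent of the imaginary phase relating it to $\gamma_S$---corresponds cleanly to the parity of $Y$-slots in the subset decomposition of $S$. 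Once this combinatorial bijection is firmly established, the remainder is a routine compilation of standard Pfaffian and Haar-orthogonal identities.
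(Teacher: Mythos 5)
Your route is genuinely different from the paper's and, for the most part, it works. The paper computes these averages by Weingarten calculus: it writes down an explicit basis $\{Q_k^0,Q_k^1\}$ for the second-order commutant of the matchgate group, obtains $\mathbb{E}_{\SC_{\rm ferm}}[\dya{\psi}^{\otimes 2}]$ as a sum over Majorana degree $k$, and reads off that each Pauli in $L_k$ contributes $\binom{n}{k/2}/\binom{2n}{k}$; the four witness values then follow by counting how many elements of each witness set lie in each $L_k$, with exactly the symmetric/antisymmetric counts $\frac{1}{2}\bigl(\binom{2n}{k}\pm\binom{n}{k/2}\bigr)$ that you also derive. Your covariance-matrix derivation of the same per-Pauli weight --- Wick's theorem plus the sum rule $\sum_{|S|=2k}\mathrm{Pf}(\Gamma|_S)^2=\binom{n}{k}$ obtained from $\det(\lambda\id+\Gamma)=(\lambda^2+1)^n$, plus transitivity of the signed-permutation subgroup on index sets of fixed size --- is more elementary, avoids computing the commutant altogether, and the key identity $\mathbb{E}[\mathrm{Pf}(\Gamma|_S)^2]=\binom{n}{k}/\binom{2n}{2k}$ is correct and exactly reproduces the paper's weight.

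There are, however, two soft spots. First, the obstacle you flagged is real: the symmetric/antisymmetric character of $P_S$ is \emph{not} the parity of your ``$Y$-slots''. For example $S=\{1,4\}$ on two qubits gives $P_S\propto Y\otimes Y$ (one $X$-slot and one $Y$-slot, hence odd by your criterion), yet $Y\otimes Y$ is symmetric. The clean statement is $P_S^T=(-1)^{e(S)+\binom{|S|}{2}}P_S$, with $e(S)$ the number of even Majorana indices in $S$; so for $|S|=2k$ the operator $P_S$ is symmetric iff $e(S)+k$ is even. Running the generating function with the corrected per-qubit factor $(1+x-x-x^2)=(1-x^2)$ and the overall sign $(-1)^k$ happens to reproduce your counts $\frac{1}{2}\bigl(\binom{2n}{2k}\pm\binom{n}{k}\bigr)$, so your final formulas survive, but the bijection as stated must be repaired. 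Second, your dismissal of the $\Lambda_{\rm uent}$ cross terms is not a valid argument as written: $\Gamma_{ab}$ is quadratic in $O$, so $\mathbb{E}[\Gamma_{2j-1,2j}\Gamma_{2k-1,2k}]$ is a \emph{fourth} moment of $O$, and what actually makes it vanish for $n\geq 3$ is that the only $\mathrm{SO}(2n)$-invariant quadratic forms on antisymmetric matrices are built from Kronecker deltas, which die on disjoint index pairs. For $n=2$ there is an additional invariant (the epsilon tensor on four indices) and the cross term does \emph{not} vanish: every two-qubit Gaussian state is of the form $\alpha\ket{00}+\beta\ket{11}$, so $\langle Z_1\rangle=\langle Z_2\rangle$ and $\mathbb{E}[\Lambda_{\rm uent}]=\mathbb{E}[\Lambda_{\rm ent}]=1/3\neq 1/6$. (The paper's own derivation has the same blind spot at $n=2$, where $Z^{\otimes n}Z_j$ is again a weight-one Pauli and contributes through the $Q_k^1$ part of the commutant; the stated formula for $\Lambda_{\rm uent}$ is safe only for $n\geq 3$.)
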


Proposition~\ref{prop:haar-gauss} shows that as $n$ increases, fermionic Gaussian states become entangled, as well as have maximal resource for the imaginarity QRT (i.e., they are not real valued states).

\subsection{Resourcefulness of stabilizer states}

To begin, let us recall that the set of stabilizer states $\SC_{{\rm stab}}$ can be defined as some Clifford unitary  $U\in\mathbb{G}_{{\rm stab}}$ applied to the reference state $\ket{0}^{\otimes n}$. In density matrix formalism, we have that for any $\ket{\psi}\in \SC_{{\rm stab}}$
\begin{equation}
    \dya{\psi}= \prod_{j=1}^n\frac{\id_{2^n}+UZ_jU\ad}{2}\,,\nonumber
\end{equation}
where we used the equality $ \dya{0}^{\otimes n}= \prod_{j=1}^n\frac{\id_{2^n}+Z_j}{2}$. Then, since Clifford operators map Paulis to Paulis (up to a phase), we know that each $UZ_jU\ad$, as well as their products, will be Pauli operators. In turn, the previous implies that $\dya{\psi}$ will be expressed as a sum of $2^n$ Paulis, all with the same coefficient $1/2^n$. This realization allows us to prove the following results for the values that the resourcefulness witness of the different Pauli-based QRTs can, and cannot take, over stabilizer states. 
\begin{proposition} \label{prop:cliff-values}
    Let $\ket{\psi}$ be a stabilizer state from $\SC_{{\rm stab}}$. Then, we find:
    \begin{itemize}
        \item The entanglement witness $\Lambda_{{\rm ent}}(\ket{\psi})$ can only take discrete values $\{ \frac{j}{n} \}_{j = 0}^n$, excluding the value $\frac{n-1}{n}$.
        \item The fermionic non-Gaussianity witness $\Lambda_{{\rm ferm}}(\ket{\psi})$ can only take discrete values $\{ \frac{j}{n} \}_{j = 0}^n$, excluding the value $\frac{n-2}{n}$.
        \item $\ket{\psi}$ is either a free state or a maximally resourceful state with respect to the QRT of imaginarity or realness, i.e., it only takes the minimum or maximum resourcefulness value of $\Lambda_{{\rm imag}}$ or $\Lambda_{{\rm real}}$.
    \end{itemize}
\end{proposition}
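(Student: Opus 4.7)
The plan starts from the standard observation that for any stabilizer state $\ket{\psi}$ with stabilizer group $\mathcal{S}$ (an abelian subgroup of $\mathbb{P}$ of order $2^n$), one has $\bra{\psi}P\ket{\psi}\in\{-1,0,+1\}$, with a nonzero value iff $P\in\pm\mathcal{S}$. Hence every $\bra{\psi}P\ket{\psi}^{2k}\in\{0,1\}$ and each witness reduces to a normalized count of how many Paulis from its defining set sit in $\pm\mathcal{S}$. For $\Lambda_{{\rm ent}}$, since $X_i,Y_i,Z_i$ pairwise anticommute, at most one of them lies in the abelian $\pm\mathcal{S}$ for each qubit $i$, so each qubit contributes $0$ or $1$ and $\Lambda_{{\rm ent}}=k/n$ with $k\in\{0,\ldots,n\}$. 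To exclude $k=n-1$ I would argue by factorization: if $n-1$ qubits each host a single-qubit stabilizer $P_i$, the state is a simultaneous $\pm 1$ eigenstate of these $P_i$, so it splits as a product of $n-1$ single-qubit eigenstates tensor a pure single-qubit state on the remaining qubit; any additional independent generator of $\mathcal{S}$ must commute with each $P_i$, so its restriction to qubit $i<n$ lies in $\{\id,P_i\}$, and reducing modulo the $P_i$'s one obtains a non-identity single-qubit Pauli in $\mathcal{S}$ on the last qubit, forcing $k=n$.

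For $\Lambda_{{\rm ferm}}$, two quadratic Majoranas $i\gamma_a\gamma_b$ and $i\gamma_c\gamma_d$ commute iff $|\{a,b\}\cap\{c,d\}|\in\{0,2\}$, so the distinct commuting ones sit on pairwise disjoint index pairs. This immediately yields at most $n$ quadratic Majoranas in $\pm\mathcal{S}$ and $\Lambda_{{\rm ferm}}=q/n$ with $q\in\{0,\ldots,n\}$. The crucial step---ruling out $q=n-2$---goes by reduction: letting $H\leq\mathcal{S}$ be the subgroup generated by the $n-2$ quadratic Majoranas in $\mathcal{S}$ (so $|H|=2^{n-2}$ and four Majorana indices $u_1,u_2,u_3,u_4$ remain uncovered), the centralizer of $H$ in $\mathbb{P}$ modulo $H$ is isomorphic to the Pauli group $C$ on those four uncovered Majoranas, i.e., a $2$-qubit Pauli group, and $\mathcal{S}/H$ embeds into $C$ as a $4$-element abelian subgroup, i.e., a $2$-qubit stabilizer. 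I would then invoke the sublemma that every $2$-qubit stabilizer subgroup contains at least one of the six quadratic Majoranas $i\gamma_{u_i}\gamma_{u_j}$; each such element is also quadratic in the full $n$-qubit picture, forcing $q\geq n-1$ and contradicting $q=n-2$.

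For $\Lambda_{{\rm imag}}$ and $\Lambda_{{\rm real}}$, I would use Pauli transposition: $P^T=(-1)^{y(P)}P$, with $y(P)$ the number of $Y$ factors of $P$, so $\PC_{{\rm sym}}$ and $\PC_{{\rm asym}}$ partition the non-identity Paulis by the parity of $y$. On the abelian $\mathcal{S}$, the map $\chi(P)=y(P)\bmod 2$ is a group homomorphism because $(P_1P_2)^T=P_2^TP_1^T=(-1)^{y(P_1)+y(P_2)}P_1P_2$ when $[P_1,P_2]=0$. Therefore $\ker\chi$ is either all of $\mathcal{S}$---in which case $\Lambda_{{\rm imag}}=1$ and $\Lambda_{{\rm real}}=0$, the maximum and minimum respectively---or a subgroup of index $2$, giving exactly $2^{n-1}$ symmetric and $2^{n-1}$ antisymmetric stabilizer elements and hence $\Lambda_{{\rm imag}}=(2^{n-1}-1)/(2^n-1)$ and $\Lambda_{{\rm real}}=1$, matching the stated minimum and maximum.

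The main obstacle is the sublemma in the fermionic part, together with the identification of $\mathcal{S}/H$ with a $2$-qubit stabilizer subgroup on the uncovered Majoranas. The sublemma admits a short case analysis: the $15$ non-identity Hermitian Paulis on $2$ qubits split as $6$ quadratic Majoranas, $8$ odd-Majorana-degree Paulis, and the unique quartic $\pm Z_1Z_2$. Direct inspection shows that $\pm Z_1Z_2$ anticommutes with every odd-Majorana-degree Pauli, and any pair of commuting odd-degree Paulis has a product that is itself a quadratic Majorana, so no $4$-element abelian subgroup can avoid all six quadratic Majoranas. Everything else reduces to elementary commutation and homomorphism arguments, so the structural analysis of the $2$-qubit quotient on the uncovered Majoranas is really the only nontrivial piece.
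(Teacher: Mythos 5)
Your proposal is correct and follows essentially the same route as the paper's proof: both arguments reduce each witness to a normalized count of Paulis lying in the (abelian) stabilizer group, bound that count by the maximal number of mutually commuting elements of the relevant operator set, exclude $\frac{n-1}{n}$ for $\Lambda_{\rm ent}$ via factorization of the state over the $n-1$ locally stabilized qubits, exclude $\frac{n-2}{n}$ for $\Lambda_{\rm ferm}$ by reducing to the four uncovered Majorana modes and showing a two-qubit stabilizer group must contain a quadratic Majorana, and settle the imaginarity/realness case via the multiplicativity of the $Y$-parity on commuting Paulis. The only differences are presentational: you replace the paper's conjugation lemma (normalizing the $n-2$ commuting quadratic Majoranas to $Z_1,\dots,Z_{n-2}$ by an element of $\mathbb{G}_{\rm stab}\cap\mathbb{G}_{\rm ferm}$) with a centralizer/quotient computation, and you phrase the $Y$-parity rule as a group homomorphism $\chi:\mathcal{S}\to\mathbb{Z}_2$ rather than as an explicit half-counting of symmetric and antisymmetric stabilizer elements.
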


\subsection{Uniform entanglement inequality}

To finish, we note that the sets of operators $\PC_{{\rm ent}}$ and $\PC_{{\rm uent}}$ respectively defining the resource witnesses for the QRTs of entanglement and non-uniform entanglement are inherently very related. Both sets are made up of local Pauli operators and thus can be found to satisfy a strict inequality given by the following theorem.
\begin{theorem}\label{theo:uent-ent-bound}
    Given an arbitrary state $\ket{\psi} \in \HC$, the following inequality holds
    \begin{equation}
        \Lambda_{{\rm ent}}(\ket{\psi}) \ge \Lambda_{{\rm uent}}(\ket{\psi})\,.
    \end{equation}
    The equality arises if and only if $\bra{\psi}X_i\ket{\psi}=\bra{\psi}X_j\ket{\psi}$, $\bra{\psi}Y_i\ket{\psi}=\bra{\psi}Y_j\ket{\psi}$ and $\bra{\psi}Z_i\ket{\psi}=\bra{\psi}Z_j\ket{\psi}$ $\forall i,j$.
\end{theorem}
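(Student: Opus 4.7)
The plan is to recognize this as a direct consequence of the Cauchy--Schwarz (equivalently, power-mean) inequality applied separately to the expectation values of $X_i$, $Y_i$, and $Z_i$ viewed as $n$-dimensional real vectors.

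First, I would write out both witnesses explicitly in terms of single-qubit Pauli expectation values. Using the definitions, one has
\begin{equation}
\Lambda_{{\rm ent}}(\ket{\psi}) = \frac{1}{n}\sum_{\mu\in\{X,Y,Z\}}\sum_{i=1}^n \bra{\psi}\mu_i\ket{\psi}^2\,,
\end{equation}
and
\begin{equation}
\Lambda_{{\rm uent}}(\ket{\psi}) = \frac{1}{n^2}\sum_{\mu\in\{X,Y,Z\}}\left(\sum_{i=1}^n\bra{\psi}\mu_i\ket{\psi}\right)^{\!2}\,.
\end{equation}
The two sums differ only in whether one squares term-by-term and averages, versus averaging first and then squaring. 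So the inequality should follow from a convexity-of-the-square argument.

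Next, I would invoke the Cauchy--Schwarz inequality (or equivalently Jensen's inequality for $x\mapsto x^2$) in the form
\begin{equation}
\left(\sum_{i=1}^n a_i\right)^{\!2} \,\leq\, n\sum_{i=1}^n a_i^2\,,
\end{equation}
applied with $a_i=\bra{\psi}\mu_i\ket{\psi}$ for each fixed $\mu\in\{X,Y,Z\}$. Dividing by $n^2$ and summing over the three Pauli labels immediately gives $\Lambda_{{\rm uent}}(\ket{\psi})\leq \Lambda_{{\rm ent}}(\ket{\psi})$.

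Finally, for the equality condition, I would use the fact that Cauchy--Schwarz becomes an equality precisely when the vector $(a_1,\ldots,a_n)$ is proportional to $(1,\ldots,1)$, i.e.\ all $a_i$'s coincide. Since the inequality we used is a sum of three such inequalities (one per Pauli direction), global equality holds iff equality holds in each, which is exactly the stated condition $\bra{\psi}X_i\ket{\psi}=\bra{\psi}X_j\ket{\psi}$, $\bra{\psi}Y_i\ket{\psi}=\bra{\psi}Y_j\ket{\psi}$, $\bra{\psi}Z_i\ket{\psi}=\bra{\psi}Z_j\ket{\psi}$ for all $i,j$. There is no real obstacle here; the only thing to be careful about is to make sure the equality claim is phrased as ``simultaneous equality in all three Pauli channels,'' which is why all three conditions appear conjunctively in the theorem statement.
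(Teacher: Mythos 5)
Your proposal is correct and follows essentially the same route as the paper's own proof, which likewise applies the Cauchy--Schwarz inequality $\bigl(\sum_i a_i\bigr)^2 \le n\sum_i a_i^2$ separately to each of the three Pauli directions and obtains the equality condition from the requirement that each vector of expectation values be proportional to $(1,\ldots,1)$. Nothing further is needed.
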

An immediate corollary of Theorem~\ref{theo:uent-ent-bound} is 
\begin{corollary}
    Let $\ket{\psi}\in \SC_{S_n}$ be a free state for the QRT of $S_n$-equivariance. Then, $\Lambda_{{\rm ent}}(\ket{\psi}) = \Lambda_{{\rm uent}}(\ket{\psi})$.
\end{corollary}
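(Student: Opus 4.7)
The plan is to reduce the corollary to an application of Theorem~\ref{theo:uent-ent-bound}, showing that every free state in $\SC_{S_n}$ automatically satisfies the equality condition stated there. Concretely, I will argue that every such state is invariant under the permutation representation $R(\pi)$, and then conjugate single-qubit Paulis by transpositions to equate their expectation values across sites.

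First, I would unpack what it means for $\ket{\psi}\in\SC_{S_n}$. By Eq.~\eqref{eq:free-states-orbit} applied to this QRT, there exists $U\in\mathbb{G}_{S_n}$ such that $\ket{\psi}=U\ket{0}^{\otimes n}$. Since $U$ is $S_n$-equivariant, $[R(\pi),U]=0$ for every $\pi\in S_n$. The reference state $\ket{0}^{\otimes n}$ is itself fixed by $R(\pi)$ because permuting identical tensor factors leaves it unchanged. Combining these two facts,
\begin{equation}
R(\pi)\ket{\psi}=R(\pi)U\ket{0}^{\otimes n}=UR(\pi)\ket{0}^{\otimes n}=U\ket{0}^{\otimes n}=\ket{\psi},
\end{equation}
so every free state of the $S_n$-equivariance QRT is itself $S_n$-invariant.

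Second, I would use this invariance to verify the equality hypothesis of Theorem~\ref{theo:uent-ent-bound}. For any pair of sites $i,j$ let $\pi_{ij}\in S_n$ be the transposition swapping them. The qubit-permuting representation satisfies $R(\pi_{ij})X_iR(\pi_{ij})^\dagger = X_j$, and likewise for $Y$ and $Z$. Since $R(\pi_{ij})$ is unitary and $R(\pi_{ij})\ket{\psi}=\ket{\psi}$,
\begin{equation}
\bra{\psi}X_j\ket{\psi}=\bra{\psi}R(\pi_{ij})X_iR(\pi_{ij})^\dagger\ket{\psi}=\bra{\psi}X_i\ket{\psi},
\end{equation}
and analogous identities hold for $Y_i,Y_j$ and $Z_i,Z_j$. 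This is exactly the condition Theorem~\ref{theo:uent-ent-bound} identifies as necessary and sufficient for equality, so $\Lambda_{{\rm ent}}(\ket{\psi})=\Lambda_{{\rm uent}}(\ket{\psi})$.

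There is no real obstacle here; the only subtlety is recognizing that invariance of $\ket{0}^{\otimes n}$ under $R(\pi)$, together with equivariance of $U$, forces invariance of the whole orbit $\SC_{S_n}$, after which Theorem~\ref{theo:uent-ent-bound} does the work. If one wanted a slightly more conceptual phrasing, the identical-site expectation values can be obtained at once by averaging $\bra{\psi}X_i\ket{\psi}$ over all $i$ using the $S_n$-symmetry, but the transposition-based argument above is the most direct.
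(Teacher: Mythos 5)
Your proposal is correct and follows the same route as the paper, which treats the corollary as an immediate consequence of the equality condition in Theorem~\ref{theo:uent-ent-bound} for permutation-invariant states. You simply fill in the details the paper leaves implicit, namely that $R(\pi)\ket{0}^{\otimes n}=\ket{0}^{\otimes n}$ together with $[R(\pi),U]=0$ forces $R(\pi)\ket{\psi}=\ket{\psi}$, after which conjugating by transpositions equalizes the single-site Pauli expectation values.
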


To finish, we note that if $\mathbb{E}_{\SC}[\bra{\psi}^{\otimes 2}X_i\otimes X_j \ket{\psi}^{\otimes 2}]=0$, then 
    \begin{equation}
        \mathbb{E}_{\SC}[\Lambda_{{\rm uent}}(\ket{\psi})] =\frac{1}{n} \mathbb{E}_{\SC}[\Lambda_{{\rm ent}}(\ket{\psi})]\,.
    \end{equation}
Such is the case for Haar random states, as well as for the states in $\SC_{{\rm ent}}$, $\SC_{{\rm imag}}$  and $\SC_{{\rm real}}$ (see Table~\ref{tab:Haar-tab} as well as Propositions~\ref{prop:haar-sep} and~\ref{prop:haar-gauss}).

\section{Numerical analysis}\label{sec:numerics}

In this section we will numerically study how resourceful the free states of the eight previously defined QRTs are when examined via the witnesses of the other theories. For this purpose we create a data set consisting of $200$ randomly sampled free states from each QRT (i.e., by evolving the reference state with a unitary randomly and uniformly sampled from the set of free operations; see the Appendix for additional details), as well as $200$ Haar random states from $\HC$ on $n=3,4,\ldots,8$ qubits. We then measure the resourcefulness of each state according to all resource witnesses. Our analysis aims to find (i) the resources available to free states of a given QRT, (ii) correlations between different resources and free states and (iii) trends with system size.

\subsection{Average resourcefulness}

\begin{figure}[t]
    \centering
    \includegraphics[width=1\linewidth]{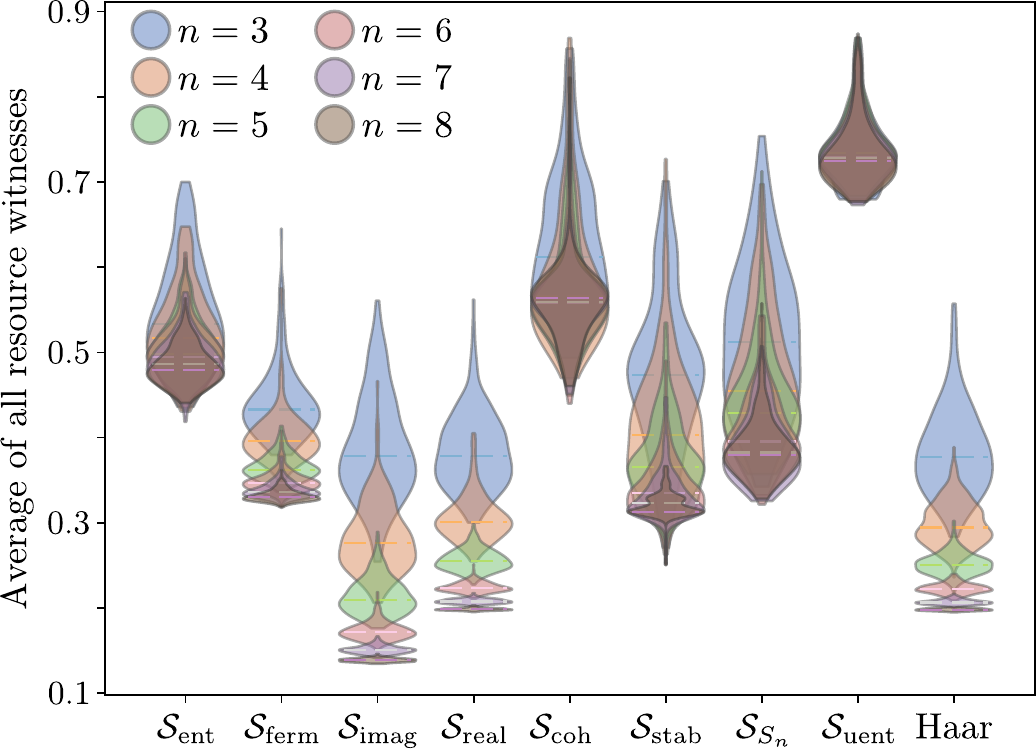}
    \caption{\textbf{Average resourcefulness.} Violin plot of the average resourcefulness across all considered QRTs grouped by each family of the dataset. Dashed lines, represent the median and colors the system size, as indicated.}
    \label{fig:averages}
\end{figure}

Our first coarse-grain analysis is the average resourcefulness across all eight QRTs for all nine families of states in the dataset. 

We present our results in Fig.~\ref{fig:averages} for system sizes of $n=3,4,\ldots,8$ qubits. We first can see that as system size increases, for most families of states the distributions become narrower, indicating that almost all states in the class  possess similar resource values. Moreover, the distributions also tend to shift towards smaller values, meaning that the states are very resourceful for all QRTs (except the ones for which they are defined as free). Notably, the average witness values for the states in $\SC_{\rm ent}$, $\SC_{\rm coh}$ and especially $\SC_{\rm uent}$, remain fairly constant across system sizes in both median and variance. We can understand this behavior from the fact that the states in those families are quite restricted and possess a small number of degrees of freedom. For instance, both the states in  $\SC_{\rm coh}$ and $\SC_{\rm uent}$ have only two degrees of freedom (the two non-trivial Euler angles in the $SU(2)$ unitaries)\footnote{Take for instance a state $\ket{\psi}$ in $\SC_{\rm coh}$. As per Eq.~\eqref{eq:free-states-orbit},  we can express it--without loss of generality--as $\ket{\psi}=U|s,s\rangle$ with $U=e^{i \alpha J_z}e^{i \beta J_y}e^{i \eta J_z}$ and where $\eta$ and $\alpha$ are uniformly sampled according to the standard Haar measure $\sin(\beta)d\alpha d\beta d\eta$. Then, the action of $e^{i \eta J_z}$ over $\ket{s,s}$ only creates an unimportant global phase, showing that $\ket{\psi}$ is only parametrized by $\alpha$ and $\beta$. A similar argument can be given for the states in $\SC_{\rm uent}$.}. 

In addition, we can use Fig.~\ref{fig:averages} to get a notion of how distant the clusters of different families of states look through the optics of the QRT witnesses. In particular, by fixing a given problem size (i.e., a color), we can compare the values of the medians. Here we can see that for $n=3$, the states in $\SC_{{\rm imag}}$, $\SC_{{\rm real}}$ and $\SC_{{\rm Haar}}$ are all clustered together. However,  as the system size increases, the cluster of free states in the imaginarity QRT separates itself from the other two which remain close and tight (as expected from the results in Table~\ref{tab:Haar-tab}). 

While Fig.~\ref{fig:averages} gives us a high-level overview of the resourcefulness in the states (e.g., indicating that the states in $\SC_{\rm uent}$ appear to be less resourceful overall), a finer-grained analysis is needed. In the next section, we proceed to compare average values for pairs of QRTs. 

\subsection{QRT vs QRT relationships}

\begin{figure*}
    \centering
    \includegraphics[width=1\linewidth]{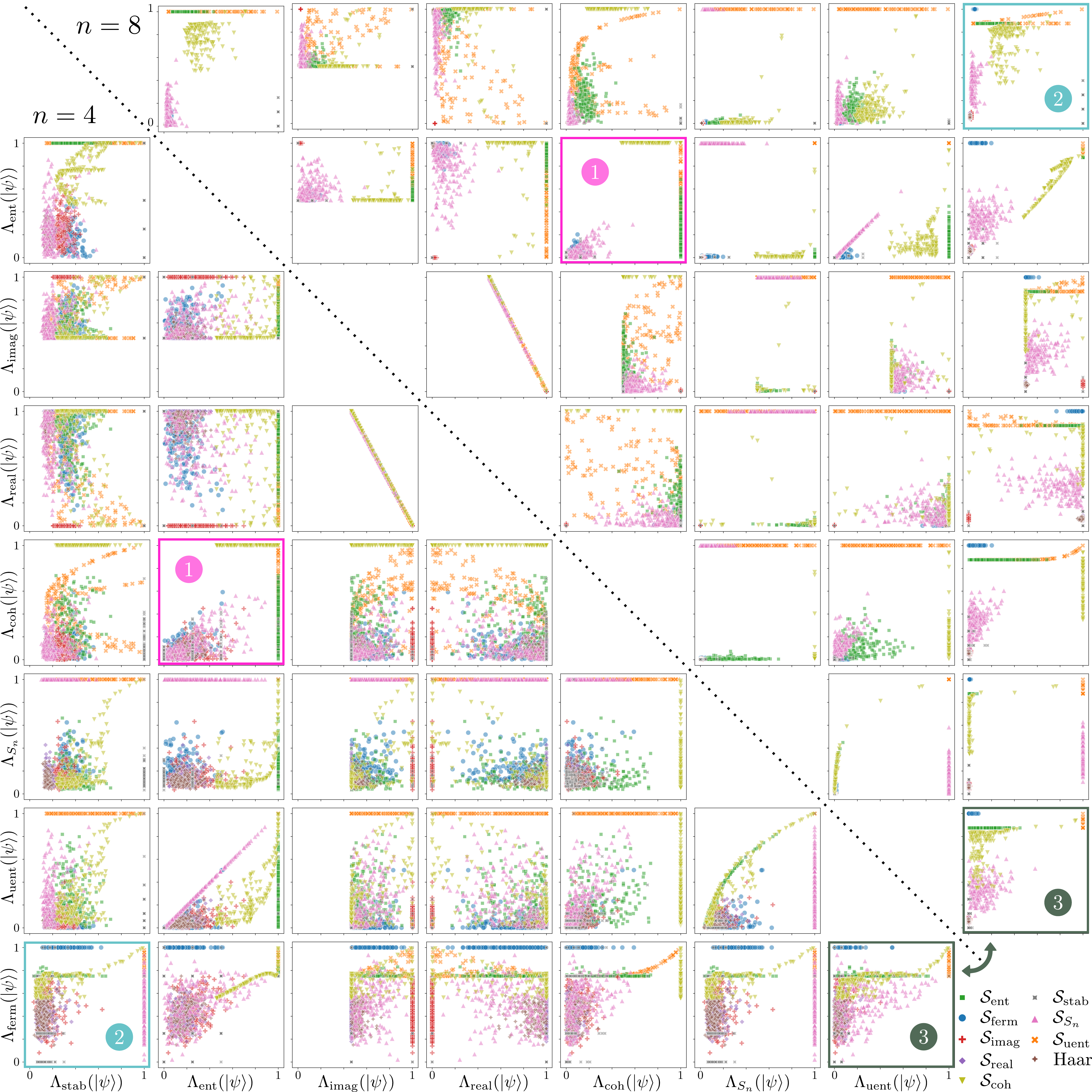}
    \caption{\textbf{Pair plot of QRT vs QRT witnesses.} We computed the witnesses for the states in the datasets on $n=4$ (lower-left half) and $n=8$ qubits (top-right half). These plots serve as a visual aid to look for any relationships between resource witnesses. As indicated by some example figures (with a thick colored edge and a number), the plots are reflected with respect to the diagonal. For instance, one can see how the relation between $\Lambda_{{\rm stab}}(\ket{\psi})$ and $\Lambda_{{\rm ferm}}(\ket{\psi})$ changes with system size by comparing the bottom left plot, and the top-right plot. }
    \label{fig:4-8-qubits}
\end{figure*}

We next turn our analysis by showing in Fig.~\ref{fig:4-8-qubits} a representative grid of pair plots comparing all pairs of resource witnesses for the considered states on $n=4$ and $n=8$ qubits.

Let us discuss how the numerical observations showcase the sanity of our theoretical results.  For example, one can observe that, as predicted by Proposition~\ref{prop:cliff-values}, the stabilizer states in $\SC_{{\rm stab}}$ can only take discrete values for the witnesses of entanglement, fermionic non-Gaussianity, imaginarity, realness and non-uniform entanglement.  Next, we see a negative linear relation in the plot of  $\Lambda_{{\rm real}}$ vs $\Lambda_{{\rm imag}}$ which follows from Eq.~\eqref{eq:complement-real-imag} which states that these QRTs are incompatible: the more real a state is, the less imaginary it can be, and vice-versa. Then, one can also observe a  clear positive trend between $\Lambda_{S_n}$ and $\Lambda_{{\rm uent}}$. This follows from the fact that low resource states in the QRT of non-uniform entanglement are also low resource states in the QRT of $S_n$-equivariance. Hence, as states become more and more uniform their $S_n$-equivariance must also increase. When focusing on the  plot of $\Lambda_{{\rm ent}}$ vs $\Lambda_{{\rm uent}}$, one can see in action the inequality stated in Theorem~\ref{theo:uent-ent-bound} as well as the witness equality for states in $\SC_{S_n}$. In addition, we find that as the system size increases, the value of $\Lambda_{\rm ferm}$ for the states in $\SC_{\rm ent}$ increases, indicating that tensor product states become closer to being Gaussian states as per Proposition~\ref{prop:haar-sep}. Lastly, we note that as per Corollary~\ref{cor:uent-fer-1} we observe a clear lower bound on the value of $\Lambda_{\rm ferm}$ for the states in $\SC_{\rm uent}$ and $\SC_{\rm ent}$.

More generally, we can use Fig.~\ref{fig:4-8-qubits} to get a sense of the distribution and trends of each QRT's free states across all other QRTs. For instance, one can see that many families of states are clustered to specific areas of the plots. Indeed, as system size grows, the clusters tend to become more compact, concentrated towards high-resourcefulness (assuming of course the state is not free for the considered QRT), and the plots sparser (as also showcased in Fig.~\ref{fig:averages}). Crucially, the observation that most clusters shift towards highly-resourceful is a reflection of the fact that as the system size increases, the portion of free states for each QRT becomes negligible as compared to the full Hilbert space dimension. Hence, it is completely expected that a free state of one QRT will tend to be highly resourceful when examined through another QRT's witness.

A finer-grain analysis of the pair plots reveals some clearer trends between resources when we isolate a specific family. As an example, the states in $\SC_{S{\rm coh}}$ on the $\Lambda_{S_n}$ vs $\Lambda_{{\rm uent}}$ plot form a curve that appears to upper bound the witness functional relation. Here, we find that $\SC_{\rm coh}$ states possess low amounts of entanglement and fermionic non-Gaussianity--as measured through $\Lambda_{\rm ent}$ and $\Lambda_{\rm ferm}$ respectively-- and even appear to get less resourceful with system size. Then, one can also see that the fermionic resourcefulness  in $\SC_{S_n}$ does  not appear to converge to maximal resourcefulness as it does for the other QRTs, potentially implying that while $S_n$-equivariant states are non-Gaussian, they are also not fully minimizing $\Lambda_{\rm ferm}$. Next, we find that the wide distribution of $\Lambda_{\rm coh}$ for tensor product states in $\SC_{\rm ent}$ remains stable with system size, as the cluster remains widely spread across all considered values of $n$. Here we also note that as the fermionic non-Gaussianity increases (i.e., $\Lambda_{\rm ferm}$ decreases), the variance of $\Lambda_{\rm ent}$, $\Lambda_{\rm uent}$ and $\Lambda_{\rm coh}$ also decreases. The previous behavior could imply that highly non-Gaussian states must also have large amounts of entanglement, non-uniform entanglement and non-coherence. Then, when examining the pair plot one can see that the states in $\SC_{\rm uent}$ and $\SC_{\rm coh}$ tend to form simple well defined patterns. This is an artifact of the fact that all the states in these datasets can ultimately be parametrized by two Euler angles, leading to a relatively simple family of states (see e.g., the result in Proposition~\ref{prop:uent-ferm-ent}). To finish, we highlight the fact that while the Clifford states only take the values given by our theoretical results, the distribution of these values is not uniform. As expected, the distribution tends to be heavily skewed toward high resource values. We leave the specific analytical study and interpretation of such patterns for future work, but we nevertheless consider it import to highlight them.

\begin{table*}[h!t]
    \text{(a) $n=4$ qubits}\\
    \vspace{.2cm}
    \begin{tabular}{| c|c|c|c|c|c|c|c|c |}
    \hline
        QRT Witness & $\Lambda_{\rm ent}$ & $\Lambda_{\rm ferm}$ & $\Lambda_{\rm imag}$ & $\Lambda_{\rm real}$ & $\Lambda_{\rm coh}$ & $\Lambda_{\rm stab}$ & $\Lambda_{S_n}$ & $\Lambda_{\rm uent}$ \\
    \hline
      $\Lambda_{\rm ent}$   & 1** & \textcolor{brown}{0.275463**} & \textcolor{brown}{0.064775**}   & \textcolor{brown}{-0.064775**} & \textcolor{brown}{0.341529**} & \textcolor{brown}{0.006167}     & \textcolor{brown}{-0.020178}    & \textcolor{brown}{0.512445**}  \\
      $\Lambda_{\rm ferm}$  & 0.419649** & 1** & \textcolor{brown}{0.042085*}    & \textcolor{brown}{-0.042085*}  & \textcolor{brown}{0.181046**} & \textcolor{brown}{0.003349}     & \textcolor{brown}{-0.028957}    & \textcolor{brown}{0.116457**}  \\
      $\Lambda_{\rm imag}$  & 0.065836** & 0.022969 & 1**                        & \textcolor{brown}{-1**}        & \textcolor{brown}{0.043912*}  & \textcolor{brown}{0.000927}     & \textcolor{brown}{0.031568}     & \textcolor{brown}{0.060448**}  \\
      $\Lambda_{\rm real}$  & -0.065836** & -0.022969 & -1**                      & 1**                            & \textcolor{brown}{-0.043912*} & \textcolor{brown}{-0.000927}    & \textcolor{brown}{-0.031568}    & \textcolor{brown}{-0.060448**} \\
      $\Lambda_{\rm coh}$   & 0.612412** & 0.314491** & 0.040090*                 & -0.040090*                     & 1**                           & \textcolor{brown}{-0.024349}    & \textcolor{brown}{0.064772**}   & \textcolor{brown}{0.349989**}  \\
      $\Lambda_{\rm stab}$  & 0.050049** & 0.048205** & -0.044971*                & 0.044971*                      & 0.069263**                    & 1**                             & \textcolor{brown}{-0.039782*}   & \textcolor{brown}{-0.004035}  \\
      $\Lambda_{S_n}$       & 0.313742** & 0.151049** & 0.151844**                & -0.151844**                    & 0.132287**                    & -0.098848**                     & 1**                             & \textcolor{brown}{0.376538**}  \\
      $\Lambda_{\rm uent}$  & 0.734384** & 0.320246** & 0.150656**                & -0.150656**                    & 0.421653**                    & 0.043931*                       & 0.746843**                      & 1**                            \\
    \hline
    \end{tabular}
    
    \vspace{.2cm}\text{(b) $n=8$ qubits}\\
    \vspace{.2cm}
    \begin{tabular}{| c|c|c|c|c|c|c|c|c|}
    \hline
        QRT Witness 
        & $\Lambda_{\rm ent}$ 
        & $\Lambda_{\rm ferm}$ 
        & $\Lambda_{\rm imag}$ 
        & $\Lambda_{\rm real}$ 
        & $\Lambda_{\rm coh}$ 
        & $\Lambda_{\rm stab}$ 
        & $\Lambda_{\rm S_n}$ 
        & $\Lambda_{\rm uent}$  
        \\
    \hline
    $\Lambda_{\rm ent}$   
        & 1** 
        & \textcolor{brown}{0.172454}** 
        & \textcolor{brown}{0.049716}** 
        & \textcolor{brown}{-0.049716}** 
        & \textcolor{brown}{0.268706}** 
        & \textcolor{brown}{0.029725} 
        & \textcolor{brown}{0.027503} 
        & \textcolor{brown}{0.380914}** 
        \\
    $\Lambda_{\rm ferm}$  
        & 0.719580** 
        & 1** 
        & \textcolor{brown}{0.015826} 
        & \textcolor{brown}{-0.015826} 
        & \textcolor{brown}{0.060526}** 
        & \textcolor{brown}{-0.022331} 
        & \textcolor{brown}{0.025334} 
        & \textcolor{brown}{0.058096}** 
        \\
    $\Lambda_{\rm imag}$  
        & -0.103883** 
        & -0.205026** 
        & 1** 
        & \textcolor{brown}{-1}** 
        & \textcolor{brown}{0.015411} 
        & \textcolor{brown}{-0.021483} 
        & \textcolor{brown}{-0.017386} 
        & \textcolor{brown}{-0.024381} 
        \\
    $\Lambda_{\rm real}$  
        & 0.103883** 
        & 0.205026** 
        & -1** 
        & 1** 
        & \textcolor{brown}{-0.015411} 
        & \textcolor{brown}{0.021483} 
        & \textcolor{brown}{0.017386} 
        & \textcolor{brown}{0.024381} 
        \\
    $\Lambda_{\rm coh}$   
        & 0.669931** 
        & 0.481033** 
        & -0.077665** 
        & 0.077665** 
        & 1** 
        & \textcolor{brown}{0.009427} 
        & \textcolor{brown}{-0.018924} 
        & \textcolor{brown}{0.187599}** 
        \\
    $\Lambda_{\rm stab}$  
        & 0.031532 
        & -0.121830** 
        & -0.171714** 
        & 0.171714** 
        & 0.114453** 
        & 1** 
        & \textcolor{brown}{-0.027492} 
        & \textcolor{brown}{0.036516} 
        \\
    $\Lambda_{\rm S_n}$   
        & 0.289907** 
        & 0.224566** 
        & 0.110812** 
        & -0.110812** 
        & 0.082484** 
        & -0.131223** 
        & 1** 
        & \textcolor{brown}{0.138887}** 
        \\
    $\Lambda_{\rm uent}$  
        & 0.656419** 
        & 0.464923** 
        & 0.092123** 
        & -0.092123** 
        & 0.340088** 
        & 0.004385 
        & 0.711510** 
        & 1** 
        \\
    \hline
    \end{tabular}
        \caption{\textbf{Correlation table for the (a) $n=4$ and (b) $n=8$ datasets.} The lower-left (top-right) half is the Pearson correlation coefficient, $r$ value of Eq.~\eqref{eq:pearson-coeff} over the free-state dataset (over a set of 1800 random Haar states; colored brown). Values with a single * indicate results which are statistically significant at the 90\% confidence level. A double ** indicates statistical significance at the 95\% confidence level.}
    \label{tab:compare_reordered-4}
\end{table*}

\begin{table*}[t]
\begin{minipage}{.45\textwidth}
    \text{(a) $n=4$ qubits}\\
    \vspace{.2cm}
    \centering
    \begin{tabular}{| c | c | c | c | c | c | c | c | c |}
    \hline
         & $\Lambda_{\rm ent}$ & $\Lambda_{\rm ferm}$ & $\Lambda_{\rm imag}$ & $\Lambda_{\rm real}$ & $\Lambda_{\rm coh}$ & $\Lambda_{\rm stab}$ & $\Lambda_{\rm S_n}$ & $\Lambda_{\rm uent}$ \\
    \hline
    $\SC_{\rm ent}$   
        & --       & 86.6    & 62.3    & 51.9    & 78.0    & 73.7    & 43.0    & 79.3  \\
    $\SC_{\rm ferm}$ 
        & 34.2    & --       & 67.0    & 47.2    & 45.4    & 70.4    & \textcolor{teal}{77.1}$^{\uparrow}$ & 25.7  \\
    $\SC_{\rm imag}$ 
        & 53.4    & 29.9    & --       & \textcolor{red}{0.5}$^{\downarrow}$  & 38.4    & 53.0    & 50.9    & 41.1  \\
    $\SC_{\rm real}$ 
        & 39.1    & \textcolor{red}{24.8}$^{\downarrow}$  & \textcolor{red}{8.1}$^{\downarrow}$  & --       & 30.4    & 13.8    & 45.2    & 35.8  \\
    $\SC_{\rm coh}$  
        & \textcolor{teal}{98.0}$^{\uparrow}$  & 69.2    & 47.64    & 64.8    & --       & \textcolor{teal}{81.5}$^{\uparrow}$  & 48.9    & 78.2  \\
    $\SC_{\rm stab}$ 
        & \textcolor{red}{30.7}$^{\downarrow}$  & 34.3    & 15.5    & \textcolor{teal}{91.7}$^{\uparrow}$  & \textcolor{red}{24.4}$^{\downarrow}$  & --       & 44.6    & \textcolor{red}{24.2}$^{\downarrow}$  \\
    $\SC_{\rm S_n}$  
        & 53.6    & 40.8    & \textcolor{teal}{73.7}$^{\uparrow}$  & 40.5    & 54.8    & 30.5    & --       & \textcolor{teal}{80.0}$^{\uparrow}$  \\
    $\SC_{\rm uent}$
        & --       & \textcolor{teal}{88.8} $^{\uparrow}$ & 70.7    & 43.5    & \textcolor{teal}{90.8}$^{\uparrow}$  & 66.3    & --       & --     \\
    Haar          
        & 40.6    & 25.2    & 54.6    & 59.5    & 37.4    & \textcolor{red}{10.3}$^{\downarrow}$  & \textcolor{red}{40.0}$^{\downarrow}$  & 35.4  \\
    \hline
    \end{tabular}
        \end{minipage}
    \begin{minipage}{.49\textwidth}
    \text{(b) $n=8$ qubits}\\
    \vspace{.2cm}
    \begin{tabular}{| c | c | c | c | c | c | c | c | c |}
    \hline
         & $\Lambda_{\rm ent}$ & $\Lambda_{\rm ferm}$ & $\Lambda_{\rm imag}$ & $\Lambda_{\rm real}$ & $\Lambda_{\rm coh}$ & $\Lambda_{\rm stab}$ & $\Lambda_{\rm S_n}$ & $\Lambda_{\rm uent}$ \\
    \hline
    $\SC_{\rm ent}$   
        & --       & 90.1    & 62.5    & 51.7    & 86.6    & 80.6    & 70.3    & 84.9  \\
    $\SC_{\rm ferm}$ 
        & 69.9    & --       & 63.2    & 50.9    & 58.9    & 54.8    & \textcolor{teal}{85.5}$^{\uparrow}$ & 45.6  \\
    $\SC_{\rm imag}$ 
        & 40.2    & 25.6    & --       & \textcolor{red}{0.1}$^{\downarrow}$  & 29.3    & 30.1    & 36.2    & 33.3  \\
    $\SC_{\rm real}$ 
        & 28.4    & 21.6  & \textcolor{red}{12.4}$^{\downarrow}$  & --       & 28.9    & 7.7     & \textcolor{red}{26.4}$^{\downarrow}$  & 29.6  \\
    $\SC_{\rm coh}$  
        & \textcolor{teal}{99.9}$^{\uparrow}$  & 71.6    & 26.6    & 82.4    & --       & \textcolor{teal}{96.1}$^{\uparrow}$  & 75.8    & 85.1 \\
    $\SC_{\rm stab}$ 
        & \textcolor{red}{6.4}$^{\downarrow}$   & \textcolor{red}{16.1}$^{\downarrow}$    & 13.8    & \textcolor{teal}{93.3}$^{\uparrow}$  & $\textcolor{red}{7.2}^{\downarrow}$     & --       & 26.4    & \textcolor{red}{4.4}$^{\downarrow}$   \\
    $\SC_{\rm S_n}$  
        & 76.0    & 57.2    & \textcolor{teal}{89.3}$^{\uparrow}$ & 24.9    & 66.7    & 45.2    & --       & \textcolor{teal}{85.6}$^{\uparrow}$ \\
    $\SC_{\rm uent}$
        & --       & \textcolor{teal}{94.9}$^{\uparrow}$ & 76.0    & 37.8    & \textcolor{teal}{95.3}$^{\uparrow}$ & 78.5    & --       & --     \\
    Haar          
        & 28.7    & 22.5    & 55.7    & 58.5    & 26.6 & \textcolor{red}{6.5}$^{\downarrow}$  & 29.1    & 31.1  \\
    \hline
    \end{tabular}
    \end{minipage}
    \caption{\textbf{Pairwise comparison results for the (a) $n=4$ and (b) $n=8$ datasets.} The columns represent the target QRT and the rows represent the wins of each family with respect to the target QRT. The target QRT's free states are excluded. Highlighted in red with a down-pointing arrow (teal with an up-pointing arrow) are the overall losing (winning) family of states, which can therefore be considered the most (least) resourceful according to the target QRT.}
    \label{tab:compare_reordered}
\end{table*}

To supplement our analysis, we present in Table~\ref{tab:compare_reordered-4} the correlation for the eight witnesses considered. The lower left-half presents correlations for the states in the original aforementioned dataset, while the top-right half (colored data) was obtained from an additional dataset of equal size (i.e., $1800$ states) consisting of only Haar random states.  We include this additional dataset in our analysis to detect how much the free-state dataset correlations deviate from those arising from full Haar random states. For completeness, we recall that the Pearson correlation coefficient, $r$, between two variables for a given set of data points $\{(x_i,y_i)\}_{i=1}^N$ is given by

\footnotesize
\begin{equation}\label{eq:pearson-coeff}
    r = 
 \sum_{i} \left(x_i - \bar{x}\right)\left(y_i - \bar{y}\right)/
      \left(\left(\sum_{i}(x_i - \bar{x})^2\right)\left(\sum_{i}(y_i - \bar{y})^2\right)\right)^{1/2},
\end{equation}
\normalsize
where $\bar{x}=\frac{1}{N}\sum_{i=1}^Nx_i$, $\bar{y}=\frac{1}{N}\sum_{i=1}^Ny_i$, is a quantity that measures linear correlation between two sets of data. Therefore, a  value of $r=1$ indicates a perfect positive linear relation between $x$ and $y$; and a perfect negative linear relation for $r=-1$.

As we can see from the tables, while most correlations are weak, they are mostly statistically significant, indicating that our dataset is of appropriate size. The only exceptions among the free state dataset are between $\Lambda_{\rm stab}$  and $\Lambda_{\rm ent}$ or $\Lambda_{\rm uent}$. Notably, for the free-state dataset (lower-left sections of the tables) there are some large correlations among the witnesses  which persist when going from $n=4$ to $n=8$ qubits, and which are larger than those arising for Haar random states. This seems to suggest that certain free states induce some correlations in the QRT witnesses that random states do not reproduce. For instance, one can find relatively large positive correlations between $\Lambda_{\rm ent}$, $\Lambda_{\rm ferm}$, $\Lambda_{\rm coh}$ and $\Lambda_{\rm uent}$. When examining the Haar data, these correlations become much weaker and in fact for $n=8$ the correlation between $\Lambda_{\rm ferm}$ and $\Lambda_{\rm uent}$ is actually negative. Our results also reveal that  some correlations seem to increase or decrease with system size. For example the correlation between $\Lambda_{\rm ent}$ and $\Lambda_{\rm uent}$ is decreasing with $n$ whereas the correlation between $\Lambda_{\rm ent}$/$\Lambda_{\rm uent}$ and $\Lambda_{\rm ferm}$ is increasing. This last result, is possibly explained by our findings that tensor product states become closer to being Gaussian as the system size increases.

\subsection{Pairwise comparison}

In this section, we aim to determine which family of states in the dataset is the most and least resourceful for each QRT (of course, not counting the QRT's free states themselves). While one could draw conclusions from plots such as those in Fig.~\ref{fig:4-8-qubits}, the scattered data and existence of outliers makes it hard to obtain meaningful conclusions without further analysis. As such, we opt to obtain a definitive ranking via  standard pairwise comparison technique. Such analysis, will allow us to compare the resourcefulness of the  free state of all considered QRTs with that of Haar random states, potentially revealing if the naive assumption ``Haar random states should be the most resourceful across all metrics'' holds.

 In our context, a pairwise comparison involves first choosing a resourcefulness witness (for example, from $\Lambda_1$), then selecting two states from different families (which are not free for $\Lambda_1$)--say $\ket{\psi}\in\mathcal{S}_2$ and $\ket{\phi}\in\mathcal{S}_3$--and comparing their witness values. If $
\Lambda_1(\ket{\psi}) > \Lambda_1(\ket{\phi})$, QRT 2 scores one point and QRT 3 scores zero. In the event of a tie (i.e., $\Lambda_1(\ket{\psi}) = \Lambda_1(\ket{\phi}$ up to numerical precision), each QRT scores $\tfrac{1}{2}$ point. We repeat this procedure for every possible pair of states from different families, yielding up to $320,000$ total comparisons per witness. Finally, we convert each QRT’s total score into a win percentage out of the total number of matchups. A higher win percentage indicates that a family's states are closer to being free with respect to the chosen QRT; whereas the family with the lowest win percentage will contain the most resourceful states.

The results of the pairwise comparison are shown in Table~\ref{tab:compare_reordered}. The first notable result is that Haar random states are not the most resourceful states across all QRTs, only having the lowest win percentages for the QRTs of non-stabilizerness and $S_n$-equivariance ($n=4$) or spin coherence and non-stabilizerness ($n=8$); and consistently possessing the most non-stabilizer magic for both system sizes considered. Still, there is a trend where the win percentage of Haar random states decreases as the system size increases, potentially indicating that as $n$ grows Haar random states become more resourceful faster than the states in the other free-state families. Next, we see that stabilizer states tend to be quite resourceful. Indeed, they are the most entangled (and non-uniform entangled) states out of all considered families, with the $n=8$ showcasing that states in $\SC_{\rm stab}$ lose the pairwise comparisons by a landslide\footnote{This result further emphasizes the well known fact that entanglement is not the defining quality of non-classical quantum computation, as stabilizer states evolving through Clifford circuits can be efficiently simulated via the Gottesman-Knill theorem~\cite{aaronson2004improved}, but they also can exhibit high levels of multipartite entanglement.}. On the other hand, we find that tensor product states in $\SC_{\rm ent}$ and spin coherent states from $\SC_{\rm coh}$ typically have little resourcefulness across the considered QRTs, with spin coherent states actually being the less entangled of the lot.

\subsection{Principal component analysis}

\begin{figure}
    \centering
    \includegraphics[width=1\linewidth]{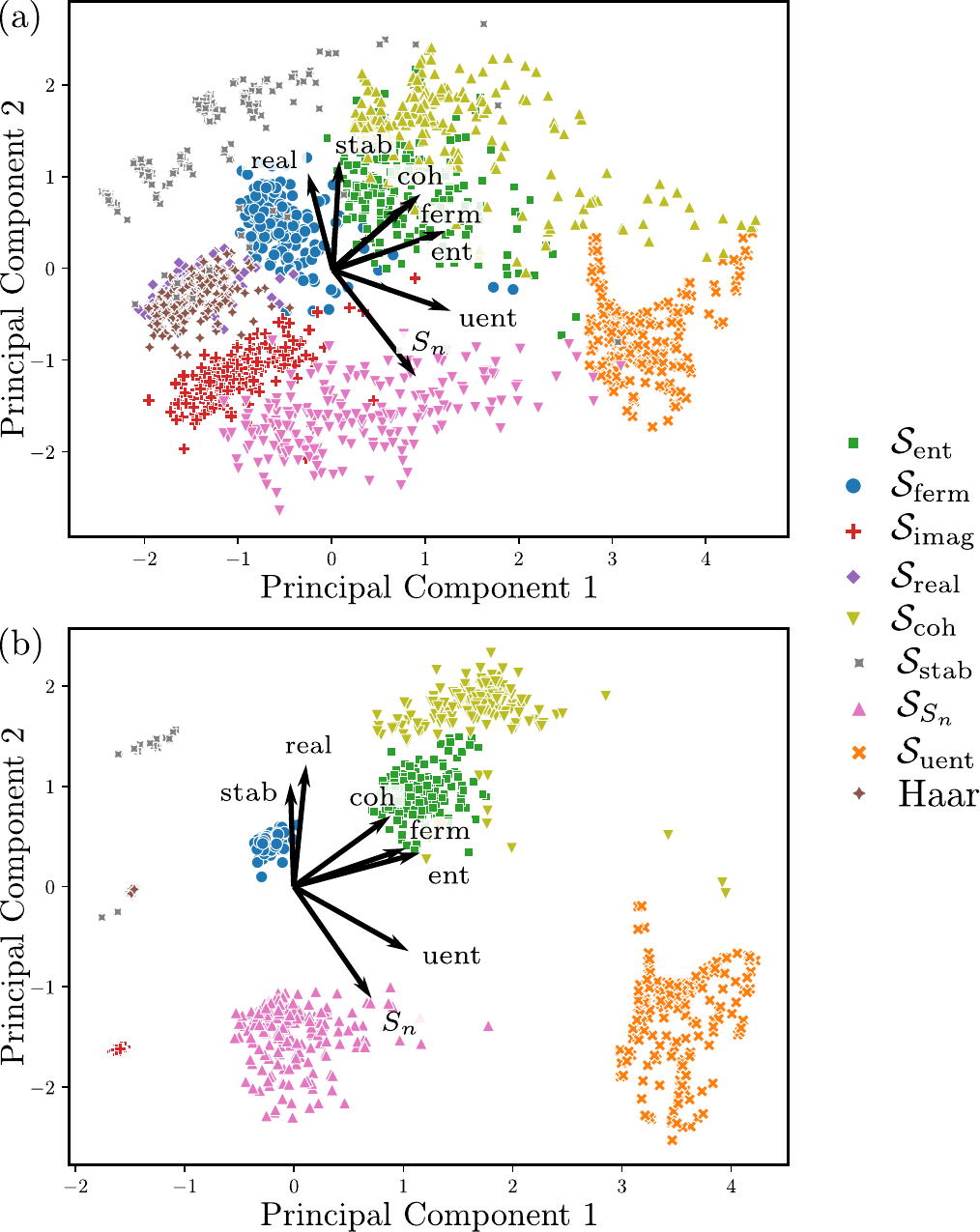}
    \caption{\textbf{Principal component plot for the (a) $n=4$ and (b) $n=8$ datasets.} We perform PCA analysis on the dataset (excluding $\Lambda_{\rm imag}$), allowing us to project each data vector $\vec{v}(\ket{\psi})$ onto the subspace spanned by the first two principal components. The black arrows represent the loading vectors for each QRT's witness.}
    \label{fig:pca}
\end{figure}

Up to this point our analysis has mainly focused on low-level statistics and pairwise comparisons between witnesses and different families of states. In this section we take a more holistic view and instead focus on all witness values for a given state. That is, for all states $\ket{\psi}$ in the dataset we analyze the information in the eight-dimensional vectors $\vec{v}(\ket{\psi})=(\Lambda_{\rm ent}(\ket{\psi}),\Lambda_{\rm ferm}(\ket{\psi}),\cdots,\Lambda_{\rm uent}(\ket{\psi}))$ through a principal component analysis (PCA) in $\mathbb{R}^8$. 

Here we recall that PCA is a widely used technique for visualizing and analyzing the structure of higp
h dimensional data. The procedure finds a linear combination of the variables (the eight resource witnesses in our case) which maximizes the variance in the dataset. This is achieved by taking the covariance matrix of the data and diagonalizing it. From there we can infer global properties of the dataset in the high dimensional space, based on the projection onto the subspace spanned by the principal component vectors. 

To begin, we report the portion of variance explained by each principal component for each dataset. For  $n=4$ we find 
\begin{equation}
    \begin{bmatrix}
    0.39 &
    0.17 &
    0.13 &
    0.12 &
    0.10 &
    0.05 &
    0.01
    \end{bmatrix}\,,
\end{equation}
whereas for $n=8$ we obtain
\begin{equation}
    \begin{bmatrix}
    0.41 &
    0.20 &
    0.14 &
    0.12 &
    0.06 &
    0.03 &
    0.01
    \end{bmatrix}\,.
\end{equation}
Notice that there are only seven entries whereas the vectors $\vec{v}(\ket{\psi})$ are eight dimensional. This is due to the fact that we excluded $\Lambda_{\rm imag}$ from the dataset due to its co-linearity with $\Lambda_{\rm real}$ (equivalently, we could have also excluded $\Lambda_{\rm real}$ and kept $\Lambda_{\rm imag}$). According to these vectors the first two principal components in the $n=4$ ($n=8$) case account for around $57\%$ ($62\%$) of the variance in the data. Thus there is still a large amount of information contained in the last five components.

Graphically, we present the projection of each vector $\vec{v}(\ket{\psi})$ onto the first two principal components in Fig.~\ref{fig:pca}. Here, we also depict the (rescaled) loading vectors for each considered QRT's resource witness. Notably, all loading vectors have a similar magnitude, indicating that the first two principal components have somewhat equal weights among the witnesses, and showing that the data is fairly isotropic. This essentially means low multicolinearity and that the variance accounted for by the first two principal components is evenly spread among all variables. Additionally, the witness vectors seem to be evenly spaced and not opposite one another, entailing that the resource witnesses are not  strongly correlated, at least within the first two principal components. The previous two statements are not quite as supported in the $n=8$ case and in fact we can see instances where loadings are pointing in similar directions. For example the ``${\rm ferm}$'' and ``${\rm ent}$'' loading vectors are quite similar, possibly explained by the fact that $\SC_{\rm ent}$ states converge to Gaussian with system size. One might then expect for the ``uent'' loading to also support this observation, however high uniformity is generally hard to achieve for all families of states, and thus the direction of $\Lambda_{\rm uent}$ is naturally quite distinct from the rest. Similarly the loadings for ``stab'' and ``real'' seem to converge on one another, possibly explained by the trend of $\SC_{\rm stab}$ states to have high $\Lambda_{\rm real}$ resourcefulness as evidence of Table~\ref{tab:compare_reordered}.

Next, one can readily observe that each QRT's vector points toward the respective set of free states, which follows from the definition of resource witness being maximized for the associated free states. However, the same does not occur with the direction associated with $\Lambda_{\rm ferm}$--within the first two principal components. Counterintuitively, the arrow labeled ``ferm'' points away from the projection onto the first two principal components of the states in $\SC_{\rm ferm}$. Such apparent discrepancy can be explained given that there exists many non-Gaussian states (e.g., those in $\SC_{\rm uent}$ and $\SC_{\rm ent}$) whose value of $\Lambda_{\rm ferm}$ can be almost maximal (see Fig.~\ref{fig:4-8-qubits}); a phenomenon not arising for other QRT's witnesses. Thus, it is natural that the direction of the ``ferm'' vector does not point towards the cluster of $\SC_{\rm ferm}$  states. By considering higher order principal components, the direction of $\Lambda_{\rm ferm}$ points to the $\SC_{\rm ferm}$ states (this reflects the incomplete variance accounted for by the first two principal components).

Finally, we find it interesting to note that while PCA is not typically used for clustering (in the machine learning sense), the resulting plots in Fig.~\ref{fig:4-8-qubits} clearly show the data as being highly clustered and increasing in separation with system size. The plots can thus be used to measure how compact these clusters are in general. Specifically, the principal component projection plot for the $n=8$ data offers valuable insight into the landscape of the data. Indeed we can see that $\SC_{\rm uent}$, $\SC_{\rm coh}$ and $\SC_{\rm S_n}$ states seem to have the highest variance whereas $\SC_{\rm real}$, $\SC_{\rm imag}$ and Haar random states are quite compact, supporting similar claims made by observation of Fig.~\ref{fig:averages}.  Furthermore, we can assess which clusters are ``close'' with respect to the first two principal components. For example, the Haar random and the $\SC_{\rm real}$ clusters almost completely overlap and become indistinguishable. On the other hand, the $\SC_{\rm uent}$ and $\SC_{\rm imag}$ clusters are quite isolated from the rest of the data (see also Fig.~\ref{fig:averages}). The fact that classes are generally spread out indicates that the first two principal components are capturing meaningful variance. Additionally, we can use these plots to better visualize outliers within the data and within each family of state. In Fig.~\ref{fig:pca}(b) one can identify two points in $\SC_{\rm stab}$ which stray far from the center of the cluster and seem closer to $\SC_{\rm real}$ and Haar. Similarly $\SC_{\rm coh}$ has several outliers which merge into the clusters of $\SC_{\rm ent}$ and $\SC_{\rm uent}$.

\section{Conclusion}\label{sec:conclusions}

In this work we performed an initial comprehensive analysis on the resourcefulness of the free states in--through the witnesses of--the QRTs of multipartite entanglement, fermionic non-Gaussianity, imaginarity, realness, spin coherence, Clifford non-stabilizerness, $S_n$-equivariance and non-uniform entanglement. Our sprawling collection of theoretical and numerical findings further highlights the rich and extremely complex behavior of the resources that a given state can possess. Indeed, we find that even the simplest of states, e.g., stabilizer, tensor product or Gaussian, can be extremely resourceful through the optics of other theories, therefore cementing the realization that resource is in the eye of the (QRT) beholder. 

By scaling our numerical analysis we can identify certain patterns which either vanish, or that appear to persist with problem size. While our goal was not to rigorously determine if these observations are a reflection of a deeper theoretical result, or mere finite-size effects, we nevertheless believe that their analysis could lead to further insights regarding QRTs. 

To finish, we highlight the fact that this work has only  scratched the surface of a cross-examination of a state's resourcefulness. Namely, we have focused on the simplest purity-like witnesses (e.g., most of which are polynomial of order two in the matrix entries of the state's density matrix). Different witnesses can capture different properties of the same resource, and we expect that performing a similar analysis to the one presented in this work, but using different witnesses, could very well lead to completely different patterns in the results. Similarly, we have only considered resources in pure states. While mixed state monotones are known for entanglement~\cite{horodecki2009quantum}, and some have been defined for magic~\cite{leone2024stabilizer}, it is generally not obvious how to define resource non-increasing channels for general QRTs. Indeed, already for the theory of magic there are both axiomatic and operational definition to non-unitary free operations, and it is has been reported that these  do not match~\cite{heimendahl2022axiomatic}. Given that there is still ambiguity in the definition of free (quantum channel) operations, and in how to properly define monotones (see~\cite{diaz2025unified} for a proposal for a unification), we leave this research direction open to the community.

\section{Acknowledgments}
AED and MC were supported by the U.S. Department of Energy (DOE), Office of Science, National Quantum Information Science Research Centers, Quantum Science Center. P. Braccia and P. Bermejo were supported by the Laboratory Directed Research and Development (LDRD) program of Los Alamos National Laboratory (LANL) under project numbers 20230527ECR and 20230049DR. P. Bermejo also acknowledges constant support from DIPC.  NLD was supported by the Center for Nonlinear Studies at LANL. AAM  acknowledges support by  U.S. DOE through a quantum computing program and a rapid response sponsored by the LANL Information Science \& Technology Institute (ISTI), and by the German Federal Ministry for Education and Research (BMBF) under the project FermiQP. MC was  supported by LANL ASC Beyond Moore’s Law project.   Research presented in this article was supported by the National Security Education Center ISTI using the LDRD program of LANL project number 20240479CR-IST.

\bibliography{quantum}

\clearpage
\newpage
\appendix
\onecolumngrid
\section*{APPENDICES FOR ``\textit{Analyzing the free states of one quantum resource theory as resource states of another}''}

\section{Weingarten calculus}\label{app:weingarten}

In this section we present a brief review of the Weingarten calculus. For a more detailed  description, we refer the reader to~\cite{mele2023introduction}. In particular, given a group $G\subseteq\text{U}(2^n)$, we are interested in computing the $t$-th fold twirl of an operator $X\in\LC(\HC^{\otimes t})$, where $\LC(\HC^{\otimes t})$ denotes the space of linear operators acting on $t$ copies of the Hilbert space. Namely, we want to evaluate the quantity
\begin{equation}
    \tau^{(t)}_G[X]=\int_Gd\mu(U) U^{\otimes t} X (U\ad)^{\otimes t}\,,
\end{equation}
where $d\mu$ denotes the Haar measures over $G$. Crucially, it is well known that the twirl is a projector on the the $t$-th fold commutant of $G$, denoted as ${\rm comm}^{(t)}(G)$, given by 
\begin{equation}
    {\rm comm}^{(t)}(G)=\{A\in \LC(\HC^{\otimes t})\,\,|\,\, [A,U^{\otimes t}]=0\,,\quad \forall U\in G\}.
\end{equation}
Using the previous fact, we can then explicitly evaluate the $t$-th fold twirl as~\cite{mele2023introduction}
\begin{equation}\label{eq:weingarten}
    \tau^{(t)}_G[X]=\sum_{\mu,\nu=1}^{\dim({\rm comm}^{(t)}(G)) } (W^{-1})_{\mu \nu}\Tr[M_\mu\ad X]M_\nu \,,
\end{equation}
where $\{B_\mu\}_{\mu=1}^{\dim({\rm comm}^{(t)}(G)) }$ forms a basis for ${\rm comm}^{(t)}(G)$ and $W$ is the associated Gram matrix with entries $(W)_{\mu \nu}=\Tr[M_\mu\ad M_\nu]$.

\section{Derivation of our theoretical results}

We here present the proof of our theoretical results.

\subsection{Expected witness values for Haar random states, proof of results in Table~\ref{tab:Haar-tab}}

Here we provide proofs for the results in Table~\ref{tab:Haar-tab} for Haar random states. To begin, we recall that a Haar random state $\ket{\psi_H}$ over the $2^n$-dimensional Hilbert space $\HC$ can be obtained by evolving the reference state  $U\ket{0}^{\otimes n}$ with a unitary $U$ sampled according to the Haar measure over $\text{U}(2^n)$. As such, we henceforth define expectation values of Haar random states as $\mathbb{E}_{\HC}[f(\dya{\psi})]=\mathbb{E}_{U\sim\text{U}(2^n)}[f(U\dya{0}^{\otimes n}U\ad)]$. The previous allows us to map the problem of computing expectation values over states to that of computing expectation values over unitaries, and therefore using Eq.~\eqref{eq:weingarten}. In particular, one  finds that~\cite{harrow2013church} 
\begin{equation}\label{eq:Haar}
    \mathbb{E}_{\HC}[\dya{\psi_H}^{\otimes 2}]=\mathbb{E}_{\HC}[U^{\otimes 2}\dya{0}^{\otimes 2n}(U\ad)^{\otimes 2}]=\frac{\id\otimes \id +{\rm SWAP}}{2^n(2^n+1)}\,,
\end{equation}
where ${\rm SWAP}=\sum_{i,j=1}^{2^n}\ket{ij}\bra{ji}$ is the operator that swaps the two copies of the Hilbert space $\HC$. Above, we used the fact that  ${\rm comm}^{(t)}(\text{U}(2^n))={\rm span}_{\mathbb{C}}\{R(S_t)\}$, where $R$ denotes the system permuting representation of the Symmetric group $S_t$. That is
\begin{equation}\label{eq:prem}
R(\sigma \in S_t)|i_1\cdots i_t\rangle=    |i_{\sigma^{-1}(1)}\cdots i_{\sigma^{-1}(t)}\rangle\,,
\end{equation}
and hence ${\rm comm}^{(2)}(\text{U}(2^n))={\rm span}_{\mathbb{C}}\{\id\otimes \id,\SWAP\}$.

Equipped with the previous result, we can readily find that, for QRTs whose resourcefulness witnesses can be expressed as
\begin{equation}
    \Lambda_{\rm qrt}(\ket{\psi})=C\sum_{P\in\PC} \Tr[\dya{\psi}P]^2\,,
\end{equation}
for a set of orthonormal, traceless Hermitian operators $\PC$, it holds that
\begin{equation}\label{eq:group-witness}
    \mathbb{E}_{\HC}[\Lambda_{\rm qrt}(\ket{\psi_H})]=C\frac{\dim(\PC)}{2^n(2^n+1)}\,,
\end{equation}
which easily follows from the assumed properties of $\PC$ and the known trace equality $\Tr[(A\otimes B){\rm SWAP}]=\Tr[AB]$.
Notice that this is the case of all the QRTs considered in this paper except for the Clifford non-stabilizerness. Indeed, in all cases excluding Clifford non-stabilizerness, the proposed resourcefulness witness takes the form above for a set $\PC$ given by some subset of Pauli operators. Notice that the latter are orthogonal but non-normalized; i.e., for any two Pauli operators $P_i, P_j$ we have $\Tr[P_iP_j]=2^n\delta_{i,j}$, which immediately leads to the only formula we need for the witnesses studied in this manuscript
\begin{equation}\label{eq:haar-witness}
    \mathbb{E}_{\HC}[\Lambda_{\rm qrt}(\ket{\psi_H})]=C\frac{\dim(\PC)}{(2^n+1)}\,.
\end{equation}

\subsubsection{Entanglement}
The  set of operators $\PC_{\rm ent}$ defining the resource witness for the entanglement QRT in a Hilbert space of $n$ qubits has size $3n$. Using Eq.~\eqref{eq:haar-witness} it follows that 
\begin{equation}\label{eq:haar-ent}
    \mathbb{E}_{\HC}[\Lambda_{{\rm ent}}(\ket{\psi_H})] = \frac{3}{2^n+1}\,.
\end{equation}

\subsubsection{Fermionic non-Gaussianity}

In this case the set $\PC_{\rm ferm}$ is given by the $n$-qubit Pauli operators resulting from products of two distinct Majorana operators. Since there are $2n$ of the latter it follows that $\dim(\PC_{\rm ferm})=\binom{2n}{2}=n(2n-1)$, hence
\begin{equation}
    \mathbb{E}_{\HC}[\Lambda_{{\rm ferm}}(\ket{\psi_H})] = \frac{2n - 1}{2^n + 1}\,.
\end{equation}

\subsubsection{Imaginarity}
In this case, the set $\PC_{sym}$ is given by all the Pauli operators comprising an even number of $Y$'s. Its size can be readily found to be $\dim(\PC_{sym}) = \frac{1}{2}(4^n+2^n-2)$. Thus

\begin{equation}
    \mathbb{E}_{\HC}[\Lambda_{{\rm imag}}(\ket{\psi_H})] = \frac{2+2^n}{2+2^{n+1}}\,.
\end{equation}

\subsubsection{Realness}

For the QRT of realness, the set $\PC_{asym}$ is given by all the Pauli operators comprising an odd number of $Y$'s. One finds that $\dim(\PC_{asym}) = 2^{n-1}(2^n-1)$, leading to
\begin{equation}
    \mathbb{E}_{\HC}[\Lambda_{{\rm real}}(\ket{\psi_H})] = \frac{2^n-1}{2^n+1} = \tanh(\frac{n}{2}\ln(2))\,.
\end{equation}

\subsubsection{Spin coherence}

Notice that in the spin coherence QRT we have $\PC_{\rm coh} = \{S_x,S_y,S_z\}$, the three components of the $s$-spin operator, where $s=\frac{2^n-1}{2}$. Now these operators are not Pauli operators, thus we need to adjust the formula in Eq.~\eqref{eq:haar-witness}. Particularly, we now have $\Tr[S_iS_j]=\frac{1}{3}(4^n-1)2^{n-2}\delta_{i,j}$, hence together with $\dim(\PC_{\rm coh})=3$ we get that
\begin{equation}
    \mathbb{E}_{\HC}[\Lambda_{{\rm coh}}(\ket{\psi_H})] = \frac{(4^n-1)2^{n-2}}{2^n\left(\frac{2^n-1}{2}\right)^2(2^n+1)}=\frac{2^n-1}{(2^n+1)^2}\,.
\end{equation}

\subsubsection{Clifford non-stabilizerness}
The Haar average of the Clifford non-stabilizerness of a random $n$-qubit quantum state is more complicated, as $\Lambda_{\rm stab}(\ket{\psi})$ results from a summation of quartic overlaps $\Tr[\dya{\psi}P]^4$ over the set $\PC_{\rm stab}$ comprising all traceless Pauli operators. 
However, we can resort again to Weingarten calculus~\cite{mele2023introduction} and Eq.~\eqref{eq:weingarten} to find, on average over a Hilbert space $\HC$ of dimension $2^n$, the following quantity
\begin{equation}
    \mathbb{E}_{\HC}[\dya{\psi_H}^{\otimes t}]=\frac{P_{sym}^{(2^n, t)}}{\Tr[P_{sym}^{(2^n, t)}]}\,.
\end{equation}
Here $P_{sym}^{(2^n, t)} = \sum_{\pi \in S_t} R(\pi)$ where $S_t$ is the symmetric group of degree $t$ and where we recall that the representation $R(\pi)$  was defined above in Eq.~\eqref{eq:prem}. Thus, the average contributions to $\Lambda_{\rm stab}(\ket{\psi})$ will look like
\begin{equation}\label{eq-ap:haar_4th_moment}
    \mathbb{E}_{\HC}\Tr[\dya{\psi_H}P]^4 = \sum_{\pi \in S_4}\frac{\Tr[R(\pi) P^{\otimes 4}]}{\Tr[P_{sym}^{(2^n, 4)}]}\,.
\end{equation}
The elements of $S_4$ can be categorized by the type of cycle: 0-cycle (identity), 2-cycles, disjoint 2-cycles, 3-cycles and 4-cycles. The type of cycle determines $\Tr[R(\pi) P^{\otimes 4}]$.

\begin{align}
    \pi \text{ is a 0-cycle:} &\quad \Tr[R(\pi) P^{\otimes 4}] = 0 \,,\nonumber\\
    \pi \text{ is a 2-cycle:} &\quad \Tr[R(\pi) P^{\otimes 4}] = 0 \,,\nonumber\\
    \pi \text{ is a disjoint 2-cycle:} &\quad \Tr[R(\pi) P^{\otimes 4}] = 2^{2n}\,,\nonumber\\
    \pi \text{ is a 3-cycle:} &\quad \Tr[R(\pi) P^{\otimes 4}] = 0 \,,\nonumber\\
    \pi \text{ is a 4-cycle:} &\quad \Tr[R(\pi) P^{\otimes 4}] = 2^n\,.\nonumber
\end{align}

In $S_4$ there are three total disjoint 2-cycles and six total 4-cycles, and we can find that $\Tr[P_{sym}^{(d, t)}] = t! \binom{t+d-1}{t} = \frac{(t+d-1)!}{(d-1)!}$. For $t = 4$ and $d = 2^n$ we thus have $\Tr[P_{sym}^{(2^n, 4)}] = 2^n(2^n+1)(2^n+2)(2^n+3)$. We know that there are $4^n-1$ non-identity Pauli operators that constitute the set $\PC_{\rm stab}$, and each of these operators contributes $\frac{3 \cdot 2^{2n} + 6 \cdot 2^n}{2^{n}(2^n+1)(2^n+2)(2^n+3)}$. Adding up the terms and dividing by the normalization coefficient $2^n - 1$, leads to
\begin{equation}
    \mathbb{E}_{\HC}[\Lambda_{{\rm stab}}(\ket{\psi_H})] = \frac{3}{2^n+3}\,.
\end{equation}

\subsubsection{$S_n$-equivariance}

Recall that in the $S_n$-equivariance QRT, resourcefulness is measured by $\Lambda_{{S_n}}(\ket{\psi}) = \frac{1}{2^n-1}\sum_{P \in\PC_{{S_n}}} \Tr[\dya{\psi}P ]^{2}$, where $\PC_{S_n}$ consists of the normalized twirls of all the $4^n-1$ non-trivial Pauli operators over the $S_n$ group. Specifically, each Pauli $P$ gets mapped to $\frac{1}{n!}\sum_{\pi\in S_n}R(\pi)P R\ad(\pi)$, for $R$ the qubit permuting representation of $S_n$, and then gets normalized. 
Since conjugating by a permutation matrix maps a Pauli string into another one by permuting its components, it follows that the result of twirling a Pauli $P$ over $S_n$ only depends on the initial number of $X$'s, $Y$'s, and $Z$'s. By the stars and bars theorem~\cite{flajolet2009analytic}, the number of possible different assignments of non-trivial components for a Pauli string is $\binom{n+3}{3}-1$. 
This results in $\PC_{S_n}$ having size $\dim(\PC_{S_n})=Te_{n+1} - 1$ where $Te_m=\frac{1}{6}(m(m+1)(m+2))$ is the $m$-th tetrahedral number.

Now, given a representative Pauli operator with $n_x$ $X$ terms, $n_y$ $Y$ terms and $n_z$ $Z$ terms, the result of twirling over $S_n$ is the sum of the $\binom{n}{n_x}\binom{n-n_x}{n_y}\binom{n-n_x-n_y}{n_z}$ Pauli strings in the orbit of the representative element, each weighted by $\frac{n_x!n_y!n_z!(n-n_x-n_y-n_z)!}{n!}$. Orthogonality of the elements in $\PC_{{S_n}}$ follows easily from the fact that each orbit corresponds to a different assignment $(n_x,n_y,n_z)$. 
To normalize the twirls we just divide each element $P\in\PC_{S_n}$ by $\sqrt{\frac{\Tr[P^2]}{2^n}}=\sqrt{\frac{n_x!n_y!n_z!(n-n_x-n_y-n_z)!}{n!}}$, with $(n_x,n_y,n_z)$ the assignment of non-trivial Pauli operators in $P$.
Using Eq.~\eqref{eq:haar-witness}, we have that the average $S_n$-equivariance witness $\Lambda_{{S_n}}$ is given by
\begin{align}
    \mathbb{E}_{\HC}[\Lambda_{S_n}(\ket{\psi_H})] &= \frac{Te_{n+1}-1}{4^n-1}\,.
\end{align}

\subsubsection{Non-uniform entanglement}

In the QRT of non-uniform entanglement over a Hilbert space of $n$ qubits, the set $\PC_{\rm uent}$ consists of the three uniform weight-one operators $\PC_{\rm uent}=\{\sum_{i=1}^n X_i,\sum_{i=1}^n Y_i,\sum_{i=1}^n Z_i\}$. 
Notice that Eq.~\eqref{eq:haar-witness} does not directly apply here, since the elements of $\PC_{\rm uent}$ are not orthonormal. However they are readily found to be orthogonal, and one can check that $\Tr[P^2]=n2^n$ for each $P\in\PC_{\rm uent}$. Using this, one finds that Eq.~\eqref{eq:haar-witness} changes to
\begin{equation}
    \mathbb{E}_{\HC}[\Lambda_{\rm uent}(\ket{\psi_H})]=\frac{3}{n(2^n+1)}\,.
\end{equation}

\subsection{Expected witness values for random states in $\SC_{\rm imag}$, proof of results in Table~\ref{tab:Haar-tab}}

Here we compute the expected resource witnesses for random states in $\SC_{\rm imag}$, which are obtained by applying a random unitary from $\text{O}(d)$ to the reference state $\ket{0}^{\otimes n}$. As such, we henceforth define expectation values of random states in $\SC_{\rm imag}$ as $\mathbb{E}_{\SC_{\rm real}}[f(\dya{\psi})]=\mathbb{E}_{U\sim\text{O}(2^n)}[f(U\dya{0}^{\otimes n}U\ad)]$. 
We begin by recalling that, as discussed in Appendix~\ref{app:weingarten}, in order to compute expectation values over the Haar measure of a group, one needs to project into its commutant. For the special case of the orthogonal group, a basis of the $t$-th order commutant is given by a representation $F$~\cite{garcia2023deep} of the Brauer algebra $B_t$ acting on the $t$-fold tensor product Hilbert
space. That is,
\begin{equation}
    {\rm comm}^{(t)}(\text{O}(2^n))={\rm span}_{\mathbb C}\{F(\pi)\,,\forall \pi\in B_t\}.
\end{equation}
For convenience, we recall that the Brauer algebra is composed of all possible pairings on a set of $2t$ items.  Hence, the basis of
the commutant contains $\frac{(2t)!}{ 2^t (t!)}$ elements. 

From the previous we can find that for $t=2$ one has
\begin{equation}
    B_2=\{(\{1,3\},\{2,4\}),(\{1,4\},\{2,3\}),(\{1,2\},\{3,4\})\}\,,
\end{equation}
so that
\begin{equation}
    {\rm comm}^{(2)}(\text{O}(2^n))={\rm span}_{\mathbb C}\{\id\otimes \id,\SWAP,\Pi\}\,,
\end{equation}
where $\Pi=\sum_{i,j=1}^{2^n}\ket{ii}\bra{jj}$ is proportional to the projector onto the $2^n$-dimensional Bell state $\ket{\Phi}=\frac{1}{\sqrt{2^n}}\sum_{i=1}^{2^n}\ket{ii}$. Above, we have used the fact that 
\begin{equation}
    F((\{1,3\},\{2,4\}))=\id\otimes \id\,,\quad F((\{1,4\},\{2,3\}))=\SWAP\,,\quad F((\{1,2\},\{3,4\}))=\Pi\,.
\end{equation}
Combining the previous basis for the second order commutant of the orthogonal group with Eq.~\eqref{eq:weingarten} leads to (see also Appendix D in Ref.~\cite{garcia2023deep} for more details) 
\begin{align}\label{eq:general-O-av}
    \mathbb{E}_{U\sim\text{O}(2^n)}[U^{\otimes 2}X(U\ad)^{\otimes 2}]=&\frac{1}{2^n(2^n-1)(2^n+2)}\left((2^n+1)\Tr[X]-\Tr[X\SWAP]-\Tr[X\Pi]\right)\id\otimes \id\nonumber\\
    &+\frac{1}{2^n(2^n-1)(2^n+2)}\left(-\Tr[X]+(2^n+1)\Tr[X\SWAP]-\Tr[X\Pi]\right)\SWAP\nonumber\\
    &+\frac{1}{2^n(2^n-1)(2^n+2)}\left(-\Tr[X]-\Tr[X\SWAP]+(2^n+1)\Tr[X\Pi]\right)\Pi\,.
\end{align}
By replacing $X=\dya{0}^{\otimes 2n}$ we obtain that the average of the two-fold tensor product state in $\SC_{{\rm imag}}$ takes the form
\begin{equation}
  \mathbb{E}_{\SC_{\rm imag}}[\dya{\psi}^{\otimes 2}]=  \mathbb{E}_{U\sim \text{O}(2^n)]}[U^{\otimes 2}\dya{0}^{\otimes 2n}(U\ad)^{\otimes 2}]=\frac{2^n-1}{2^n(2^n-1)(2^n+2)}\left(\id\otimes \id+\SWAP+\Pi\right)\,.\label{eq:average-state-O}
\end{equation}

Equipped with Eq.~\eqref{eq:average-state-O} we can readily find that, for QRTs whose resourcefulness witnesses can be expressed as
\begin{equation}
    \Lambda_{\rm qrt}(\ket{\psi})=C\sum_{P\in\PC} \Tr[\dya{\psi}P]^2\,,
\end{equation}
for a set of orthonormal, traceless Hermitian operators $\PC$, it holds that
\begin{equation}\label{eq:group-witness-2}
    \mathbb{E}_{\SC_{\rm imag}}[\Lambda_{\rm qrt}(\ket{\psi})]=C\frac{2\dim(\PC^{({\rm sym})})}{2^n(2^n+2)}\,,
\end{equation}
where $\PC^{({\rm sym})}\subseteq \PC$ is the subset of symmetric operators within $\PC$. That is, $\PC^{({\rm sym})}=\{P\in \PC\,\,|\,\, P=P^T\}$. Here, we used the fact that $\Tr[P^{\otimes 2}\Pi]=\Tr[P P^T]$, and hence $\Tr[P^{\otimes 2}\Pi]+\Tr[P^{\otimes 2}\SWAP]$ is equal to $2$ if $P$ is symmetric, and $0$ if $P$ is antisymmetric. As before, if the elements in $\PC$ are not normalized to one, we need to multiply Eq.~\eqref{eq:group-witness-2} by their normalization.

\subsubsection{Entanglement}
The  set of operators $\PC_{\rm ent}$ defining the resource witness for the entanglement QRT has size $3n$, out of which $2n$ are symmetric. From Eq.~\eqref{eq:group-witness-2} we obtain  
\begin{equation}
    \mathbb{E}_{\HC}[\Lambda_{\rm ent}(\ket{\psi_H})]=\frac{4}{(2^n+2)}\,.
\end{equation}

\subsubsection{Fermionic non-Gaussianity}

In this case the set $\PC_{\rm ferm}$ is given by the $n$-qubit Pauli operators, so that $\dim(\PC_{\rm ferm})=\binom{2n}{2}=n(2n-1)$. Crucially, we recall that such a set of operators can be expressed as~\cite{diaz2023showcasing} $\PC_{\rm ferm}=\{Z_i\}_{i=1}^n \cup \{\widehat{X_iX_j},\widehat{X_iY_j},\widehat{Y_iX_j},\widehat{Y_iY_j}\}_{1\leq i<j\leq n}$, where $\widehat{A_iB_j}=A_i Z_{i+1}\cdots Z_{j-1}B_j$. As such, the $n$ operators $Z_i$ are symmetric, as well as the $n(n-1)$ operators of the form $\widehat{X_iX_j}$ and $\widehat{Y_iY_j}$. As such, we obtain
\begin{equation}
    \mathbb{E}_{\SC_{\rm imag}}[\Lambda_{\rm ferm}(\ket{\psi})]=\frac{2n}{(2^n+2)}\,.
\end{equation}

\subsubsection{Spin coherence}

For the QRT of spin coherence we have $\PC_{\rm coh} = \{S_x,S_y,S_z\}$, where $s=\frac{2^n-1}{2}$. Since these operators are not Pauli operators, we need to adjust the formula in Eq.~\eqref{eq:group-witness-2}. Particularly, we now have that $S_x$ and $S_z$ are symmetric, with $\Tr[S_iS_i]=\frac{1}{3}(4^n-1)2^{n-2}$ for all $i=x,z$. Hence, we find 

\begin{equation}
    \mathbb{E}_{\SC_{\rm imag}}[\Lambda_{{\rm coh}}(\ket{\psi_H})] = \frac{4(4^n-1)2^{n-2}}{3\left(\frac{2^n-1}{2}\right)^22^n(2^n+2)}=\frac{4}{3}\frac{2^n-1}{(2^n+1)^2}\,.
\end{equation}

\subsubsection{Clifford non-stabilizerness}
 
When computing the expectation value $\mathbb{E}_{\SC_{\rm  imag}} \left[ \Lambda_{{\rm stab}}(\ket{\psi})\right]$, we need to compute the Gram and the Weingarten matrix for the Brauer algebra $B_t$, which is a matrix of size $105\times 105$. While such analysis can be cumbersome, it is however still tractable via standard software (e.g. Mathematica). Then, when using Eq.~\eqref{eq:weingarten} one needs to evaluate the quantities $\Tr[\dya{0}^{\otimes 4n} F(\sigma)]$ and $\Tr[P^{\otimes 4} F(\sigma)]$ for all $\sigma\in B_4$. A straightforward calculation reveals that 
\begin{equation}
    \Tr[\dya{0}^{\otimes 4n} F(\sigma)]=1\,,\quad \forall \sigma\in B_4\,.
\end{equation}
Then,  for any antisymmetric Pauli  $P$ 
\begin{equation}\label{eq:brauer-tp4}
    \Tr[P^{\otimes 4} F(\sigma)]=\begin{cases}
        0\,,\quad\text{if $\{i,i+4\}\in \sigma$ for any $i\in1,\ldots,4$ }\,,\\
        -d^2\,,\quad\text{if $\sigma$ has two cycles, one being a transposition} \,,\\
        d^2\,,\quad\text{if $\sigma$ has two cycles none being a transposition} \,,\\
        -d\,,\quad\text{if $(\{i,j\},\{k,i+n\})\in \sigma$ with $i,k\leq n$ and $j>n$ and $\sigma\notin S_4$} \,,\\
        d\,,\quad\text{else}\,,
    \end{cases}
\end{equation}
whereas for any symmetric $P$
\begin{equation}\label{eq:brauer-tp42}
    \Tr[P^{\otimes 4} F(\sigma)]=\begin{cases}
        0\,,\quad\text{if $\{i,i+4\}\in \sigma$ for any $i\in1,\ldots,4$ }\,,\\
        d^2\,,\quad\text{if $\{i,i+4\}\notin \sigma$ for any $i\in1,\ldots,4$ and $\sigma$ has two cycles } \,,\\
        d\,,\quad\text{else}\,.
    \end{cases}
\end{equation}
Combining the previous results, we find
\begin{equation}
    \mathbb{E}_{\SC_{\rm  imag}} \left[ \Lambda_{{\rm stab}}(\ket{\psi})\right]=\frac{6}{6 + 2^n}\,.
\end{equation}

\subsubsection{$S_n$-equivariance}

Here, we begin by counting how many $S_n$-equivariant Paulis  there are with an even number of $Y$'s. In particular, we can obtain this result from the summation
\small
\begin{equation}
    \left(\sum_{n_x=0}^n\sum_{n_z=0}^{n-n_x}\sum_{n_y=0,2,\ldots,n-n_x}1\right)-1=\frac{1}{6}\left(3n(3+n)+3\left\lfloor\frac{(n-1)}{2}\right\rfloor^2+2\left\lfloor\frac{(n-1)}{2}\right\rfloor^3+4\left\lfloor\frac{n}{2}\right\rfloor+2\left\lfloor\frac{n}{2}\right\rfloor^3+\left\lfloor\frac{(n-1)}{2}\right\rfloor\left(1+6\left\lfloor\frac{n}{2}\right\rfloor\right)\right).\nonumber
\end{equation}
\normalsize
Assuming that $n$ is even, the previous simplifies to 
\begin{equation}\label{eq:numb-sn-sym}
    \left(\sum_{n_x=0}^n\sum_{n_z=0}^{n-n_x}\sum_{n_y=0,2,\ldots}^{n-n_x-nz}1\right)-1=\frac{1}{6}\left(\frac{n}{2}+1\right)\left(\frac{n}{2}+2\right)(2n+3)-1\,.
\end{equation}
Combining this result with the Eq.~\eqref{eq:group-witness-2} leads to
\begin{equation}
    \mathbb{E}_{\SC_{\rm imag}}[\Lambda_{ S_n}(\ket{\psi})]=\frac{2\left(\frac{1}{6}(\frac{n}{2}+1)(\frac{n}{2}+2)(2n+3)-1\right)}{(2^n-1)(2^n+2)}\,.
\end{equation}

\subsubsection{Non-uniform entanglement}
For the QRT of non-uniform entanglement, the set $\PC_{\rm uent}$ consists of the three uniform weight-one operators $\PC_{\rm uent}=\{\sum_{i=1}^n X_i,\sum_{i=1}^n Y_i,\sum_{i=1}^n Z_i\}$, with the first and the last being symmetric. Using the fact that  $\Tr[P^2]=n2^n$ for each $P\in\PC_{\rm uent}$, we obtain from Eq.~\eqref{eq:group-witness-2}
\begin{equation}
    \mathbb{E}_{\SC_{\rm imag}}[\Lambda_{\rm uent}(\ket{\psi})]=\frac{4}{n(2^n+2)}\,.
\end{equation}

\subsection{Expected witness values for random states in $\SC_{\rm real}$, proof of results in Table~\ref{tab:Haar-tab}}

Here we study the expected resource witness values for the free states in the QRT of realness. We recall that the states in $\SC_{\rm real}$ can be obtained by applying a unitary from $\text{O}(2^n)$ to the reference state $\ket{+_y}^{\otimes n}$. As such, we henceforth define expectation values of random states in $\SC_{\rm real}$ as $\mathbb{E}_{\SC_{\rm real}}[f(\dya{\psi})]=\mathbb{E}_{U\sim\text{O}(2^n)}[f(U\dya{+_y}^{\otimes n}U\ad)]$. Then, $\mathbb{E}_{U\sim\text{O}(2^n)}$ denotes the average over the orthogonal group's Haar measure. From the previous, we can use the results in Eq.~\eqref{eq:general-O-av} to obtain
\begin{align}\label{eq:average-plusstate-O}
    \mathbb{E}_{U\sim\text{O}(2^n)}[U^{\otimes 2}\dya{+_y}^{\otimes 2n}(U\ad)^{\otimes 2}]=&\frac{1}{(2^n-1)(2^n+2)}\left(\id\otimes\id +\SWAP-\frac{1}{2^{n-1}}\Pi\right)\,,
\end{align}
where we used the fact that $\Tr[\dya{+_y}^{\otimes 2n}\Pi]=\Tr[(\dya{+_y}^{\otimes n})(\dya{+_y}^{\otimes n})^T]=0$.

From Eq.~\eqref{eq:average-plusstate-O} we obtain that, for QRTs whose resourcefulness witnesses can be expressed as
\begin{equation}
    \Lambda_{\rm qrt}(\ket{\psi})=C\sum_{P\in\PC} \Tr[\dya{\psi}P]^2\,,
\end{equation}
for a set of orthonormal, traceless Hermitian operators $\PC$, then
\begin{equation}\label{eq:group-witness-3}
    \mathbb{E}_{\SC_{\rm real}}[\Lambda_{\rm qrt}(\ket{\psi})]=C\frac{1}{(2^n-1)(2^n+2)}\left(\left(1-\frac{1}{2^{n-1}}\right)\dim(\PC^{({\rm sym})})+\left(1+\frac{1}{2^{n-1}}\right)\dim(\PC^{({\rm asym})})\right)\,,
\end{equation}
where $\PC^{({\rm asym})}\subseteq \PC$ is the subset of antisymmetric operators within $\PC$. That is, $\PC^{({\rm asym})}=\{P\in \PC\,\,|\,\, P=-P^T\}$.

\subsubsection{Entanglement}
We recall that the set of operators $\PC_{\rm ent}$ has size $3n$, out of which $2n$ are symmetric and $n$ are antisymmetric. From Eq.~\eqref{eq:group-witness-3} we obtain  
\begin{equation}
    \mathbb{E}_{\SC_{\rm real}}[\Lambda_{\rm ent}(\ket{\psi_H})]=\frac{2^n}{n(2^n-1)(2^n+2)}\left(\left(1-\frac{1}{2^{n-1}}\right)2n+\left(1+\frac{1}{2^{n-1}}\right)n\right)=\frac{3\cdot 2^n-2}{(2^n-1)(2^n+2)}\,.
\end{equation}

\subsubsection{Fermionic non-Gaussianity}

Next, consider $\PC_{\rm ferm}$ which contains $n^2$ symmetric operators and $n(n-1)$ antisymmetric ones. Using Eq.~\eqref{eq:group-witness-3} we reach
\begin{equation}
    \mathbb{E}_{\SC_{\rm real}}[\Lambda_{\rm ferm}(\ket{\psi})]=\frac{2^n}{n(2^n-1)(2^n+2)}\left(\left(1-\frac{1}{2^{n-1}}\right)n^2+\left(1+\frac{1}{2^{n-1}}\right)n(n-1)\right)=\frac{2^n(2n-1)-2}{(2^n-1)(2^n+2)}\,.
\end{equation}

\subsubsection{Spin coherence}

Now, when considering $\PC_{\rm coh} = \{S_x,S_y,S_z\}$, two of these operators are symmetric, whereas one is antisymmetric. Adapting the normalization correctly,  

\begin{equation}
\mathbb{E}_{\SC_{\rm real}}[\Lambda_{\rm coh}(\ket{\psi})]=\frac{(4^n-1)2^{n-2}}{3\left(\frac{2^n-1}{2}\right)^2(2^n-1)(2^n+2)}\left(2\left(1-\frac{1}{2^{n-1}}\right)+\left(1+\frac{1}{2^{n-1}}\right)\right)=\frac{3\cdot 2^{2n}+2^n-2}{3(2^n-1)^2(2^n+2)}\,.
\end{equation}

\subsubsection{Clifford non-stabilizerness}
As in the case for the in $\SC_{\rm imag}$, we here construct the full Weingarten matrix. By combining Eq.~\eqref{eq:brauer-tp4} along with 
\begin{equation}
    \Tr[\dya{+_y}^{\otimes 4n} F(\sigma)]=\begin{cases}
        1\,,\quad \forall \sigma\in S_4\subseteq B_4\,,\\
        0\,,\quad \text{otherwise}\,,
    \end{cases}
\end{equation}
we obtain
\begin{equation}
\mathbb{E}_{\SC_{\rm real}}[\Lambda_{\rm stab}(\ket{\psi})]=\frac{3 \left(3\cdot 2^n+4^n-2\right)}{\left(2^n-1\right) \left(2^n+1\right) \left(2^n+6\right)}\,.
\end{equation}

\subsubsection{$S_n$-equivariance}

Let us assume for simplicity $n$ even. Combining Eqs.~\eqref{eq:numb-sn-sym} and~\eqref{eq:group-witness-3} leads to  
\begin{equation}
    \mathbb{E}_{\SC_{\rm real}}[\Lambda_{S_n}(\ket{\psi})]=\frac{2^nn(n(n+6)+11)-3(n(n+4)+8)}{6(2^n-1)^2(2^n+2)}\,.
\end{equation}

\subsubsection{Non-uniform entanglement}
A straightforward calculation using Eq.~\eqref{eq:group-witness-3} leads to
\begin{equation}
    \mathbb{E}_{\SC_{\rm real}}[\Lambda_{\rm uent}(\ket{\psi})]=\frac{3\cdot 2^n-2}{n(2^n-1)(2^n+2)}\,.
\end{equation}

\subsection{Expected witness values for random tensor product state in $\SC_{\rm ent}$, proof of Proposition~\ref{prop:haar-sep}}

In this section we provide proofs for the results presented in Proposition~\ref{prop:haar-sep} of the main text, regarding the average resourcefulness witnesses of various QRTs over $n$-qubit product states $\ket{\psi}=\bigotimes_{j=1}^n\ket{\psi_j}$, where each state $\ket{\psi_j}$ is sampled independently according to the Haar measure  over $\HC_j=\mathbb{C}^2$. Given that we can re-write such random state as $\ket{\psi}=\bigotimes_{j=1}^nU_j\ket{0}$, where  now each unitary $U_j$ is sampled independently according to the Haar measure  over $\text{U}(2)$, which enables the computation of expectation values via the Weingarten calculus and Eq.~\eqref{eq:weingarten}. That is, $\mathbb{E}_{\SC_{\rm ent}}[f(\dya{\psi})]=\mathbb{E}_{\HC_1}\cdots\mathbb{E}_{\HC_n}[f(\prod_j\dya{\psi_j})]=\mathbb{E}_{U_1\sim \text{U}(2)}\cdots\mathbb{E}_{U_n\sim \text{U}(2)}[f(\bigotimes_jU_j\dya{0}U_j\ad)]$. 

Indeed, by our assumption of independence in the sampling of the local unitaries, we can still make use of Eq.~\eqref{eq:Haar} and obtain
\begin{equation}\label{eq:local-Haar}
    \mathbb{E}_{\SC_{\rm ent}}[\dya{\psi}^{\otimes 2}]=\bigotimes_{j=1}^n\frac{\id_j\otimes \id_j +{\rm SWAP_j}}{6}\,,
\end{equation}
where now ${\rm SWAP}_j$ swaps the two copies of the $j$-th Hilbert space $\HC_j$.
We will employ this result as a building block of our proof for most of the following results.

\subsubsection{Entanglement}

Trivially, any state of the form $\ket{\psi}=\bigotimes_{j=1}^n\ket{\psi_j}$ is a free state for the QRT of entanglement, i.e. $\ket{\psi}\in\SC_{\rm ent}$. Thus 
\begin{equation}
    \mathbb{E}_{\SC_{\rm ent}}[\Lambda_{\rm ent}(\ket{\psi})]=\mathbb{E}_{\SC_{\rm ent}}[1]=1\,.
\end{equation}

\subsubsection{Fermionic non-Gaussianity}

We need to compute
\begin{equation}
    \mathbb{E}_{\SC_{\rm ent}}[\Lambda_{\rm ferm}(\ket{\psi})]
    =\mathbb{E}_{\SC_{\rm ent}}\!\Bigl[\frac{1}{n}\sum_{P\in\PC_{\rm ferm}}\Tr\!\bigl[\dya{\psi}P\bigr]^2\Bigr]\,.
\end{equation}
Using standard properties of the trace and Eq.~\eqref{eq:local-Haar}, this becomes
\begin{equation}
    \mathbb{E}_{\SC_{\rm ent}}[\Lambda_{\rm ferm}(\ket{\psi})]
    =\frac{1}{n}\sum_{P\in\PC_{\rm ferm}}\!\prod_{j=1}^n\frac{1}{6}
      \Tr\!\bigl[(\id_j^{\otimes 2}+{\rm SWAP}_j)\,P_j^{\otimes 2}\bigr]\,,
\end{equation}
where $P_j$ denotes the $j$-th tensor factor of the Pauli string $P$.  

We now evaluate each factor:
\begin{equation}
    \Tr\!\bigl[{\rm SWAP}_j\,P_j^{\otimes 2}\bigr]=\Tr[P_j^2]=2,
    \qquad
    \Tr\!\bigl[\id_j^{\otimes 2}\,P_j^{\otimes 2}\bigr]=\Tr[P_j^{\otimes 2}]=4\,\delta_{P_j,\id}\,.
\end{equation}

Recall that the Pauli strings $P$ belonging to $\PC_{\rm ferm}$ consist of single-site strings $P=Z_a$ for $a\in[n]$, and two-site strings of the form
\begin{equation}
    X_aZ_{a+1}\cdots Z_{b-1}X_b,\quad
    X_aZ_{a+1}\cdots Z_{b-1}Y_b,\quad
    Y_aZ_{a+1}\cdots Z_{b-1}X_b,\quad
    Y_aZ_{a+1}\cdots Z_{b-1}Y_b\,,
    \quad 1\le a<b\le n\,.
\end{equation}
Each single-site string $Z_a$ contributes
\begin{equation}
    \prod_{j=1}^n\frac{1}{6}\Tr\!\bigl[(\id_j^{\otimes 2}+{\rm SWAP}_j)\,P_j^{\otimes 2}\bigr]
    =\frac{1}{3}\,,
\end{equation}
while the other four possibilities contribute $
    \Bigl(\frac{1}{3}\Bigr)^{b-a+1}$. Adding all contributions yields
\begin{equation}
    \mathbb{E}_{\SC_{\rm ent}}[\Lambda_{\rm ferm}(\ket{\psi})]
    =\frac{1}{n}\!\Bigl[\frac{n}{3}
      +4\sum_{a=1}^{n-1}\sum_{b=a+1}^n\Bigl(\tfrac{1}{3}\Bigr)^{b-a+1}\Bigr]
    =\frac{n-1+3^{-n}}{n}\,,
\end{equation}
recovering the result presented in the main text.

\subsubsection{Imaginarity}

We can again make use of Eq.~\eqref{eq:local-Haar} to find
\begin{equation}
    \mathbb{E}_{\SC_{\rm ent}}[\Lambda_{\rm imag}(\ket{\psi})]
    =\frac{1}{2^n-1}\sum_{P\in\PC_{\rm imag}}\!\prod_{j=1}^n\frac{1}{6}
      \Tr\!\bigl[(\id_j^{\otimes 2}+{\rm SWAP}_j)\,P_j^{\otimes 2}\bigr]\,.
\end{equation}

Again, one has that any identity in $P$ contributes one to the product, while non-trivial Pauli terms contribute $1/3$. Hence, we are left with ordering all the Pauli strings appearing in $\PC_{\rm imag}$ by their bodyness. Since the latter consist of all the strings with an even number of $Y$'s, denoting $n_i$ the number of trivial components in $P$, $n_x$ that of $X$'s and $n_y$ that of $Y$'s, we have
\begin{equation}
    \mathbb{E}_{\SC_{\rm ent}}[\Lambda_{\rm imag}(\ket{\psi})] =
    \frac{1}{2^n-1}\sum_{n_i=0}^{n-1}\sum_{n_y = 0}^{\lfloor \frac{n-n_i}{2}\rfloor}\sum_{n_x=0}^{n-n_i-n_y}
    \binom{n}{n_i}\binom{n-n_i}{2n_y}\binom{n-n_i-n_y}{n_x}\left(\frac{1}{3}\right)^{n-n_i} = 
    \frac{6^n+4^n-2\cdot3^n}{2\cdot3^n(2^n-1)}\,,
\end{equation}
which is the result declared in the main text.

\subsubsection{Realness}

This case is completely analogous to that of the QRT of imaginarity. Indeed, we can simply change the normalization factor and consider the Pauli strings with an odd number of $Y$'s (i.e., $\PC_{\rm real}$) to get
\begin{equation}
    \mathbb{E}_{\SC_{\rm ent}}[\Lambda_{\rm real}(\ket{\psi})] =
    \frac{1}{2^{n-1}}\sum_{n_i=0}^{n-1}\sum_{n_y = 1}^{\lfloor \frac{n-n_i}{2}\rfloor}\sum_{n_x=0}^{n-n_i-n_y}
    \binom{n}{n_i}\binom{n-n_i}{2n_y-1}\binom{n-n_i-n_y}{n_x}\left(\frac{1}{3}\right)^{n-n_i} = 
    1-\left(\frac{2}{3}\right)^n\,,
\end{equation}
as we reported in the main text.
The careful reader can check that the relation $\Lambda_{{\rm imag}}(\rho) - 
\frac{\Lambda_{{\rm real}}(\rho)}{2^{1-n} - 2} = 1$ is indeed satisfied by the results presented here.

\subsubsection{Clifford non-stabilizerness}

The case of the Clifford non-stabilizerness QRT is again trickier due to the quartic dependence from the overlaps $\Tr[\dya{\psi}P]^4$ of $\Lambda_{\rm stab}(\ket{\psi})$. 
Let us manipulate the expression for $\Lambda_{\rm stab}(\ket{\psi})$ as follows
\begin{align}
    \Lambda_{\rm stab}(\ket{\psi}) &= \frac{1}{2^n -1}\sum_{P\in \PC_{\rm stab}}\Tr[\dya{\psi}P]^4 =\frac{1}{2^n -1}\sum_{P\in \PC_{\rm stab}}\Tr[\dya{\psi}^{\otimes 4}P^{\otimes 4}] =\frac{1}{2^n -1}\sum_{P\in \PC_{\rm stab}}\prod_{j=1}^n\Tr[\dya{\psi_j}^{\otimes 4}P_j^{\otimes 4}]\,,\nonumber
\end{align}
where we used standard properties of the trace, and plugged in the considered product state $\ket{\psi}=\bigotimes_{j=1}^n\ket{\psi_j}$.
Now, we can use again the result from Weingarten calculus $
    \mathbb{E}_{\HC}[\dya{\psi}^{\otimes t}]=\frac{P_{sym}^{(d, t)}}{\Tr[P_{sym}^{(d, t)}]}$, by setting $d=2$, and $t=4$. We find $\Tr[P_{sym}^{(2, 4)}]=120$, leading to
\begin{equation}
    \mathbb{E}_{\SC_{\rm ent}}[\Lambda_{\rm stab}(\ket{\psi})] = 
    \frac{1}{2^n -1}\sum_{P\in \PC_{\rm stab}}\prod_{j=1}^n\frac{\Tr[P_{sym}^{(2, 4)}P_j^{\otimes 4}]}{120}\,.
\end{equation}
From the analysis carried out below Eq.~\eqref{eq-ap:haar_4th_moment} we know that each non-trivial Pauli component $P_j$ will yield a contribution $1/5$. If $P_j=\id$ instead, the local expectation value corresponds to the expectation value of the fourth power of the trace of the pure density matrix $\dya{\psi_j}$, and hence is equal to one.
We are thus left with grouping the $4^n-1$ non-trivial Pauli operators in $\PC_{\rm stab}$ by their weight $k$ and multiplying their associated contribution $\left(\frac{1}{5}\right)^k$, which results in
\begin{equation}
    \mathbb{E}_{\SC_{\rm ent}}[\Lambda_{\rm stab}(\ket{\psi})] = 
    \frac{1}{2^n -1}\sum_{k=1}^n \binom{n}{k}3^k\left(\frac{1}{5}\right)^k=\frac{\left(\frac{8}{5}\right)^n-1}{2^n-1}\,,
\end{equation}
proving the result provided in the main text.

\subsubsection{$S_n$-equivariance}

We can once again use Eq.~\eqref{eq:local-Haar} to express the average resourcefulness with respect to the QRT of $S_n$-equivariance of an $n$ qubit product state $\ket{\psi}=\bigotimes_{j=1}^n\ket{\psi_j}$ as
\begin{equation}
    \mathbb{E}_{\SC_{\rm ent}}[\Lambda_{S_n}(\ket{\psi})]
    =\frac{1}{2^n-1}\sum_{P\in\PC_{S_n}}\!
      \Tr\!\left[\bigotimes_{j=1}^n\left(\frac{\id_j^{\otimes 2}+{\rm SWAP}_j}{6}\right)\,P^{\otimes 2}\right]\,.
\end{equation}
Let us recall from the previous section that each element in $\PC_{S_n}$ corresponds to the, normalized orbit of a representative Pauli string with a given assignment $(n_x,n_y,n_z)$ of $X,Y,Z$ components. Let us call $q=n_x+n_y+nz$ the bodyness of the Pauli strings in a given orbit.
We now study the terms $\Tr[{\rm SWAP}^{\otimes k}\id^{\otimes 2(n-k)}P^{\otimes 2}]$, where $P$ is an element of $\PC_{S_n}$ corresponding to the assignment $(n_x,n_y,n_z)$. When $q>k$ this term is bound to vanish, as there are not enough ${\rm SWAP}$ operators to compensate for the traceless Pauli components. On the other hand, for $q\leq k$, we get a non-vanishing contribution. Particularly, only the Pauli strings that are a tensor-square will contribute, since the same Pauli component is needed on the copies of the qubits acted upon by each ${\rm SWAP}$. Then, carrying out each trace shows that each bare Pauli string in $P$ whose $q$ non-trivial components are in the first $k$ slots contributes $2^k\cdot 4^{n-k}$. Considering their coefficient and normalization, and counting the number of valid strings in $P$ one finds
\begin{equation}
    \Tr[{\rm SWAP}^{\otimes k}\id^{\otimes 2(n-k)}P^{\otimes 2}] = \begin{cases}
        0 &\text{if $q>k$}\\
        2^k\cdot 4^{n-k}\cdot \frac{k!\,(n-q)!}{n!\,(k-q)!}&\text{if $q\leq k$}
    \end{cases}\,.
\end{equation}
Notice that, coherently with the $S_n$-equivariance of the operators involved, the contribution of each orbit only depends on $q$.
Furthermore, the $S_n$-equivariance of $P$ also implies that the previous contribution does not depend on which $k$ pairs of qubits are targeted by the ${\rm SWAP}$ operators. Thus, we can finally write
\begin{equation}
    \mathbb{E}_{\SC_{\rm ent}}[\Lambda_{S_n}(\ket{\psi})]
    =\frac{1}{6^n(2^n-1)}\sum_{k=1}^n\sum_{q=1}^k S(q,3)\cdot\binom{n}{k}\cdot2^k\cdot 4^{n-k}\cdot \frac{k!\,(n-q)!}{n!\,(k-q)!}\,,
\end{equation}
where $S(q,3)=\binom{q+2}{2}$ is the Stirling number of second kind, counting how many assignments $(n_x,n_y,n_z)$ with fixed $q$ there are, while the binomial factor $\binom{n}{k}$ arises from counting the number of $k$ ${\rm SWAP}$ operators. 
Carrying out the summation one finds the simplified expression reported in the main text
\begin{equation}
    \mathbb{E}_{\SC_{\rm ent}}[\Lambda_{S_n}(\ket{\psi})]
    =\frac{19 \cdot(3^{n}-1)-2n(n+6)}{8\cdot 3^n (2^n-1)}\,.
\end{equation}

\subsubsection{Non-uniform entanglement}

Resorting again to Eq.~\eqref{eq:local-Haar} we have
\begin{equation}
    \mathbb{E}_{\SC_{\rm ent}}[\Lambda_{\rm uent}(\ket{\psi})]
    =\frac{1}{n^2}\sum_{P\in\PC_{\rm uent}}\!
      \Tr\!\left[\bigotimes_{j=1}^n\left(\frac{\id_j^{\otimes 2}+{\rm SWAP}_j}{6}\right)\,P^{\otimes 2}\right]\,.
\end{equation}
Here $\PC_{\rm uent}=\{\sum_{i=1}^n X_i,\sum_{i=1}^n Y_i,\sum_{i=1}^n Z_i\}$. One can readily check that the only non-vanishing contributions to the average non-uniform entanglement resourcefulness come from the tensor-square terms, such as $X_1^{\otimes2}$. This is because the local ${\rm SWAP}_j$ operators annihilate any operator that is not a tensor-square. 
Hence, we can replace the $(\sum_{i=1}^nX_i)^{\otimes2}$ with $\sum_{i=1}^n X_i^{\otimes 2}$ and analogously for the other two elements of $\PC_{\rm uent}$.

Recalling from the previous derivations that $\Tr\!\left[\frac{\id_j^{\otimes 2}+{\rm SWAP}_j}{6}\,P_j^{\otimes 2}\right]=1$ if $P_j=\id$, and $\Tr\!\left[\frac{\id_j^{\otimes 2}+{\rm SWAP}_j}{6}\,P_j^{\otimes 2}\right]=\frac{1}{3}$ otherwise, we get to

\begin{equation}
    \mathbb{E}_{\SC_{\rm ent}}[\Lambda_{\rm uent}(\ket{\psi})]
    =\frac{1}{n^2} 3 \sum_{i=1}^n \frac{1}{3} = \frac{1}{n}\,,
\end{equation}
recovering the result stated in the main text.

\subsection{Fermionic entanglement for tensor product states, proof of Propositions~\ref{prop:uent-ferm-ent} and~\ref{prop:ent-fer-1}}

Here we provide a proof for Proposition~\ref{prop:uent-ferm-ent}. Without loss of generality, we can parametrize a general $n$-qubit uniform tensor product state as $\ket{\psi}=(R_z(\a)R_y(\b)\ket{0})^{\otimes n}$, where we use the fact that  any single qubit unitary $U\in\mathbb{SU}(2)$ can be decomposed in terms of three Euler angles as $U = R_z(\a)R_y(\b)R_z(\eta)$, with the rotation $R_z(\eta)$ leading to an unimportant global phase. Furthermore, acting with this unitary on the all zero reference state allows to reach any point in the Bloch sphere $\mathbb{C}^2$. 

Then, let us now notice that $R_z(\a)$ is a free operator within the fermionic non-Gaussianity QRT~\cite{diaz2023showcasing} (e.g., a rotation about the $z$-axis on the $i$-the qubit can be expressed as $e^{i \eta \gamma_{2i-1}\gamma_{2i}}$), so we know that the action of $R_z(\a)$ gates do not change the value of $\Lambda_{\rm ferm}$. We can hence focus on the state $\ket{\psi} = (R_y(\b)\ket{0})^{\otimes n}$. The density matrix of a single qubit uniform state can be thus expanded as 
\begin{equation}
    R_y(\b)\dya{0}R_y^\dagger(\b) = \left(\cos\left(\frac{\b}{2}\right)\id - i \sin\left(\frac{\b}{2}\right)Y\right)\left(\frac{\id+Z}{2}\right)\left(\cos\left(\frac{\b}{2}\right)\id + i \sin\left(\frac{\b}{2}\right)Y\right) = \frac{1}{2}(\id + \cos(\b)Z + \sin(\b)X) \nonumber \,.
\end{equation}
This leads to the following expression for the $n$-qubit uniform tensor product state
\begin{equation}
   \dya{\psi} = \frac{1}{2^n} \bigotimes_{j = 1}^n (\id_j + \cos(\b)Z_j + \sin(\b)X_j)\,,
\end{equation}
which consists of Pauli operators from $\{\id, X, Z\}^{\otimes n}$.
Recall that $\Lambda_{{\rm ferm}}(\ket{\psi})=\frac{1}{n}\sum_{P\in\PC_{{\rm ferm}} }\Tr[\dya{\psi}P]^2$, where $\PC_{\rm ferm}=\{i\gamma_j\gamma_k\}_{1\leq j<k\leq 2n}$ is the set of Pauli operators given by the product of two distinct Majorana operators.
One can check that the only Pauli operators in $\PC_{\rm ferm}\cap \{\id, X, Z\}^{\otimes n}$ are those of the form $P=i\gamma_{2a}\gamma_{2b-1}$ for $1\leq a<b\leq n$, which have the form $P\propto X_aZ_{a+1}\dots Z_{b-1}X_b$, and those reading $P=i\gamma_{2c-1}\gamma_{2c}=Z_c$ for $c=1,\dots,n$.
Explicitly computing the squares of the associated traces in $\Lambda_{\rm ferm}$ reveals that the latter Pauli operators contribute $\cos^2(\b)$, while the former contribute $\sin^4(\b)\cos^{2(b-a-1)}(\b)$.
Counting and adding up all the contributions leads to the formula presented in the main text
\begin{equation}
    \Lambda_{{\rm ferm}}(\ket{\psi}) = \frac{n + \cos^{2n}(\b) - 1}{n}\,.
\end{equation}

Notice that the minimum value of this equation occurs at $\b = \frac{\pi}{8}+k\frac{\pi}{4}$ for any integer $k$. Particularly, the minimum fermionic non-Gaussianity of uniform tensor product states reads
\begin{equation}
    \min_{\ket{\psi}\in\SC_{{\rm uent}}}\{\Lambda_{{\rm ferm}}(\ket{\psi})\} = \frac{n - 1}{n}.
\end{equation}

We now show that this minimum actually holds for tensor product but non-uniform states as well. 
By the same arguments used before, to study the latter we can consider $\ket{\psi} = \bigotimes_{j=1}^n R_y(\b_j)\ket{0}$. Then the same steps now lead to
\begin{equation}
    \Lambda_{{\rm ferm}}(\ket{\psi}) = \frac{1}{n}\left(\sum_{j=1}^n\cos^2(\b_j) + \sum_{l=1}^{n-1}\sum_{r=l+1}^{n}\sin^2(\b_l)\sin^2(\b_r)\prod_{m=l+1}^{r-1}\cos^2(\b_m)\right)\,.
\end{equation}
We can prove that the minimum is again $\min_{\ket{\psi}\in\SC_{{\rm ent}}}\{\Lambda_{{\rm ferm}}(\ket{\psi})\} = \frac{n - 1}{n}$ by  induction. We define
\begin{equation}
    f_n = \sum_{j=1}^n\cos^2(\b_j) + \sum_{l=1}^{n-1}\sum_{r=l+1}^{n}\sin^2(\b_l)\sin^2(\b_r)\prod_{m=l+1}^{r-1}\cos^2(\b_m)\,,
\end{equation}
and we make the claim 
\begin{equation}
    f_n = n-1+\prod_{j=1}^n\cos^2(\b_j)\,.
\end{equation}
The case $n=1$ is trivial, as $f_1=\cos^2(2\b_1)$.
Let us hence assume $f_{n-1}=n-2+\prod_{j=1}^{n-1}\cos^2(\b_j)$. We can write
\begin{equation}
    f_n=f_{n-1}+\cos^2(\b_n) + \sin^2(\b_n)\sum_{l=1}^{n-1}\sin^2(\b_l)\prod_{m=l+1}^{r-1}\cos^2(\b_m)\,.
\end{equation}
We now notice that, defining $P_k = \prod_{m=k}^{n-1}\cos^2(\b_m)$, with $P_n=1$, one finds
\begin{align}
    P_{l}-P_{l+1}= \cos^2(\b_l)\prod_{m=l+1}^{n-1}\cos^2(\b_m) - \prod_{m=l+1}^{n-1}\cos^2(\b_m)=-\sin^2(\b_l)\prod_{m=l+1}^{n-1}\cos^2(\b_m)\,.
\end{align}
Hence
\begin{equation}
    \sum_{l=1}^{n-1}\sin^2(\b_l)\prod_{m=l+1}^{n-1}\cos^2(\b_m)= -\sum_{l=1}^{n-1}(P_{l}-P_{l+1}) = P_n-P_1 = 1-\prod_{m=1}^{n-1}\cos^2(\b_m)\,.
\end{equation}
Thus, we have
\begin{align}
    f_n & = f_{n-1}+\cos^2(\b_n) + \sin^2(\b_n)\left(1-\prod_{m=1}^{n-1}\cos^2(\b_m)\right)\nonumber\\
    &= f_{n-1}+1-\sin^2(\b_n)\prod_{m=1}^{n-1}\cos^2(\b_m)\nonumber\\
    &= f_{n-1}+1-\prod_{m=1}^{n-1}\cos^2(\b_m) +\prod_{m=1}^{n}\cos^2(\b_m)\nonumber\\
    &= n - 1 + \prod_{m=1}^{n}\cos^2(\b_m)\,,
\end{align}
where in the last line we used the inductive assumption, proving our claim.
From this expression for $f_n$ it is immediate to see that the minimum fermionic non-Gaussianity resourcefulness for product states is indeed $\frac{n - 1}{n}$, which is attained when at least one angle $\b_j=\frac{\pi}{8}+k\frac{\pi}{4}$ for any integer $k$.

Lastly, let us notice that the average fermionic non-Gaussianity of tensor product uniform states is
\begin{equation}
    \frac{2}{\pi}\int_0^{\pi}\sin(\beta) \Lambda_{{\rm ferm}}(\ket{\psi}) d\b = 1-\frac{2}{2 n+1}\,.
\end{equation}

\subsection{Clifford non-stabilizerness for uniform tensor product states, proof of Proposition~\ref{prop:uent-stab}}

Again, without loss of generality we parameterize any tensor product uniform state as $(R_z(\a)R_y(\b) \ket{0})^{\otimes n}$. 
Now recall $\Lambda_{\rm stab}(\ket{\psi})=\frac{1}{2^n-1}\sum_{P\in\PC}\Tr[\dya{\psi} P ]^4$, where $\PC$ is the set of all $4^n-1$ Pauli operators. 
Hence, substituting the expression for the uniform tensor product state we get
\begin{equation}
    \Lambda_{\rm stab}(\ket{\psi})= \frac{1}{2^n-1}\sum_{P\in\PC}\Tr[(\dya{0})^{\otimes n} (R_y^\dagger(\b) R_z^\dagger(\a))^{\otimes n} P (R_z(\a) R_y(\b))^{\otimes n}]^4\,.
\end{equation}

We can further simplify the calculations by using the identity $\Tr[A \otimes B] = \Tr[A]\Tr[B]$ so that, calling $P_j$ the local Pauli terms appearing in the Pauli string $P$, the full equation becomes
\small
\begin{equation}
    \Lambda_{\rm stab}(\ket{\psi})=\frac{1}{2^n-1}\sum_{P\in\PC} \left(\prod_{j=1}^n\Tr[\left(\frac{\id + Z}{2}\right)(c_\b \id+i s_\b Z)(c_\a \id+i s_\a Y)P_j(c_\b \id-i s_\b Z)((c_\a \id-i s_\a Y)]\right)^4\,,
\end{equation}
\normalsize
where $c_\th=\cos(\th/2)$ and $s_\th=\sin(\th/2)$.
Carrying out the trace one finds, for a given Pauli $P$ with $m_0$ $\id$ operators, $m_x$ $X$ operators, $m_y$ $Y$ operators and $m_z$ $Z$ operators, a contribution equal to $(\cos (\beta)\sin(\alpha))^{4 m_x} (\sin(\beta)\sin (\alpha))^{4 m_y}\cos(\alpha)^{4m_z}$. 
We can then carry out the summation by adding up this contribution for every combination of $m_0$, $m_x$, $m_y$ and $m_z$ which results in
\begin{equation}
    \Lambda_{\rm stab}(\ket{\psi})=\frac{1}{2^n-1}\sum_{m_0=0}^{n-1}\sum_{m_x=0}^{n-m_0}\sum_{m_y=0}^{n-m_0-m_x}\binom{n}{m_0}\binom{n-m_0}{m_x}\binom{n-m_0-m_x}{m_y} (c_\b s_\a)^{4 m_x} (s_\b s_\a)^{4 m_y} c_\a^{4 (n - m_0-m_x-m_y)}\,.
\end{equation}

This expression simplifies to 
\begin{equation}
    \Lambda_{\rm stab}(\ket{\psi})= \frac{(1 + \cos^4(\b/2) + \frac{1}{4}(3 + \cos(2 \a))\sin^4(\b/2))^n - 1}{2^n-1}.
\end{equation}

Notice that for $\b = 0$ we get $\Lambda_{\rm stab}(\ket{\psi})=1$, as we are essentially calculating the non-stabilizerness of $\ket{0}^{\otimes n}$. Instead, if we set $\a = 0$ we get the Clifford non-stabilizerness formula for a uniform real product state
\begin{equation}
    \frac{(1 + \cos^4(\b/2) + \sin^4(\b/2))^n - 1}{2^n-1}\,.
\end{equation}

\subsection{Gaussian state expectations, proof of Proposition~\ref{prop:haar-gauss}}

Here we provide proofs for the results presented in Proposition~\ref{prop:haar-gauss} for the expected witness values of random Gaussian states. Given that any state in $\SC_{\rm ferm}$ can be expressed as $\ket{\psi}=R(g)\ket{0}^{\otimes n}$, with $R$ the spinor representation and $g\in \text{SO}(2n)$, we will henceforth define the expectation values of random states in $\SC_{\rm ferm}$ as $\mathbb{E}_{\SC_{\rm ferm}}[f(\dya{\psi})]=\mathbb{E}_{g\in \text{SO}(2n)}[f(R(g)\dya{0}^{\otimes n}R(g)\ad)]$. 

Next, we will only evaluate here the witnesses for the QRTs of entanglement, realness and imaginarity. As per the results of  Appendix~\ref{app:weingarten}, in order to use Eq.~\eqref{eq:weingarten} we need a basis for the second order commutant of free-fermionic matchgate unitaries. As shown in~\cite{diaz2023showcasing},  
\begin{align}\label{eq:com-ff}
{\rm comm}^{(t)}(G)={\rm span}_{\mathbb{C}}\{Q_{k}^0,Q_{k}^1\}\,,
\end{align}
where 
\begin{equation}
        Q_{k}^0 = \NC_k \sum_{P'\in L_k} P' \otimes P' \,,\quad 
    Q_{k}^1 = i\NC_k \sum_{P'\in L_k} P' \otimes Z^{\otimes n}P'
\end{equation}
are orthonormal Hermitian operators 
for integers $k\in [2n]$ and $\NC_k = \left[d\sqrt{
\binom{2n}{k})}\right]^{-1}$ and where $L_k$ denotes the set of $\binom{2n}{k}$ Pauli operators that can be expressed as the product of $k$ distinct Majoranas. In particular, we recall that if a given Pauli $P'\in L_k$ with $k$ even (odd), then $P'$ commutes (anticommutes) with the fermionic Parity operator $Z^{\otimes n}$.

Combining Eq.~\eqref{eq:com-ff} with the Weingarten formula of Eq.~\eqref{eq:weingarten}, allows us to find that 
\begin{equation}\label{eq:average}
    \mathbb{E}_{\SC_{\rm ferm}}[\dya{\psi}] = \sum_{i=0}^1\sum_{k=1}^{2n}\Tr[\dya{0}^{\otimes 2n} Q_k^{(i)}]Q_k^{(i)}= \sum_{k=0,2,4}^{2n} \frac{\binom{n}{k/2}}{d^2\binom{2n}{k}}  \left(\sum_{P'\in L_k} P' \otimes P'+i\sum_{P'\in L_k} P' \otimes Z^{\otimes n}P'\right).
\end{equation}
Here we have used the fact that $\dya{0}^{\otimes 2n}$ is of even fermionic parity, and hence only has support on the operators with even $k$, as well as the fact that 
$\ketbra{0}^{\otimes 2n}$ has support on all $\binom{n}{k/2}$ diagonal Pauli operators in $L_k$.

From Eq.~\eqref{eq:average} we obtain that, for QRTs whose resourcefulness witnesses can be expressed as
\begin{equation}
    \Lambda_{\rm qrt}(\ket{\psi})=C\sum_{P\in\PC} \Tr[\dya{\psi}P]^2\,,
\end{equation}
for a set of Pauli operators $\PC$, then
\begin{equation}\label{eq:group-witness-4}
    \mathbb{E}_{\SC_{\rm ferm}}[\Lambda_{\rm qrt}(\ket{\psi})]=C\sum_{k=0,2,4}^{2n} \frac{\binom{n}{k/2}}{\binom{2n}{k}}\dim(\PC\cap L_k)\,,
\end{equation}
where $\PC\cap L_k$ is the subset of Paulis in $\PC$ which can be expressed exactly as the product of $k$ distinct Majoranas.

\subsubsection{Entanglement}

We recall that the set of operators $\PC_{\rm ent}$ contains all single qubit Paulis $\{X_i\}_{j=1}^n$, $\{Y_i\}_{j=1}^n$ and $\{Z_i\}_{j=1}^n$. Since $X_j$ and $Y_j$ anticommute with $Z^{\otimes n}$, they belong to a set $L_k$ with $k$ odd and hence do not contribute to Eq.~\eqref{eq:group-witness-4}. Then, all $Z_j=\gamma_{2j-1}\gamma_{2j}$ for $j=1,\ldots,n$ and hence belong to $L_2$. Using  Eq.~\eqref{eq:group-witness-3} leads to 
\begin{equation}
    \mathbb{E}_{\SC_{\rm ferm}}[\Lambda_{\rm ent}(\ket{\psi})]=\frac{1}{2n-1}\,.
\end{equation}

\subsubsection{Imaginarity}

Here we begin by noting from Eq.~\eqref{eq:maj} that all Majorana operators $\gamma_i$ with $i$ even (odd) are symmetric (antisymmetric). As such, given a set $L_k$ we can find the number of symmetric operators by noting that out of $2n$ Majoranas, $n$ of them are symmetric, and $n$ of them are antisymmetric. Then, when multiplying $k$ Majoranas, one needs to take into account that reversing their product also generates minus signs (as Majoranas anticommute). This leads to  $\frac{1}{2} \left(\binom{n}{\frac{k}{2}}+\binom{2 n}{k}\right)$ symmetric operators in $L_k$. 
We thus find 
\begin{equation}
    \mathbb{E}_{\SC_{\rm ferm}}[\Lambda_{\rm imag}(\ket{\psi})]=\frac{1}{2^n-1}\sum_{k=2,4,\ldots}^{2n}\frac{\left(\binom{n}{\frac{k}{2}}+\binom{2 n}{k}\right) \binom{n}{\frac{k}{2}}}{2 \binom{2 n}{k}}\,.
\end{equation}

\subsubsection{Realness}

As before, we need to count how many antisymmetric Paulis there are in $L_k$. A straightforward calculation reveals that there are $\frac{1}{2} \left(\binom{n}{\frac{k}{2}}-\binom{2 n}{k}\right)$ of them, which leads to 
\begin{equation}
    \mathbb{E}_{\SC_{\rm ferm}}[\Lambda_{\rm real}(\ket{\psi})]=\frac{1}{2^{n-1}}\sum_{k=2,4,\ldots}^{2n}\frac{\left(\binom{2 n}{k}-\binom{n}{\frac{k}{2}}\right) \binom{n}{\frac{k}{2}}}{2 \binom{2 n}{k}}\,.
\end{equation}

\subsubsection{Non-uniform entanglement}

Resorting again to Eq.~\eqref{eq:group-witness-4}, and to the fact that only the $Z_i\otimes Z_i$ terms will contribute, we readily find that find that
\begin{equation}
    \mathbb{E}_{\SC_{\rm ferm}}[\Lambda_{\rm uent}(\ket{\psi})]=\frac{1}{n(2n-1)}\,.\end{equation}

\subsection{Resourcefulness of stabilizer states, proof of Proposition~\ref{prop:cliff-values}}

Here we provide proofs of Proposition~\ref{prop:cliff-values} which state that Clifford states take discrete values for various resource witnesses whose bases consist of Paulis. Note that these proofs simply offer sets of forbidden values and do not necessarily imply that the remaining values are not also forbidden.

In what follows, we will express  a general stabilizer state $\ket{\psi}\in\SC_{\rm stab}$ as $\ket{\psi}=U\ket{0}^{\otimes n}$, with $U\in\mathbb{G}_{\rm stab}$.  Furthermore, we will call $\rho=\dya{\psi}=U\dya{0}^{\otimes n}U^\dagger$ its associated density matrix.

\subsubsection{Entanglement}

Here we provide proofs for the resourcefulness of stabilizer states $\ket{\psi}$ from the point of view of the multipartite entanglement QRT. First of all let us show that the entanglement witness $\Lambda_{{\rm ent}}(\ket{\psi}) = \frac{1}{n}\sum_{P\in\PC_{\rm ent}}\Tr[\dya{\psi}P]^2$, where we recall $\PC_{\rm ent}=\cup_{i=1}^n\{X_i, Y_i, Z_i\}$, can only take values in $\{\frac{j}{n}\}_{j=0}^n$. 

Let us start by noting that the $n$-qubit zero state $\dya{0}^{\otimes n}=\left(\frac{\id + Z}{2}\right)^{\otimes n}$ can be expressed as a uniform superposition of all $2^n$ Pauli strings in the set $\{\id, Z\}^{\otimes n}$. Since Clifford unitaries map Pauli operators to Pauli operators up to phases $\pm 1$, any stabilizer state $\dya{\psi}$ is similarly expressed as a uniform superposition (again, modulo phases) of certain Pauli strings. Given the orthogonality of Pauli operators and their Hilbert-Schmidt norm of $2^n$ which exactly cancels the normalization factor of the state one sees immediately that $\Tr[\dya{\psi}P]^2$ equals one if $P$ appears in the decomposition of $\dya{\psi}$, and zero otherwise. Thus, $\Lambda_{{\rm ent}}(\ket{\psi})$ must necessarily be an integer multiple of $1/n$.
Lastly, since the initial set of Paulis $\{\id, Z\}^{\otimes n}$ consists of mutually commuting operators, and no unitary can map commuting operators to non-commuting ones, the maximum value of $\Lambda_{{\rm ent}}(\ket{\psi})$ is determined by the maximal number of mutually commuting elements in $\PC_{\rm ent}$. Recalling that the operators in $\PC_{\rm ent}$ form a basis for the induced representation of the algebra $\bigoplus_{i=1}^n\mathfrak{su}(2)$, this number is given by the dimension of its Cartan subalgebra. One finds that the Cartan subalgebra has size $n$ (a standard choice being $\{Z_i\}_{i=1}^n$), thus completing the proof.

Next, let us show that $\Lambda_{{\rm ent}}(\ket{\psi})$ can never take the value $\frac{n-1}{n}$. To see this, observe that any non-vanishing contribution to $\Lambda_{{\rm ent}}(\ket{\psi})$ implies that $\dya{\psi}$ contains a local Pauli $P \in \PC_{\rm ent}$. Without loss of generality, assume that $Z_1$ appears in the decomposition of $\dya{\psi}$. Then, since all Pauli operators in the decomposition must commute, $X_1$ and $Y_1$ cannot appear. This, in turn, implies that the reduced density matrix over the first qubit is $\rho_1 = \frac{\id + Z}{2}$, i.e., $\rho_1 = \dya{0}$ is pure. The same reasoning applies if $X_1$ or $Y_1$ appears instead. Hence, for any non-zero contribution to $\Lambda_{{\rm ent}}(\ket{\psi})$ from a local Pauli, the corresponding qubit must be in a separate pure state. If it were possible to find exactly $n - 1$ such contributions, it would follow that $\dya{\psi} = (\bigotimes_{j=1}^{n-1} \dya{\phi_j}) \otimes \sigma$, for some set of pure states $\ket{\phi_j}$ (eigenvectors of $X_j$, $Y_j$, or $Z_j$), and a single-qubit state $\sigma$. The latter would then necessarily be maximally mixed, $\sigma = \frac{\id}{2}$, since by assumption no local Pauli on the last qubit appears in the decomposition of $\dya{\psi}$. However, this leads to a contradiction, as $\dya{\psi}$ is pure by definition, thus proving that $\Lambda_{{\rm ent}}(\ket{\psi})$ cannot take the value $ \frac{n-1}{n}$.

\subsubsection{Fermionic non-Gaussianity}

We now consider the resourcefulness of stabilizer states $\ket{\psi}$ from the point of view of the fermionic non-Gaussianity QRT. Recall that $\Lambda_{{\rm ferm}}(\ket{\psi})=\frac{1}{n}\sum_{P\in\PC_{{\rm ferm}} }\Tr[\dya{\psi}P]^2$, where $\PC_{{\rm ferm}}=\{i\gamma_i\gamma_j\}_{1\leq i<j\leq 2n}$ is a representation of the $\mathfrak{so}(2n)$ algebra. This set is spanned by Pauli operators corresponding to all combinations of products of two distinct Majorana operators $\{\gamma_i\}_{i=1}^{2n}$ as defined in Eq.~\ref{eq:maj}. 

Using the same arguments as we did for the entaglement QRT case, one can again show that $\Lambda_{{\rm ferm}}(\ket{\psi})$ can only take values in the discrete set $\{\frac{j}{n}\}_{j=1}^n$. Indeed, the Cartan subalgebra of $\mathfrak{so}(2n)$ has again size $n$ (and a common choice would again be $\{Z_j=i\gamma_{2j-1}\gamma_{2j}\}_{j=1}^n$). 

Now we prove that no stabilizer state $\ket{\psi}$ can have $\Lambda_{{\rm ferm}}(\ket{\psi})=\frac{n-2}{n}$.
To do so, let us first prove the following lemma
\begin{lemma}\label{lemma:1}
    Let $\{ P_j \}_{j = 1}^m\subset\PC_{{\rm ferm}}$, with $m\leq n$, be a set of mutually commuting Pauli operators. Then, there exists a unitary $W \in \mathbb{G}_{\rm stab} \cap \mathbb{G}_{\rm ferm}$ that maps $\{P_j\}_{j=1}^m$ to the standard set $\{Z_j\}_{j=1}^m$.
\end{lemma}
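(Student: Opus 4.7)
The plan is to work at the Majorana level, exploiting two facts: every $W\in\mathbb{G}_{\rm ferm}$ acts on Majoranas by an orthogonal rotation $O\in\mathrm{SO}(2n)$ via $W\gamma_k W^\dagger=\sum_l O_{kl}\gamma_l$, and every signed permutation in $\mathrm{SO}(2n)$ lifts to a unitary in $\mathbb{G}_{\rm ferm}\cap\mathbb{G}_{\rm stab}$. I will (i) reduce the hypothesis to a statement about disjoint Majorana index pairs, (ii) exhibit the required $O$, and (iii) realize it as a Clifford matchgate.

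For step (i), write $P_j=i\gamma_{a_j}\gamma_{b_j}$. A direct computation using $\gamma_k^2=\id$ and $\{\gamma_k,\gamma_l\}=0$ for $k\neq l$ shows that two distinct elements $i\gamma_a\gamma_b$ and $i\gamma_c\gamma_d$ of $\PC_{\rm ferm}$ commute iff $\{a,b\}$ and $\{c,d\}$ are either disjoint or identical, since sharing exactly one index produces an anticommutator. Hence mutual commutativity of the distinct $P_j$'s forces the pairs $\{a_j,b_j\}_{j=1}^m$ to be pairwise disjoint, occupying $2m\leq 2n$ distinct Majorana labels.

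For step (ii), I will define $O$ as a signed permutation matrix sending $\gamma_{a_j}\mapsto\gamma_{2j-1}$ and $\gamma_{b_j}\mapsto\gamma_{2j}$, and bijecting the remaining $2(n-m)$ free Majoranas onto the slots $\{2m+1,\dots,2n\}$. Starting from the underlying unsigned permutation gives $O\in\mathrm{O}(2n)$; if $\det O=-1$, I will precompose with either a transposition of two free indices (possible when $m<n$) or a simultaneous sign flip on $\gamma_{2j-1}$ and $\gamma_{2j}$ for some $j$ (possible when $m\geq 1$, and leaving $i\gamma_{2j-1}\gamma_{2j}$ invariant). This produces $O\in\mathrm{SO}(2n)$ with $i(O\gamma_{a_j})(O\gamma_{b_j})=Z_j$ for every $j$.

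For step (iii), the elementary matchgate $V_{ij}=e^{\pi\gamma_i\gamma_j/4}=(\id+\gamma_i\gamma_j)/\sqrt{2}$ acts on Majoranas as $\gamma_i\mapsto\gamma_j$, $\gamma_j\mapsto -\gamma_i$, and fixes all other $\gamma_l$, via the standard identity $e^{\theta\gamma_i\gamma_j}\gamma_i e^{-\theta\gamma_i\gamma_j}=\cos(2\theta)\gamma_i+\sin(2\theta)\gamma_j$. Each $V_{ij}$ is a matchgate by construction, and it is Clifford because it sends every $\gamma_l$ to a single Pauli and every $n$-qubit Pauli is, up to a phase, a product of Majoranas. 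Since products of the $V_{ij}$ generate the full group of signed permutations in $\mathrm{SO}(2n)$, writing the $O$ from step (ii) as such a product yields the desired $W\in\mathbb{G}_{\rm ferm}\cap\mathbb{G}_{\rm stab}$, with $WP_jW^\dagger=i(W\gamma_{a_j}W^\dagger)(W\gamma_{b_j}W^\dagger)=Z_j$. The main subtlety throughout is the simultaneous bookkeeping of $\det O=+1$ (needed for lifting via the spinor representation) and the correct product sign $+Z_j$ rather than $-Z_j$; the two independent adjustments in step (ii) are precisely what is needed to decouple these constraints.
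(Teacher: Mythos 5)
Your strategy is the same as the paper's: mutual commutativity forces the Majorana index pairs to be disjoint, and the rearrangement is then implemented by a signed permutation of Majoranas realized by a unitary that is simultaneously Clifford and matchgate. You are in fact \emph{more} careful than the paper, which simply asserts that $\mathbb{G}_{\rm stab}\cap\mathbb{G}_{\rm ferm}$ acts as signed permutations and can send any Majorana anywhere; your explicit generators $e^{\pi\gamma_i\gamma_j/4}$ and your attention to the constraint $\det O=+1$ (forced by the fact that $\mathbb{G}_{\rm ferm}$ is the spinor representation of the \emph{connected} group $\mathrm{SO}(2n)$) address exactly the point the paper glosses over.

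However, the one place where that extra care is genuinely needed is where your argument breaks. Your second determinant repair --- a simultaneous sign flip on $\gamma_{2j-1}$ and $\gamma_{2j}$ --- is the matrix $\mathrm{diag}(\ldots,-1,-1,\ldots)$, which has determinant $(-1)^2=+1$ and therefore cannot convert $\det O=-1$ into $+1$. So your construction only covers $m<n$. The gap at $m=n$ is not merely technical: take $n=2$, $P_1=i\gamma_1\gamma_3$, $P_2=i\gamma_2\gamma_4$ (a commuting pair). Then $P_1P_2=+\gamma_1\gamma_2\gamma_3\gamma_4$, and since $\gamma_1\cdots\gamma_{2n}$ transforms as $\det(O)\,\gamma_1\cdots\gamma_{2n}$ under matchgates, the product $P_1P_2$ is $\mathbb{G}_{\rm ferm}$-invariant; but $(i\gamma_1\gamma_2)(i\gamma_3\gamma_4)=-\gamma_1\gamma_2\gamma_3\gamma_4$, so no $W\in\mathbb{G}_{\rm ferm}$ can send $P_1\mapsto i\gamma_1\gamma_2$ and $P_2\mapsto i\gamma_3\gamma_4$ with both signs positive. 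One can check more generally that every block operation preserving all the products $+i\gamma_{2j-1}\gamma_{2j}$ has determinant $+1$, so when $m=n$ the parity of the underlying permutation is an honest obstruction. The correct conclusion of the lemma is that $W$ maps each $P_j$ to $\pm Z_j$, which is all that the subsequent argument (writing $W\dya{\psi}W\ad$ as a computational-basis product state on the first $n-2$ qubits tensored with a two-qubit stabilizer state) actually requires. The paper's own proof silently suffers from the same sign issue; you should state the weaker $\pm Z_j$ conclusion, or restrict your stronger claim to $m<n$.
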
 

\begin{proof}
    By definition, unitaries $W$ in $\mathbb{G}_{\rm stab} \cap \mathbb{G}_{\rm ferm}$ preserve both the Pauli group and the number of Majorana modes. Thus, given any Majorana operator $\gamma_\alpha$, which we recall to be proportional to a Pauli, there exists a unitary $W$ such that $W \gamma_\alpha W^\dagger \propto \gamma_\beta$ for any other $\gamma_\beta$, where the proportionality factor is a phase. Namely, the group $\mathbb{G}_{\rm stab} \cap \mathbb{G}_{\rm ferm}$ acts as a signed permutations on the set of Majorana operators $\{\gamma_\a\}_{\a=1}^{2n}$.

    Now consider a set of commuting Pauli operators $\{P_j = i \gamma_{\alpha_j} \gamma_{\beta_j}\}$ belonging to $\PC_{\rm ferm}$. Since for any pair $(j,k)$ we have by assumption that $[P_j,P_k]=0$, and since Majorana operators anticommute, we must have that the pairs $(\alpha_j, \beta_j)$ and $(\alpha_k, \beta_k)$ are disjoint for all $j \ne k$. Thus, each Majorana operator appears at most one time in $\{P_j\}$.

    Applying the unitary $W$, we find that
    \begin{equation}
    W P_j W^\dagger = W i\gamma_{\alpha_j} \gamma_{\beta_j} W^\dagger = i(W\gamma_{\alpha_j} W^\dagger)(W \gamma_{\beta_j} W^\dagger)\,.
    \end{equation}
    By the previous argument, we can choose $W$ such that $\gamma_{\alpha_j}$ and $\gamma_{\beta_j}$ are mapped to $\gamma_{2j-1}$ and $\gamma_{2j}$, respectively, for each $j = 1, \dots, m$. Equivalently, this means $P_j$ is mapped to $i \gamma_{2j-1} \gamma_{2j}$, which corresponds to the Pauli operator $Z_j$ in the Jordan-Wigner representation. Since these $Z_i$ mutually commute and form the standard Cartan subalgebra, it must be that $m\leq n$, proving the lemma.
\end{proof}

Now, assume that a given stabilizer state $\ket{\psi}$ has exactly $\Lambda_{{\rm ferm}}(\ket{\psi}) = \frac{n-2}{n}$. By analogous reasoning to the entanglement QRT case, this implies that in the decomposition of $\dya{\psi}$ as a uniform superposition of commuting Pauli operators, exactly $n-2$ of them belong to $\PC_{\rm ferm}$. Since we are always free to conjugate the state by a free operation from the group $\mathbb{G}_{\rm ferm}$, we can, by the lemma above, choose $W \in \mathbb{G}_{\rm stab} \cap \mathbb{G}_{\rm ferm}$ such that those $n-2$ commuting Pauli operators are mapped to $\{Z_j\}_{j=1}^{n-2}$. As in the entanglement case, this implies that the state transforms as $W\dya{\psi}W^\dagger = \dya{0}^{\otimes n-2} \otimes \sigma$, where $\sigma$ is a state on the last two qubits. Since $W$ is a Clifford and the Clifford group is closed under composition, the state $\sigma$ must itself be a stabilizer state. Thus, we can write $\sigma = \bar{U} \dya{0}^{\otimes 2} \bar{U}^\dagger$ for some $\bar{U} \in C_2$.

We now show that no two-qubit stabilizer state $\ket{\phi} = \bar{U} \ket{0}^{\otimes 2}$ can have $\Lambda_{\rm ferm}(\ket{\phi}) = 0$. Let $\{\bar{\gamma}_a\}_{a=1}^4$ denote the four Majorana operators on two qubits. Then $\dya{\phi} = \frac{1}{4} \prod_{j=1}^2 (\id + i \bar{U} \bar{\gamma}_{2j-1} \bar{\gamma}_{2j} \bar{U}^\dagger)$. Hence, in order for $\ket{\phi}$ to have zero $\Lambda_{\rm ferm}(\ket{\phi})$, one must have that none of the $\bar{U} \bar{\gamma}_1 \bar{\gamma}_2 \bar{U}^\dagger$, $\bar{U} \bar{\gamma}_3 \bar{\gamma}_4 \bar{U}^\dagger$, or their product $\bar{U} \bar{\gamma}_1 \bar{\gamma}_2 \bar{\gamma}_3 \bar{\gamma}_4 \bar{U}^\dagger$ belong to $\PC_{\rm ferm}$. Recall that these three operators mutually commute.

Let $\LC_k$ denote the span of products of $k$ distinct Majorana operators (which together, form a  complete basis for operator space). Then $\PC_{\rm ferm} \subset \LC_2$. If $\bar{U} \bar{\gamma}_1 \bar{\gamma}_2 \bar{U}^\dagger \notin \LC_2$, assume it belongs to $\LC_4$. But $\LC_4$ contains only one nontrivial (up to phase) element: $\bar{\gamma}_1 \bar{\gamma}_2 \bar{\gamma}_3 \bar{\gamma}_4$. Then $\bar{U} \bar{\gamma}_3 \bar{\gamma}_4 \bar{U}^\dagger$ must lie in $\LC_3$ or $\LC_1$, since it cannot be in $\LC_2$ and we have exhausted $\LC_4$. However, it cannot lie therein either, as no product of an even number of Majoranas can commute with a product of an odd number. Thus, neither element can lie in $\LC_4$.

Now consider the case where both elements lie in $\LC_3$. Then, since they differ by exactly one Majorana, they cannot commute. If one lies in $\LC_3$ and the other in $\LC_1$, then either their product lies in $\LC_2$ (contradicting our assumption), or they fail to commute. Finally, if both lie in $\LC_1$, they again do not commute.

We have thus exhausted all possible cases, and we found a contradiction in each of these cases. Therefore, it is impossible for a two-qubit stabilizer state to have zero support on $\LC_2$, leading to a contradiction. This proves that $\Lambda_{{\rm ferm}}(\ket{\psi}) \neq \frac{n-2}{n}$.

\subsubsection{Imaginarity and realness}

We now turn to the case of the imaginarity and realness QRT. In particular, we focus on the case of realness, as the results for imaginarity can be derived from the latter.

Recall that for the realness QRT, the resource witness of an $n$-qubit quantum state is given by $\Lambda_{{\rm real}}(\ket{\psi}) = \frac{1}{2^{n - 1}} \sum_{P \in \PC_{{\rm asym}}} \Tr[\dya{\psi} P]^2$, where $\PC_{{\rm asym}} = \{P \in \{\id, X, Y, Z\}^{\otimes n} \,|\, P = -P^T\}$ denotes the set of antisymmetric Pauli operators, i.e., those consisting of an odd number of $Y$'s.

First of all, let us prove the following fact about pure quantum states.
\begin{lemma}
    For any pure $n$-qubit quantum state $\ket{\psi}$, it holds that
    \begin{equation}
        \sum_{P \in \PC_{{\rm asym}}} \Tr[\dya{\psi} P]^2 \leq 2^{n-1}\,.
    \end{equation}
\end{lemma}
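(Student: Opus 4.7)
The plan is to exploit the Pauli decomposition of $\rho=\dya{\psi}$ together with two invariants: its purity $\Tr[\rho^2]=1$, and the non-negative quantity $\Tr[\rho\rho^T]=|\langle\psi|\psi^*\rangle|^2\geq 0$, where $\ket{\psi^*}$ denotes the complex conjugate of $\ket{\psi}$ in the computational basis. The key structural observation is that a Pauli string lies in $\PC_{\rm sym}$ (even number of $Y$'s) if and only if it is a real symmetric matrix, whereas it lies in $\PC_{\rm asym}$ (odd number of $Y$'s) if and only if it is a purely imaginary antisymmetric matrix. This cleanly separates the contributions of the two sets under the transpose operation.

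Concretely, I would first write $\rho = \frac{1}{2^n}\sum_{P} c_P\, P$ with $c_P = \Tr[\rho P]$, so that $c_{\id} = 1$ and $\sum_P c_P^2 = 2^n$ by purity and orthogonality of Paulis. Next, I would compute $\rho^T$ using $P^T = P$ for $P\in\PC_{\rm sym}$ and $P^T = -P$ for $P\in\PC_{\rm asym}$, yielding
\begin{equation}
\Tr[\rho\,\rho^T] \;=\; \frac{1}{2^n}\Bigl(\sum_{P\in\PC_{\rm sym}^+} c_P^2 \;-\; \sum_{P\in\PC_{\rm asym}} c_P^2\Bigr),
\end{equation}
where $\PC_{\rm sym}^+$ includes the identity. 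Denoting $A=\sum_{P\in\PC_{\rm sym}^+}c_P^2$ and $B=\sum_{P\in\PC_{\rm asym}}c_P^2$, the purity relation gives $A+B=2^n$ and the non-negativity of $\Tr[\rho\rho^T]$ gives $A\geq B$. Adding these two inequalities immediately delivers $2B\leq A+B=2^n$, i.e.\ $B\leq 2^{n-1}$, which is exactly the claim.

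I do not anticipate a serious obstacle; the only things to verify carefully are the transposition rules for tensor products of Paulis (an even/odd count of $Y$ factors controls the sign) and the identification $\Tr[\rho\rho^T]=|\langle\psi|\psi^*\rangle|^2$, both of which are direct. The proof also reveals when equality holds: $B=2^{n-1}$ requires $\langle\psi|\psi^*\rangle=0$, i.e.\ states whose overlap with their own complex conjugate vanishes, which matches the intuition that maximal imaginarity corresponds to states ``maximally orthogonal to their real counterpart.''
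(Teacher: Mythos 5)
Your proof is correct and follows essentially the same route as the paper: both combine the purity identity $\sum_P \Tr[\rho P]^2 = 2^n$ with the non-negativity of $\Tr[\rho\,\rho^T]$, which after applying the transposition rule $P^T=\pm P$ for symmetric/antisymmetric Pauli strings yields $A\geq B$ and hence $B\leq 2^{n-1}$. Your identification of $\Tr[\rho\,\rho^T]$ as $|\langle\psi|\psi^*\rangle|^2$ is a slightly more explicit (but equivalent) justification of the non-negativity than the paper's appeal to positive semidefiniteness, and it gives the equality condition as a bonus.
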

\begin{proof}
    Let us begin by noting that, for $\PC$ the set of all $4^n$ $n$-qubit Pauli operators (including the identity), we have
    \begin{equation}
    \sum_{P \in \PC} \Tr[\dya{\psi} P]^2 = 2^n \Tr[\dya{\psi}^2] = 2^n\,.
    \end{equation}
    This follows from the standard trace identities $\Tr[A]^2 = \Tr[A^{\otimes 2}]$ and $\Tr[(A \otimes B){\rm SWAP}] = \Tr[AB]$, where ${\rm SWAP}$ denotes the operator that exchanges the Hilbert spaces $\HC_A$ and $\HC_B$ on which $A$ and $B$ act, respectively. Additionally, one has the identity ${\rm SWAP} = \frac{\sum_{P \in \PC} P^{\otimes 2}}{2^n}$ for two copies of an $n$-qubit Hilbert space.
    Now, since $\PC = \PC_{{\rm sym}} \cup \PC_{{\rm asym}}$, we obtain the normalization condition
    \begin{equation}
        \sum_{P \in \PC_{{\rm sym}}} \Tr[\dya{\psi} P]^2 + \sum_{P \in \PC_{{\rm asym}}} \Tr[\dya{\psi} P]^2 = 2^n\,.
    \end{equation}
    Next, consider the quantity
    \begin{equation}
    \sum_{P \in \PC} \Tr[\dya{\psi} P] \Tr[\dya{\psi} P^T] = 2^n \Tr[\dya{\psi} (\dya{\psi})^T] \geq 0\,,
    \end{equation}
    which follows from the same trace properties, together with $\Tr[A^T] = \Tr[A]$, and the fact that $\dya{\psi}$ is positive semidefinite. Decomposing over symmetric and antisymmetric Pauli operators, we find
    \begin{align}
        &\sum_{P \in \PC_{{\rm sym}}} \Tr[\dya{\psi} P] \Tr[\dya{\psi} P^T]\,\, + \sum_{P \in \PC_{{\rm asym}}} \Tr[\dya{\psi} P] \Tr[\dya{\psi} P^T]\nonumber\\
        &
        = \sum_{P \in \PC_{{\rm sym}}} \Tr[\dya{\psi} P]^2 - \sum_{P \in \PC_{{\rm asym}}} \Tr[\dya{\psi} P]^2 \geq 0\,.
    \end{align}
    Adding the inequality above to the normalization condition yields
    \begin{equation}
    \sum_{P \in \PC_{{\rm sym}}} \Tr[\dya{\psi} P]^2 \geq 2^{n-1}\,,
    \end{equation}
    and hence
    \begin{equation}
    \sum_{P \in \PC_{{\rm asym}}} \Tr[\dya{\psi} P]^2 \leq 2^{n-1}\,,
    \end{equation}
    as claimed.
\end{proof}
The previous lemma justifies the choice $2^{n-1}$ as the normalization factor for $\Lambda_{{\rm real}}(\ket{\psi})$, and also explains why in the case of imaginarity $\Lambda_{{\rm imag}}(\ket{\psi}) = \frac{1}{2^{n}-1} \sum_{P \in \PC_{{\rm sym}}} \Tr[\dya{\psi} P]^2$ the minimum possible value is $\frac{2^{n-1}-1}{2^n - 1}$ rather than zero.

Consider again the expansion of an $n$-qubit stabilizer state's density matrix, $\dya{\psi} = \prod_{i = 1}^n \frac{(\id + U Z_i U^\dagger)}{2}$.
Each term $U Z_i U^\dagger$ in the expansion is a Pauli operator, and will be either symmetric or antisymmetric, depending on whether it contains an even or odd number of $Y$ operators, respectively.

Our goal is to show that $\ket{\psi}$ has either maximal or minimal realness. That is, the number of antisymmetric Pauli operators appearing in the decomposition of $\dya{\psi}$ is either minimal (zero) or maximal ($2^{n - 1}$). 

Note that since the $Z_i$ commute, the transformed operators $U Z_i U^\dagger$ must also commute. Now, for any two commuting Pauli operators $A$ and $B$ with definite symmetry, their product $AB$ also has definite symmetry: it is symmetric if both $A$ and $B$ are symmetric or both are antisymmetric, and antisymmetric if exactly one of them is antisymmetric.

Therefore, if none of the operators $U Z_i U^\dagger$ are antisymmetric, then all $2^n$ terms in the expansion of $\dya{\psi}$ are symmetric, yielding $\Lambda_{{\rm real}}(\ket{\psi}) = 0$. On the other hand, if at least one $U Z_i U^\dagger$ is antisymmetric, the symmetry rule above implies that exactly half of the $2^n$ terms in the decomposition will be antisymmetric, resulting in maximal realness. 

Since these are the only two possible cases, we conclude that any stabilizer state $\ket{\psi}$ must have either minimal or maximal realness.

Lastly, from the equation $\sum_{P \in \PC_{{\rm sym}}} \Tr[\dya{\psi} P]^2 + \sum_{P \in \PC_{{\rm asym}}} \Tr[\dya{\psi} P]^2 = 2^n$ that we used in the proof of the Lemma above, one readily gets $\Lambda_{{\rm imag}}(\ket{\psi}) - \frac{\Lambda_{{\rm real}}(\ket{\psi})}{2^{1-n} - 2} = 1$, which can be used to show that when realness is maximized imaginarity is minimized and vice-versa.

\subsection{Uniform entanglement inequality, proof of Theorem~\ref{theo:uent-ent-bound}}

We here prove the existence of a bounding relation between the witnesses of entanglement $\Lambda_{{\rm ent}}$ and uniform entanglement $\Lambda_{{\rm uent}}$.
Recall that for any $n$-qubit pure state $\ket{\psi}$, we defined $\Lambda_{{\rm ent}}(\ket{\psi}) = \frac{1}{n}\sum_{P\in \PC_{\rm ent}} \Tr[\dya{\psi} P]^2$ for $\PC_{\rm ent}$ the set of local (i.e., weight one) Pauli operators, and $\Lambda_{{\rm uent}}(\ket{\psi}) = \frac{1}{n^2}\sum_{P\in\PC_{\rm uent}} \Tr[\dya{\psi} P]^2$ with $\PC_{\rm uent} = \{ \sum_iX_i, \sum_iY_i, \sum_iZ_i\}$. 
We now proceed to prove the following theorem
\begin{theorem}
    Given any $n$-qubit pure state $\ket{\psi}$, its resourcefulness with respect to the QRTs of entanglement and uniform entanglement satisfies
    \begin{equation}
         \Lambda_{{\rm uent}}(\ket{\psi}) \leq \Lambda_{{\rm ent}}(\ket{\psi})\,.
    \end{equation}
\end{theorem}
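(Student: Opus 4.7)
The plan is to prove the inequality via a direct application of the Cauchy--Schwarz inequality, treating each Pauli label ($X$, $Y$, $Z$) separately. First, I would write out both witnesses explicitly in terms of single-qubit expectation values. Recall
\begin{equation}
\Lambda_{\rm ent}(\ket{\psi}) = \frac{1}{n}\sum_{i=1}^n\bigl(\langle X_i\rangle^2+\langle Y_i\rangle^2+\langle Z_i\rangle^2\bigr),
\end{equation}
while
\begin{equation}
\Lambda_{\rm uent}(\ket{\psi}) = \frac{1}{n^2}\left[\Bigl(\sum_{i=1}^n\langle X_i\rangle\Bigr)^{\!2}+\Bigl(\sum_{i=1}^n\langle Y_i\rangle\Bigr)^{\!2}+\Bigl(\sum_{i=1}^n\langle Z_i\rangle\Bigr)^{\!2}\right],
\end{equation}
where I abbreviate $\langle P_i\rangle=\bra{\psi}P_i\ket{\psi}$. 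Note that this expansion uses linearity of the expectation value, so the cross terms $\langle P_iP_j\rangle$ with $i\neq j$ do \emph{not} appear because $\Lambda_{\rm uent}$ takes the square of the sum of expectation values, not the expectation of the squared sum.

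The key step is then to apply the Cauchy--Schwarz inequality in the form $(\sum_{i=1}^n a_i)^2\leq n\sum_{i=1}^n a_i^2$ separately to each of the three Pauli families, with $a_i=\langle P_i\rangle$ for $P\in\{X,Y,Z\}$. Summing these three inequalities and dividing by $n^2$ yields
\begin{equation}
\Lambda_{\rm uent}(\ket{\psi})\leq\frac{1}{n}\sum_{i=1}^n\bigl(\langle X_i\rangle^2+\langle Y_i\rangle^2+\langle Z_i\rangle^2\bigr)=\Lambda_{\rm ent}(\ket{\psi}),
\end{equation}
which is the desired bound.

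For the equality characterization, I would invoke the standard fact that Cauchy--Schwarz $(\sum_i a_i)^2\leq n\sum_i a_i^2$ saturates if and only if all $a_i$ are equal. Applied to each of the three Pauli families independently, this gives the claimed condition $\langle X_i\rangle=\langle X_j\rangle$, $\langle Y_i\rangle=\langle Y_j\rangle$, and $\langle Z_i\rangle=\langle Z_j\rangle$ for all $i,j$. There is no real obstacle here: the proof is essentially a one-line inequality once the witnesses are written out, with the only mild subtlety being to remember that $\Lambda_{\rm uent}$ involves the square of a sum of expectation values (so no off-diagonal Pauli products appear) which makes Cauchy--Schwarz directly applicable at the level of classical numbers rather than requiring any operator-valued argument.
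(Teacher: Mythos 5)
Your proof is correct and follows essentially the same route as the paper's: both expand $\Lambda_{\rm uent}$ via linearity of the expectation value and apply the Cauchy--Schwarz inequality $(\sum_i a_i)^2\leq n\sum_i a_i^2$ separately to each Pauli family, with the same saturation condition. No gaps.
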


\begin{proof}
    The proof trivially follows from applying the Cauchy-Schwartz inequality to the expression for $\Lambda_{{\rm uent}}(\ket{\psi})$. Indeed, $\Tr[\dya{\psi}(\sum_iX_i)]^2 \le n\sum_{i} \Tr[\dya{\psi}X_i]^2$ with equality if and only if $\Tr[\dya{\psi}X_i] = c_x$ for some constant $c_x$ independent from the site $i$. The same holds for the cases of $P = \sum_iY_i, \sum_iZ_i$, leading to 
    \begin{equation}
        \Lambda_{{\rm uent}}(\ket{\psi}) \le \frac{1}{n^2}\sum_{P_i = X_i, Y_i, Z_i}n\sum_i^{n} \Tr[\dya{\psi}P_i]^2 = \frac{1}{n}\sum_{P\in\PC_{ent}} \Tr[\dya{\psi}P]^2 = \Lambda_{{\rm ent}}(\ket{\psi})\,.
    \end{equation}
\end{proof}

Notice that the equality holds if and only if each of the expectation values $\{\Tr[\dya{\psi}X_i], \Tr[\dya{\psi}Y_i], \Tr[\dya{\psi}Z_i]\}$ is constant and independent of the qubit $i$ on which the given Pauli acts. This occurs, for instance, if $\ket{\psi}$ is $S_n$-equivariant, if $\ket{\psi}$ is a tensor product uniform state, or if $\ket{\psi}$ has zero overlap with any local Pauli operator.

\section{Generating free operations and states}

In this section we briefly describe how the datasets for the numerics were created. For convenience, we recall that our goal is to uniformly sample unitaries from some group of free operations $\mathbb{G}$, and then apply them to the reference state.

\begin{itemize}
\item \textbf{Entanglement, non-uniform entanglement and spin coherence. }
For the QRTs of entanglement, non-uniform entanglement and spin coherence, we need to uniformly sample unitaries from $SU(2)$. Here, we use standard Euler angle $(\alpha,\beta,\eta)$ parametrization (e.g., $e^{-i\alpha S_z}e^{-i\beta S_y}e^{-i\eta S_z} $) with $\alpha,\eta\in[0,2\pi]$, and $\beta\in[0,\pi]$ and sample according to the Haar measure $d\mu=\sin(\b)d\alpha d\beta d\eta$. 
\item  \textbf{Free-fermionic operations.} To make a free Fermionic operator, we first take a matrix $A$, randomly sampled from $\mathbb{SO}(4)$, then we define  $Q = \log(A)$ where $\log$ is the matrix logarithm. From here, we can construct a $2^n\times 2^n$ unitary of the form 
\begin{equation}
    U(A) = \exp(-\frac{1}{2}\sum_{\mu, \nu = 1}^{2n} Q_{\mu, \nu} \gamma_\mu \gamma_\nu)\,,
\end{equation}
where $\gamma_\mu$ represents a Majorana operator defined in the main text. Importantly, one can also directly find circuits which implement these random unitaries via the results in~\cite{braccia2025optimal}.
\item \textbf{Haar random unitaries from $\text{U}(2^n)$ and $\text{O}(2^n)$ }. Given that we work with small problem sizes, we directly generate $2^n\times 2^n$ matrices with SciPy~\cite{virtanen2020scipy}.
\item \textbf{Random Clifford unitaries.} Here we use the built-in Qiskit~\cite{qiskit} package to sample random unitaries from the $n$-qubit Clifford group.
\item \textbf{$S_n$-equivariant circuits}. We generate unitaries by randomly sampling parameters in the gates of the circuits presented in~\cite{schatzki2022theoretical}. In particular, we pick a depth which scales as $\OC(n^3)$. While it is known that such circuit constructions are not universal~\cite{kazi2023universality} given that they are composed of local gates, this is not an issue in our application as we are skipping global unimportant phases. 

\end{itemize}

\end{document}